\keywords{
quantitative information flow,
compositionality,
information leakage,
min-entropy leakage,
$g$-leakage}
\newif\ifAppendixOn\AppendixOntrue
\newcommand{\set}[1]{\{\, #1 \,\}}
\newcommand{\X}[0]{{\mathcal X}}
\newcommand{\Y}[0]{{\mathcal Y}}
\newcommand{\Z}[0]{{\mathcal Z}}
\newcommand{\W}[0]{{\mathcal W}}
\renewcommand{\S}[0]{{\mathcal S}}
\newcommand{\I}[0]{{\mathcal I}}
\newcommand{\compd}{\times}
\newcommand{\compn}{\mathbin{\parallel}}
\newcommand{\pid}{\pi^{\dagger}}
\newcommand{\gd}{g^{\dagger}}
\newcommand{\Mmin}[1]{M^{\min}_{#1}}
\newcommand{\Mmax}[1]{M^{\max}_{#1}}
\newcommand{\Mminpi}{M^{\min}_{g,\pi}}
\newcommand{\Mmaxpi}{M^{\max}_{g,\pi}}
\newcommand{\MminInfpi}{M^{\min}_{\infty,\pi}}
\newcommand{\MmaxInfpi}{M^{\max}_{\infty,\pi}}
\newcommand{\Hinf}{H_{\infty}}
\newcommand{\Iinf}{I_{\infty}}
\newcommand{\HMin}{H^{\min}}
\newcommand{\Hg}{H_{g}}
\newcommand{\Ig}{I_{g}}
\newcommand{\IgA}{I_{g_1}}
\newcommand{\IgB}{I_{g_2}}
\newcommand{\HgMin}{H_g^{\min}}
\newcommand{\Cg}{C_{\!g}}
\newcommand{\Cinf}{C_{\infty}}
\newcommand{\xor}{\oplus}
\newcommand{\leakiEst}{\textsf{leakiEst}}
\begin{document}

\title[On the Compositionality of Quantitative Information Flow]{On the
  Compositionality of \\ Quantitative Information Flow}

\author{Yusuke Kawamoto}	%required
\address{AIST, Japan}	%required
%\email{author1@email1}  %optional
%\thanks{thanks 1, optional.}	%optional

\author{Konstantinos Chatzikokolakis}	%optional
\address{CNRS and LIX, \'{E}cole Polytechnique, France}
%\address{address2; addresses should be duplicated when authors share an affiliation}	%optional
%\email{author2@email2; ditto for email addresses}  %optional
%\thanks{thanks 2, optional.}	%optional

\author{Catuscia Palamidessi}	%optional
\address{INRIA Saclay and LIX, \'{E}cole Polytechnique, France}	%optional
%\email{author3@email3}  %optional
%\thanks{thanks 3, optional.}	%optional

%% etc.

%% required for running head on odd and even pages, use suitable
%% abbreviations in case of long titles and many authors:

%% mandatory lists of keywords and classifications:
%\subjclass{D.4.6 Security and Protection}
\titlecomment{A preliminary version of this work, without proofs, appeared in~\cite{KawamotoCP14qest}.}
%\titlecomment{OPTIONAL comment concerning the title, \eg, if a variant or an extended abstract of the paper has appeared elsewehere}
%%%%%%%%%%%%%%%%%%%%%%%%%%%%%%%%%%%%%

%% the abstract has to PRECEED the command \maketitle:
%% be sure not to issue the \maketitle command twice!

\begin{abstract}
Information flow is the branch of security that studies the leakage of information due to correlation between secrets and observables.
Since in general such correlation cannot be avoided completely, it is important to quantify the leakage.
The most followed approaches to defining appropriate measures are those based on information theory.
In particular, one of the most successful approaches is the recently proposed $g$-leakage framework, which encompasses most of the information-theoretic ones.
A problem with $g$-leakage, however, is that it is
defined in terms of a minimization problem, which, in the case of large systems, can be computationally rather heavy. 
In this paper we study the case in which the channel associated to the system can be decomposed into simpler channels, 
which typically happens when the  observables consist of multiple components. 
Our main contribution is the derivation of bounds on the (multiplicative version of) $g$-leakage of the whole system in terms of the $g$-leakages of its components. 
We also consider  the particular cases of min-entropy leakage and of parallel channels, generalizing and systematizing results from the literature. 
We demonstrate the effectiveness of our method and evaluate the precision of our bounds using examples.
\end{abstract}

\maketitle

%% main text
%%%%%%%%%%%%%%%%%%%%%%%%%%%%%%%%%%%%%
% Section for Introduction
%\input{introduction}
%
\section{Introduction}

The problem of preventing confidential information from being leaked is a fundamental concern in 
the modern society, where the pervasive use of automatized devices makes it hard to predict and control the {\em information flow}.
For instance, the output of a program might reveal information about a sensitive variable, or side channel information might reveal the secret key stored in a smart card.
While early research focused on trying to  achieve \emph{non-interference} (i.e., no leakage), it is nowadays recognized that, in practical situations, some amount of leakage  is unavoidable. 
Therefore an active area of research on information flow is dedicated to the development of theories to 
\emph{quantify}  the amount of leakage, and of methods to  minimize it. See, for instance, \cite{Clark:01:QAPL,Malacaria:07:POPL,Koepf:07:CCS,Chatzikokolakis:08:IC,Chatzikokolakis:08:JCS,Boreale09:iandc,Smith:09:FOSSACS,Boreale:11:FOSSACS,Alvim0KP17corr}.

Among these theories, min-entropy leakage \cite{Smith:09:FOSSACS,Braun:09:MFPS} 
has become quite popular, partly due to its clear operational interpretation in terms of  one-try attacks.
This  quite basic  setting has been recently extended 
to the \emph{$g$-leakage} framework \cite{Alvim:12:CSF}. The main novelty consists in the introduction of  gain functions, 
that permit  to quantify the vulnerability of a secret in terms of the gain of the adversary, thus   
allowing to model a wide variety of operational scenarios and different adversaries
\footnote{In fact, all convex functions can be expressed in terms of $g$-vulnerability~\cite{AlvimCMMPS16csf}.}.
For instance, min-entropy, Shannon entropy and guessing entropy can all be seen as instances of $g$-vulnerability~\cite{Alvim:12:CSF,AlvimCMMPS14csf}.

While $g$-leakage  is appealing for its generality and flexible operational interpretation, 
its computation is not trivial. Like most of the quantitative approaches, its definition is based on the probabilistic correlation between the secrets and the observables. 
Such correlation is usually expressed in terms of an \emph{information-theoretic channel}, 
where the secrets and the observables respectively constitute the input and the output. 
Note that the channel abstracts from the specific model describing the system.
A program is usually expressed in a programming language while a protocol might be better described by an automaton.
Still, in both cases, the leakage only depends on the conditional probabilities of each output given input, i.e., on the \emph{channel matrix}.
The computation of the channel matrix from the system can be performed via model checking 
(see, e.g., \cite{Andres:10:TACAS}), if a system is completely specified and it is not too complicated. 
Once the matrix is known, the computation of the $g$-leakage involves solving an optimization problem. 
In fact, computing leakage involves finding the ``guess'' that optimizes the expected gain of the adversary~\cite{Alvim:12:CSF}.
This can be quite costly when the matrix is large. 

Worse yet, in many cases it is not possible to compute the channel matrix exactly, for instance  because the system may be too complicated, 
or because the conditional probabilities are partially determined by unknown factors. Fortunately,  there are statistical methods that allow to approximate 
the channel matrix and the leakage~\cite{chatzikokolakis10,ChothiaKawamoto2013,KawamotoBL16,BiondiKLT17}.
There is also a tool, \leakiEst{}~\cite{chothia2013tool}, which allows to  estimate min-entropy leakage from a set of trial runs~\cite{ChothiaKN14:esorics,ChothiaKawamoto2014}.
However, if the cardinality of secrets and observables is large, such estimation becomes computationally heavy,  due to the huge amount of trial runs that need to be performed. 

In this paper we determine bounds on $g$-leakages in compositional terms,
focusing on the multiplicative version of $g$-leakage. 
More precisely, we consider the parallel composition of channels, defined on the cross-products of the inputs and of the outputs.
Then, we  derive lower and upper bounds on the $g$-leakage of the whole channel in terms of the $g$-leakages of the components. 
Since the size of the whole channel is the product of the sizes of the components, there is an evident  benefit in terms of computational cost. 
Table~\ref{table:channel_types} illustrates the situation for the various kinds of channel matrices (small, large white-box, large black-box): the first three rows 
characterize the situation, and the last three express the  feasibility of computing the leakage of the components, the matrix of the whole system, and the leakage of the whole system, respectively. This computation is meant to be exact in the first two columns, and statistical  in the last one.
The number of components is assumed to be huge. 
Note that the size of the whole channel matrix increases exponentially with the number of the components. 

\begin{table}[t]
\newcommand{\bhline}[1]{\noalign{\hrule height #1}}
\begin{center}
  \renewcommand{\arraystretch}{1.2}
 \begin{tabular}{|l|c|c|c|}
 \hline
 Kinds of systems			& small & large white-box & large black-box \\ %\bhline{0.3mm}
 Input distribution $\pi$		& known	 	  & known		& known \\ %\hline
 Component channels $C_i$	& known		  & known		&  {\small approx.\,statistically}  \\ \hline
 Leakage of $C_i$ with $\pi_i$ & computable	  & computable	& {\small approx.\,statistically} \\ %\hline
 Composed channel $C$		& computable	  &unfeasible	&unfeasible \\ %\hline
 Leakage of $C$ with $\pi$	& computable	  &{unfeasible} &unfeasible \\ \hline
 \end{tabular}
 \caption{Computation of information leakage measures in various scenarios.}%\vspace{-7.0ex}
 \label{table:channel_types}
\end{center}
\end{table}

We  evaluate our compositionality results on randomly generated channels and 
on Crowds, a protocol for anonymous communication, run on top of a
mobile ad-hoc network (MANET). In such a network users are mobile, can
communicate only with nearby nodes, and the network topology changes
frequently. As a result, Crowds routes can become invalid forcing the user to
re-execute the protocol to establish a new route. These protocol repetitions,
modeled by the composition of the corresponding channels, lead to more
information being leaked. Although the composed channel quickly becomes too big
to compute the leakage directly, our compositionality results allow to obtain 
bounds on it. 

The results of this paper are about the multiplicative version of $g$-leakage, which is the original one considered in~\cite{Alvim:12:CSF}.
Some of the auxiliary results are about the posterior $g$-vulnerability.
There is also an additive version of $g$-leakage, investigated in~\cite{AlvimCMMPS14csf}.
At a preliminary analysis, the adaption of our results to the additive one is not trivial.
We leave this for future work.

%
%\subsection{Structure of our Paper}
%
The rest of the paper is organized as follows:
Section~\ref{sec:preliminaries} introduces basic notions of   information theory, defines compositions of channels, and presents information leakage measures.
Section~\ref{sec:compo-g-leak} presents lower\slash upper bounds for $g$-leakages in compositional terms.
Section~\ref{sec:compo-MEL} instantiates these results to  min-entropy leakages.
Section~\ref{sec:input-approx} introduces a transformation technique which improves the precision of our method.
Section~\ref{sec:channel-approx} presents a way of computing leakage bounds using the cascade composition refinement.
Section~\ref{sec:compo-MI} compares our results on $g$-leakage with the compositionality results on Shannon mutual information.
Section~\ref{sec:experiment} evaluates our results by experiments.
Section~\ref{sec:related} discusses related work.
%The final section concludes the paper.

A preliminary version of this paper, without proofs, appeared in~\cite{KawamotoCP14qest}.
This paper improves the bounds on $g$-leakage.
The new, tighter bounds are formally stated 
in Proposition~\ref{lem:disjoint-dep-g-prior-upper}, Lemma~\ref{lem:disjoint-dep-g-vul}, Theorem~\ref{thm:disjoint-dep-g-leak} and Proposition~\ref{thm:disjoint-dep-g-leak-k-channels}.
Furthermore, it extends some results to the case of $n$-ary composition
(Propositions~\ref{thm:disjoint-dep-g-leak-k-channels},~\ref{thm:g-leak-n-channels},~\ref{cor:comp-approx-MEL-joint-input-n-channels} and~\ref{cor:comp-approx-MEL-shared-input-n-channels})
and presents some new properties (Propositions~\ref{prop:Mgmin_positive},~\ref{lem:disjoint-dep-g-prior},~\ref{lem:disjoint-dep-g-prior-upper}).
Finally, Section~\ref{sec:channel-approx},~\ref{sec:compo-MI} and~\ref{sec:related} are new.

%%%%%%%%%%%%%%%%%%%%%%%%%%%%%%%%%%%%%
% Section for Preliminaries
%\input{preliminaries}
%
\section{Preliminaries}
\label{sec:preliminaries}
In this section we recall the notion of information-theoretic channels, define channel compositions/decomposition, and recall some information leakage measures.

\subsection{Channels}
\label{subsec:channels}
A \emph{discrete channel} is a triple $(\X, \Y, C)$ consisting of a finite set $\X$ of secret input values, a finite set $\Y$ of observable output values, 
and an $|\X| \times |\Y|$ matrix $C$, called  \emph{channel matrix}, where each element $C[x, y]$ represents the conditional probability $p(y | x)$ of obtaining the output $y \in \Y$ given the input $x \in \X$.
The input values have a probability distribution, called   \emph{input distribution} or   \emph{prior}. 
Given a prior $\pi$ on $\X$, the joint distribution for $X$ and $Y$ is defined by $p(x, y) = \pi[x] C[x, y]$. The output distribution is given by $p(y) = \displaystyle\sum_{x \in \X} \pi[x] C[x, y]$.

\subsection{Composition of Channels}
\label{subsec:compose}

%
%
%\begin{wrapfigure}[7]{h}{1.0\linewidth}
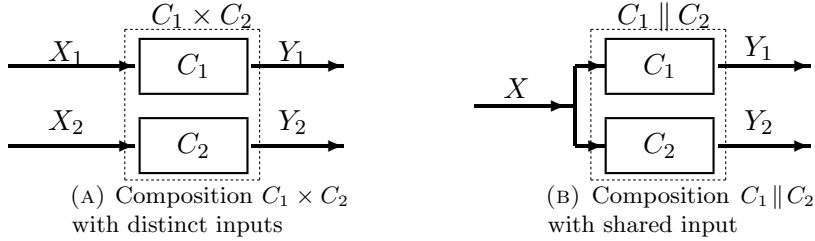
\begin{figure*}[t]\label{fig:compositions}
%\begin{tabular}{cc}
%\begin{minipage}[b]{0.46\hsize}
\hspace{-1.0ex}%
\subfloat[][Composition $C_1 \compd C_2$ with distinct inputs]{
%\begin{center}
\begin{picture}(169, 43)
 \put( 62, 46){$C_1 \compd C_2$}
 \thicklines \thicklines
 \put( 58, 20){\framebox(40,20){$C_1$}}
 \put( 58,-10){\framebox(40,20){$C_2$}}

 \linethickness{1.4pt}
 \put(   8,  30){\vector(  1,  0){48}}
 \put(   8,  00){\vector(  1,  0){48}}
 \put( 100,  30){\vector(  1,  0){35}}
 \put( 100,  00){\vector(  1,  0){35}}
 \thinlines \thinlines
 \put(  23,  32){$X_1$}
 \put(  23,    6){$X_2$}
 \put( 110,  32){$Y_1$}
 \put( 110,    6){$Y_2$}
 \put(52,-13){\dashbox{1.0}(51,57){}}
\end{picture}
\label{fig:composition-separated}
%\end{center}
%\end{minipage}
}
~~~
%\begin{minipage}[b]{0.46\hsize}
\subfloat[][Composition $C_1\!\compn\!C_2$ with shared input]{% $X$
%\begin{center}
\begin{picture}(169, 43)
 \put( 62,46){$C_1 \compn C_2$}
 \thicklines \thicklines
 \put( 58, 20){\framebox(40,20){$C_1$}}
 \put( 58,-10){\framebox(40,20){$C_2$}}

 \linethickness{1.4pt}
 \put(   8,  15){\line(  1,  0){38}}
 \put(   8,  15){\vector(  1,  0){35}}
 \put(  46,    0){\line(  0,  1){30}}
 \put(  46,  30){\vector(  1,  0){11}}
 \put(  46,    0){\vector(  1,  0){11}}
 \put( 100,  30){\vector(  1,  0){35}}
 \put( 100,    0){\vector(  1,  0){35}}
 \thinlines \thinlines
 \put(  19,  18){$X$}
 \put( 110,  34){$Y_1$}
 \put( 110,    6){$Y_2$}
 \put(52,-13){\dashbox{1.0}(51,57){}}
\end{picture}
\label{fig:composition-shared}
%\end{center}
%\end{minipage}
%\end{tabular}
}
%~\vspace{-1.0ex}~
\caption{The two kinds of parallel compositions  on channels, $\compd$ and $\compn$.}
\label{fig:two-composition}
%\vspace{-2ex}
\end{figure*}
%\end{wrapfigure}

We now introduce the two kinds of compositions which will be considered in the paper. 
Figure~\ref{fig:two-composition} illustrates their definitions:
\begin{itemize}
\item
In the \emph{parallel composition $C_1\compd C_2$ with distinct inputs}, two channels $C_1$ and $C_2$ separately receive inputs drawn from two distributions $X_1$ and $X_2$, respectively (Figure~\ref{fig:composition-separated}).
\item
In the \emph{parallel composition $C_1\compn C_2$ with shared inputs}, $C_1$ and $C_2$ receive an identical input value drawn from a distribution $X$ (Figure~\ref{fig:composition-shared}).
\end{itemize}

In these compositions we assume that the channels are \emph{independent}, in the sense that, given the respective inputs,  the outcome of one channel does not influence the outcome of the other.

These two kinds of compositions are used to represent different situations.
For example, in the Crowds protocol (explained in Section~\ref{subsec:crowds}), an execution of the protocol is modeled as a channel $C$ from secret senders $\X$ to observable outputs $\Y$.
Then repeated executions of the protocol with (possibly) different senders are described by the parallel composition $C \compd C$, while repeated executions with the same sender are described by the parallel composition with shared input $C \compn C$.

We start with defining parallel composition with separate inputs $\compd$ (\emph{parallel composition} for short).
Note that the term ``parallel'' here does not carry a temporal meaning:
the actual execution of the corresponding systems could take place simultaneously or in any order.
\begin{defi}[Parallel composition (with distinct inputs)]\label{def:disjoint-compo} \rm
Given two discrete channels $(\X_1, \Y_1, C_1)$ and $(\X_2, \Y_2, C_2)$, their \emph{parallel composition (with distinct inputs)} is the discrete channel $(\X_1 \times \X_2, \Y_1 \times \Y_2, C_1\compd C_2)$ where $C_1\compd C_2$ is the $(|\X_1|\cdot |\X_2|) \times (|\Y_1|\cdot |\Y_2|)$ matrix such that for each $x_1 \in \X_1$, $x_2 \in \X_2$, $y_1 \in \Y_1$ and $y_2 \in \Y_2$,
\[
(C_1\compd C_2)[(x_1, x_2), (y_1, y_2)] = C_1[x_1, y_1] \cdot C_2[x_2, y_2].
\]
\end{defi}

The condition $(C_1\compd C_2)[(x_1, x_2), (y_1, y_2)] = C_1[x_1, y_1] \cdot C_2[x_2, y_2]$
is what we mean  by ``the channels are independent''. 
Note that, although the  output distributions $Y_1$ and $Y_2$ may be correlated, they are \emph{conditionally independent}, in the sense that
$p(y_1, y_2 | x_1,x_2) = p(y_1 | x_1) p(y_2 | x_2)$.

Next, we define  the parallel composition with shared input  $\compn$.
\begin{defi}[Parallel composition with shared input] \label{def:parallel-compo} \rm 
Given two discrete channels $(\X, \Y_1, C_1)$ and $(\X, \Y_2, C_2)$, their \emph{parallel composition with shared input} is the discrete channel $(\X, \Y_1 \times \Y_2, C_1\compn C_2)$ where $C_1\compn C_2$ is the $|\X| \times (|\Y_1|\cdot |\Y_2|)$ matrix such that for each $x \in \X$, $y_1 \in \Y_1$ and $y_2 \in \Y_2$,
\[
(C_1\compn C_2)[x, (y_1, y_2)] = C_1[x, y_1] \cdot C_2[x, y_2].
\]
\end{defi}
Note that  $\compn$  is a special case of $\compd$. In fact,
$(C_1\compn C_2)[x, (y_1, y_2)] = C_1[x, y_1] \cdot C_2[x, y_2] = (C_1\compd C_2)[(x,x), (y_1, y_2)]$.

\subsection{Decomposition of Channels}
\label{subsec:decompose}
Sometimes it can be useful to decompose a large system for analysis purposes: the components may be easier to analyze than the whole system. 
Here, given a  channel that outputs pairs in $\Y_1\times\Y_2$,   we consider the sub-channels that output on $\Y_1$ and $\Y_2$
separately. Depending on whether we wish to perform an analogous separation also on the inputs or not, we obtain two kinds of decomposition. 
\begin{defi}[Decomposition] \rm\label{def:de-compo-o}
The \emph{decomposition} of a discrete channel $(\X, \Y_1 \times \Y_2, C)$ with respect to $\Y_1$ and $\Y_2$ 
is a pair of channels  $(\X, \Y_1, C|_{\Y_1})$ and $(\X, \Y_2, C|_{\Y_2})$ where
\begin{align*}
C|_{\Y_1}[x, y_1] &= \sum_{y_2 \in \Y_2} C[x, (y_1, y_2)] \\
C|_{\Y_2}[x, y_2] &= \sum_{y_1 \in \Y_1} C[x, (y_1, y_2)].
\end{align*}
\end{defi}
For a channel $(\X, \Y_1 \times \Y_2, C_1 \compn C_2)$ with shared input, we have $(C_1 \compn C_2)|_{\Y_i} = C_i$ for each $i\in\set{1,2}$.

\subsection{Quantitative Information Leakage Measures}
\label{subsec:leak-measures}
The \emph{information leakage} of a channel is measured as the difference between the \emph{prior uncertainty} about the secret value of the channel's input and the \emph{posterior uncertainty} of the input  after observing the channel's output.
The uncertainty is defined in terms of an attacker's operational scenario. 
In this paper we will focus on  \emph{min-entropy leakage}, in which such measure, min-entropy,  represents the difficulty for an attacker to guess the secret inputs in a single attempt.
\begin{defi}[Vulnerability] \label{def:vulnerability} \rm
Given a prior $\pi$ on $\X$ and a channel $(\X, \Y, C)$, the \emph{prior vulnerability} and the 
 the \emph{posterior vulnerability}  are defined respectively as 
\begin{align*}
V(\pi) &= \max_{x \in \X} \pi[x] \\
V(\pi, C) &= \sum_{y \in \Y} \max_{x \in \X} \pi[x] C[x, y]
\texttt{.}
\end{align*}
\end{defi}
\begin{defi}[Min-entropy leakage] \label{def:min-entropy-leak} \rm
Given a prior $\pi$ on $\X$ and a channel $(\X, \Y, C)$, the \emph{min-entropy} $\Hinf(\pi)$ and \emph{conditional min-entropy} $\Hinf(\pi, C)$ are defined by:
\begin{align*}
\Hinf(\pi) &= -\log V(\pi) \\
\Hinf(\pi, C) &= -\log V(\pi, C)
\texttt{.}
\end{align*}
The \emph{min-entropy leakage} $\Iinf(\pi, C)$ and \emph{min-capacity} $\Cinf(C)$ are defined by:
\begin{align*}
\Iinf(\pi, C) &= \Hinf(\pi) - \Hinf(\pi, C) \\
\Cinf(C) &= \displaystyle \sup_{\pi'} \Iinf(\pi', C).
\end{align*}
\end{defi}

Min-entropy leakage has been generalized by \emph{$g$-leakage}~\cite{Alvim:12:CSF}, which allows a wide variety of operational scenarios of attacks.
These are modeled using a set $\W$ of possible \emph{guesses}, and a \emph{gain function}
$g: \W\times\X  \rightarrow [0, 1] $ such that $g(w, x)$ represents the gain of the attacker when the secret value is $x$ and he makes a guess $w$ on $x$.

Then \emph{$g$-vulnerability} is defined as the maximum expected gain of the attacker:
\begin{defi}[$g$-vulnerability] \label{def:g-vulnerability} \rm
Given a prior $\pi$ on $\X$ and a channel $(\X, \Y, C)$, the \emph{prior $g$-vulnerability} and the \emph{posterior $g$-vulnerability} are defined respectively by
\begin{align*}
V_{g}(\pi) &= \max_{w \in \W} \sum_{x \in \X} \pi[x] g(w, x) \\
V_{g}(\pi, C) &= \sum_{y \in \Y} \max_{w \in \W} \sum_{x \in \X} \pi[x] C[x, y] g(w, x)
\texttt{.}
\end{align*}
\end{defi}

We now extend Definition~\ref{def:min-entropy-leak} to the $g$-setting:
\begin{defi}[$g$-leakage, $g$-capacity] \label{def:g-leak} \rm
Given a prior $\pi$ on $\X$ and a channel $(\X, \Y, C)$, the \emph{$g$-entropy} $\Hg(\pi)$, \emph{conditional $g$-entropy}  $\Hg(\pi, C)$ and \emph{$g$-leakage} $\Ig(\pi, C)$ are defined by:
%\\[1.5ex]
%
\begin{align*}
\Hg{\small (\pi)} &= -\log V_{\!g}{\small (\pi)}, \\
\Hg{\small (\pi, C)} &= -\log V_{\!g}{\small (\pi, C)}, \\
\Ig{\small (\pi, C)} &= \Hg{\small (\pi)}\,\!-\!\,\Hg{\small (\pi, C)}
\texttt{.}
\end{align*}
The \emph{$g$-capacity} $\Cg(C)$ is defined as the maximum $g$-leakage over all priors.
\begin{align*}
\Cg{\small (C)} &= \displaystyle\sup_{\pi'}\Ig{\small (\pi', C)}
\texttt{.}
\end{align*}
\end{defi}

Here are a few examples of gain functions introduced in~\cite{Alvim:12:CSF}.
\begin{exa}[Identity gain function]\label{eg:id-gain}
The \emph{identity gain function} represents the situation in which an attacker's guess $w$ is worthy only when it is the correct secret value $x$.
Formally, it is defined as the following function $g_{id}: \X\times\X\rightarrow[0,1]$:
\[
g_{id}(w, x) = \left\{
\begin{array}{l}
1 ~~~\mbox{ if $x = w$}\\
0 ~~~\mbox{ otherwise.}\\
\end{array}
\right.
\]
\noindent
The min-entropy notions are particular cases of the  $g$-entropy ones, obtained by instantiating $g$ to $g_{id}$; i.e., $\Hinf = H_{g_{id}}$,  $\Iinf  = I_{g_{id}} $ and $\Cinf = C_{g_{id}}$.
\end{exa}

\begin{exa}[Binary gain functions]\label{eg:binary-gain}
The gain functions returning boolean values are called \emph{binary}.
They are parameterized with a collection $\W \subseteq 2^\X$ of guess sets such that the attacker gains when the correct secret belongs to his guessed set $w\in\W$.
Intuitively, they represent the gain of the adversary by guessing a secret partially.
Formally, it is defined as the following function $g_{\W}: \W\times\X\rightarrow[0,1]$:
\[
g_{\W}(w, x) = \left\{
\begin{array}{l}
1 ~~~\mbox{ if $x \in w$}\\
0 ~~~\mbox{ otherwise.}\\
\end{array}
\right.
\]
The identity gain function $g_{id}$ in Example~\ref{eg:id-gain} is a particular case of a binary gain function.
\end{exa}

\begin{exa}[$k$-tries gain function]\label{eg:k-ties-gain}
For any positive integer $k$, the \emph{$k$-tries gain function} models a scenario in which an attacker is allowed to make guesses on the secret $k$ times.
Formally, the set of guesses is defined by:
\[
\W_k = \{ w \in 2^\X \colon \#w \le k \}
\texttt{.}
\]
Then the $k$-tries gain function is defined as the binary gain function $g_{\W_k}$.
\end{exa}

%
%%%%%%%%%%%%%%%%%%%%%%%%%%%%%%%%%%%%%
% Section for Compositionality Results on $g$-Leakage
%\input{g-leakage}
%
\section{Compositionality Results on $g$-Leakage}
\label{sec:compo-g-leak}
In this section we present compositionality results for $g$-leakage.
To obtain them we introduce the notions of joint gain functions for composed channels (Section~\ref{subsec:joint-gain}) and of jointly supported distributions (Section~\ref{subsec:jointly-supported}).
Then we show compositionality theorems for $g$-leakage that provide a certain lower and an upper bound on the $g$-leakage of a composed system (Sections~\ref{subsec:results-g-leakage} and~\ref{subsec:results-g-leakage-shared-input}).
Finally, we extend these results from binary to $n$-ary channels (Section~\ref{subsec:extension-to-n}).
The summary of these results are shown in Table~\ref{table:compositionality:g-leak}.
\begin{table}[t]
\newcommand{\bhline}[1]{\noalign{\hrule height #1}}
\begin{center}
\begin{small}
  \def\arraystretch{1.2}
 \begin{tabular}{|l|l|c|c|}
 \hline
 Composition & Measure					& Lower bound		& Upper bound \\[0.4ex] \bhline{0.3mm}
 & prior $g$-entropy
 	& Proposition~\ref{lem:disjoint-dep-g-prior}& Proposition~\ref{lem:disjoint-dep-g-prior-upper}
 \\[0.4ex] \cline{2-4}
 ~Composition	& posterior $g$-entropy
 	& Lemma~\ref{lem:disjoint-dep-g-vul} (1)	& Lemma~\ref{lem:disjoint-dep-g-vul} (2) \\[0.4ex]
 ~with distinct &\hspace{6ex}\scriptsize{with independent priors}& \multicolumn{2}{c|}{Corollary~\ref{cor:disjoint-g-vul}} \\[0.4ex] \cline{2-4}
 ~input $\compd$& $g$-leakage
 	& Theorem~\ref{thm:disjoint-dep-g-leak} (1) & Theorem~\ref{thm:disjoint-dep-g-leak} (2)  \\[0.4ex]
 &\hspace{6ex}\scriptsize{with independent priors}
 	& \multicolumn{2}{c|}{Corollary~\ref{cor:disjoint-g-leak}} \\[0.4ex]
 & \hspace{11ex}\scriptsize{$n$-ary composition}
 	& Proposition~\ref{thm:disjoint-dep-g-leak-k-channels}& Proposition~\ref{thm:disjoint-dep-g-leak-k-channels} \\[0.4ex] \hline
 ~Composition  & posterior $g$-entropy
 	& Theorem~\ref{thm:g-vul}			& {---} \\[0.4ex]
 ~with shared & $g$-leakage
 	& {---}			& Theorem~\ref{thm:g-leak}  \\[0.4ex]
 ~input $\compn$& \hspace{11ex}\scriptsize{$n$-ary composition}
	 & {---}	& Proposition~\ref{thm:g-leak-n-channels} \\[0.4ex] \hline
 \end{tabular}
 \caption{Compositionality results for $g$-leakage.}%\vspace{-7.0ex}
 \label{table:compositionality:g-leak}
\end{small}
\end{center}
\end{table}
\subsection{Joint Gain Functions for Composed Channels}
\label{subsec:joint-gain}

In this section we introduce the notion of \emph{joint gain functions}, which will be used to formalize the $g$-leakages of composed channels.

A \emph{joint gain function} $g$ is defined as a function from $(\W_1 \times \W_2) \times (\X_1 \times \X_2)$ to $[0 ,1]$.
When a joint secret input is $(x_1, x_2) \in \X_1 \times \X_2$ and the attacker's joint guess is $(w_1, w_2) \in \W_1 \times \W_2$,\, 
the attacker's joint gain from the guesses is represented by $g((w_1, w_2), (x_1, x_2))$.

For the sake of generality, we do not assume any relation between  $g$ and  the two gain functions $g_1$ and $g_2$, except for the following:
a joint guess is worthless iff at least one of the single guesses is worthless. 
Formally:  $g((w_1, w_2), (x_1, x_2)) = 0$ iff $g_1(w_1, x_1) g_2(w_2, x_2) = 0$.~%
\footnote{This property holds, for example, when $g, g_1, g_2$ are the identity gain functions.}

We say that $g_1$ and $g_2$ are \emph{independent} if $g((w_1, w_2), (x_1, x_2)) = g_1(w_1, x_1) \allowbreak g_2(w_2, x_2)$ for all $x_1,x_2,w_1$ and $w_2$.

\subsection{Jointly Supported Distributions}
\label{subsec:jointly-supported}
In this section we define some notions on probability distributions to obtain compositionality results on $g$-leakage.

Given a joint distribution $\pi$ on $\X_1 \times \X_2$,  the   \emph{marginal distribution} $\pi_1$ on $\X_1$ is defined as
$\pi_1[x_1] = \sum_{x_2 \in \X_2} \pi[x_1, x_2]$ for all $x_1 \in \X_1$. 
The \emph{marginal distribution} $\pi_2$ on $\X_2$  is defined analogously.
Note that $\pi_1[x_1] \cdot \pi_2[x_2] = 0$ implies $\pi[x_1, x_2] = 0$ . The converse does not hold in general, but occasionally 
we will assume it:
\begin{defi} \label{def:jointly-supported} \rm
A joint distribution $\pi$ on $\X_1 \times \X_2$ is \emph{jointly supported} if, for all $x_1 \in \X_1$ and $x_2 \in \X_2$,\, $\pi_1[x_1] \cdot \pi_2[x_2] \neq 0$ implies $\pi[x_1, x_2] \neq 0$.
\end{defi}

Essentially, this condition rules out all the distributions in which there exist two events that happen with a non-zero probability, but that never happen together, i.e., events that are incompatible with each other.

For example, $\pi$ is jointly supported when $\pi_1$ and $\pi_2$  are independent; i.e., $\pi[x_1, x_2] = \pi_1[x_1] \cdot \pi_2[x_2]$ for all $x_1 \in \X_1$ and $x_2 \in \X_2$.
Hereafter we denote such $\pi$ by $\pi_1 \times \pi_2$.

\subsection{The $g$-Leakage of Parallel Composition with Distinct Input}
\label{subsec:results-g-leakage}
In this section we present a lower and an upper bound for the $g$-leakage of  $C_1 \compd C_2$ in terms of the $g$-leakages of $C_1$ and $C_2$.
We first introduce some notations to represent the bounds (Section~\ref{subsubsec:measures}). 
Then we present compositionality results on the prior $g$-entropy $\Hg(\pi)$ (Section~\ref{subsubsec:g-prior-of-paralell}), the posterior $g$-entropy $\Hg(\pi, C_1 \compd C_2)$ (Section~\ref{subsubsec:g-post-of-paralell}) and the $g$-leakage $\Ig(\pi, C_1 \compd C_2)$ (Section~\ref{subsubsec:g-leak-of-paralell}).

\subsubsection{Measures $\Mminpi$ and $\Mmaxpi$}
\label{subsubsec:measures}
We introduce two measures $\Mminpi$ and $\Mmaxpi$ that capture certain relationships between the joint prior/gain function and their marginals, which will be used to provide the lower and upper bounds on the $g$-leakage of composed channels.
These measures are defined using the following set:
\begin{defi} \label{def:full-support-set} \rm
Let $\pi$ be a prior on $\X_1 \times \X_2$, and $g:(\W_1 \times \W_2) \times (\X_1 \times \X_2)\rightarrow [0 ,1]$ be a joint gain function. For  $w_1 \in \W_1$ and $w_2 \in \W_2$, their   \emph{support with respect to} $g$ is defined as:
\begin{align*}
\S_{w_1, w_2} =
\left\{ (x_1, x_2) \in \X_1 \times \X_2 \,|\,
%\\\hspace{9.0ex}~
 \pi[x_1, x_2] \cdot g((w_1, w_2), (x_1, x_2)) \neq 0
\right\}.
\end{align*}
\end{defi}
The two measures $\Mminpi$ and $\Mmaxpi$ are defined as follows.
\begin{defi} \label{def:input-ratio} \rm
Let $g$ be a joint gain function from $(\W_1 \times \W_2) \times (\X_1 \times \X_2)$ to $[0 ,1]$. Let $g_1, g_2$ be two gain functions from $\W_1  \times \X_1$ to $[0 ,1]$ and from $\W_2 \times \X_2$ to $[0 ,1]$ respectively.
Given a prior $\pi$ on $\X_1 \times \X_2$,
we define $\Mminpi$ and $\Mmaxpi$ by:
\begin{align*}
\Mminpi &= %\\
\hspace{-0.0ex}
\min_{w_1 \in \W_1, w_2 \in \W_2}
\hspace{0.5ex}
\min_{(x_1, x_2) \in \S_{w_1, w_2}}
\hspace{-0.5ex}
\frac{ \pi_1[x_1] g_1(w_1, x_1) \cdot \pi_2[x_2]  g_2(w_2, x_2) }{ \pi[x_1, x_2] \cdot g((w_1, w_2), (x_1, x_2)) }
\\[2.5ex]
\Mmaxpi &= %\\
\hspace{-0.0ex}
\max_{w_1 \in \W_1, w_2 \in \W_2}
\hspace{0.5ex}
\max_{(x_1, x_2) \in \S_{w_1, w_2}}
\hspace{-0.5ex}
\frac{ \pi_1[x_1] g_1(w_1, x_1) \cdot \pi_2[x_2] g_2(w_2, x_2) }{ \pi[x_1, x_2] \cdot  g((w_1, w_2), (x_1, x_2)) }.
\end{align*}
\end{defi}
\vspace{1ex}
Intuitively, $\Mminpi$  and $\Mmaxpi$ represent how the joint distribution $\pi$ and joint gain function $g$ are far from the cases where $\pi_1$ and $g_1$ are respectively independent of $\pi_2$ and of $g_2$.
Computing these values do not involve the channel matrix, and has a time complexity of ${\mathcal O}(\#\W_1\times\#\W_2\times\#\X_1\times\#\X_2)$, which is smaller than the complexity of computing the $g$-leakage of the composed channel: ${\mathcal O}(\#\W_1\times\#\W_2\times\#\X_1\times\#\X_2\times\#\Y_1\times\#\Y_2)$.
Note that the definition of $\Mmaxpi$ is different from the preliminary version~\cite{KawamotoCP14qest} of this work in that we take the maximum over $\S_{w_1, w_2}$ instead of the summation over them. 
Thanks to this improvement we obtain better compositionality results (Proposition~\ref{lem:disjoint-dep-g-prior-upper}, Lemma~\ref{lem:disjoint-dep-g-vul}, Theorem~\ref{thm:disjoint-dep-g-leak} and Proposition~\ref{thm:disjoint-dep-g-leak-k-channels}).

These two measures $\Mminpi$ and $\Mmaxpi$ satisfy the following properties.
When $\pi_1$ and $\pi_2$ are independent and $g_1$ and $g_2$ are independent, $\Mminpi = \Mmaxpi =~1$.
In addition, for any prior $\pi$, $\Mminpi$ is strictly positive.
\begin{prop} \label{prop:Mgmin_positive}
For any prior $\pi$ and any joint gain function $g$,\, $\Mminpi > 0$.
\end{prop}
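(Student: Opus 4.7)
The plan is to unpack the definition of $\Mminpi$ and verify that every individual term in the inner minimisation is strictly positive, and that the minimisation ranges over a finite set. Since $\X_1, \X_2, \W_1, \W_2$ are all finite (the channels are discrete), the overall minimum is attained and inherits positivity from the individual ratios.

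First, I fix any $(w_1, w_2) \in \W_1\times \W_2$ and any $(x_1, x_2) \in \S_{w_1, w_2}$. By Definition~\ref{def:full-support-set}, membership in $\S_{w_1, w_2}$ forces both $\pi[x_1, x_2] > 0$ and $g((w_1, w_2), (x_1, x_2)) > 0$, so the denominator of the ratio in Definition~\ref{def:input-ratio} is strictly positive. For the numerator, I would argue as follows. From $\pi[x_1, x_2] > 0$ and the definition of the marginals, $\pi_1[x_1] \ge \pi[x_1, x_2] > 0$ and likewise $\pi_2[x_2] > 0$. From $g((w_1, w_2), (x_1, x_2)) > 0$ and the standing assumption on joint gain functions stated in Section~\ref{subsec:joint-gain} (namely, that $g((w_1, w_2), (x_1, x_2)) = 0$ iff $g_1(w_1, x_1) g_2(w_2, x_2) = 0$), I conclude $g_1(w_1, x_1) > 0$ and $g_2(w_2, x_2) > 0$. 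Hence the numerator $\pi_1[x_1] g_1(w_1, x_1)\cdot \pi_2[x_2] g_2(w_2, x_2)$ is strictly positive, and the ratio itself is strictly positive.

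Thus every ratio being minimised over lies in $(0, \infty)$, and the minimisation is taken over the finite set $\bigl\{ (w_1, w_2, x_1, x_2) : (x_1, x_2) \in \S_{w_1, w_2} \bigr\}$. A minimum of finitely many strictly positive numbers is strictly positive, giving $\Mminpi > 0$. The only corner case is when $\S_{w_1, w_2} = \emptyset$ for every $(w_1, w_2)$; in that degenerate situation, following the usual convention $\min \emptyset = +\infty$, the inequality is still vacuously true.

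The argument is essentially bookkeeping; the only potentially delicate point is invoking the joint-gain-function convention to pass from positivity of the joint gain to positivity of both marginal gains, and this is precisely the place where the stated assumption on $g, g_1, g_2$ is used. No new machinery beyond the definitions is required.
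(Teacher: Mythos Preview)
Your proof is correct and follows essentially the same approach as the paper: both arguments fix an arbitrary term in the minimisation, use the definition of $\S_{w_1,w_2}$ to get positivity of the denominator, invoke the joint-gain-function convention to obtain positivity of the marginal gains, and deduce positivity of the marginals $\pi_i$ from positivity of the joint prior. Your version is slightly more careful in making the finiteness of the index set and the empty-support corner case explicit, but the core reasoning is identical.
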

%
%\begin{proof}
\proof
Let $\pi$ be any prior on $\X_1\times\X_2$ and $g$ be any gain function from $(\W_1 \times \W_2) \times (\X_1 \times \X_2)$ to $[0 ,1]$.
Since probabilities and gains are non-negative, it suffices to show that $\Mminpi \neq 0$.
Let $w_1 \in \W_1$, $w_2 \in \W_2$ and $(x_1, x_2) \in \S_{w_1, w_2}$.
We have $\pi[x_1, x_2] \neq 0$ and $g((w_1, w_2), (x_1, x_2)) \neq 0$.
Then, by the definition of joint gain functions, 
$g_1(w_1, x_1) g_2(w_2, x_2) \neq 0$.
In addition, we obtain $\pi_1[x_1] \cdot \pi_2[x_2] \neq 0$ from $\pi[x_1, x_2] \neq 0$.
Hence $\pi_1[x_1] g_1(w_1, x_1) \cdot \pi_2[x_2]  g_2(w_2, x_2) \neq 0$
Therefore $\Mminpi \neq 0$.
\qed
%\end{proof}
\medskip

\subsubsection{The $g$-Entropy of Joint Prior}
\label{subsubsec:g-prior-of-paralell}
In this section we present a lower and an upper bound on the (prior) $g$-entropy $\Hg(\pi)$.

The $g$-entropy $\Hg(\pi)$ of a joint prior $\pi$ is bounded from below by the summation of $\log \Mminpi$ and the $g$-entropies of its two marginals $\pi_1, \pi_2$:
\begin{prop} \label{lem:disjoint-dep-g-prior}
For any prior $\pi$ on $\X_1 \times \X_2$,
\[
\Hg(\pi) \ge
H_{g_1}(\pi_1) + H_{g_2}(\pi_2) + \log \Mminpi.
\]
\end{prop}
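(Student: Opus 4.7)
The plan is to work at the level of $g$-vulnerability rather than $g$-entropy, since $\Hg = -\log V_g$ turns sums/products into the desired additive bound. Concretely, I would aim to prove the multiplicative dual
\[
V_g(\pi) \;\le\; \frac{V_{g_1}(\pi_1)\,\cdot\,V_{g_2}(\pi_2)}{\Mminpi},
\]
from which the result follows by applying $-\log$ to both sides and rearranging (using $\log \Mminpi$ is well-defined because $\Mminpi > 0$ by Proposition~\ref{prop:Mgmin_positive}).

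The first step is to unfold the definitions: $V_g(\pi) = \max_{w_1,w_2} \sum_{x_1,x_2} \pi[x_1,x_2]\,g((w_1,w_2),(x_1,x_2))$. Fix an arbitrary pair $(w_1,w_2)\in \W_1\times \W_2$. For each $(x_1,x_2)\in \S_{w_1,w_2}$, the defining inequality of $\Mminpi$ gives
\[
\Mminpi \cdot \pi[x_1,x_2]\, g((w_1,w_2),(x_1,x_2)) \;\le\; \pi_1[x_1]\,g_1(w_1,x_1)\cdot \pi_2[x_2]\,g_2(w_2,x_2),
\]
while for $(x_1,x_2)\notin \S_{w_1,w_2}$ the left-hand side vanishes, so the inequality holds trivially (the right-hand side is non-negative). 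Summing over all $(x_1,x_2)\in \X_1\times \X_2$ and then factoring the now-independent sum on the right yields
\[
\Mminpi \cdot \sum_{x_1,x_2} \pi[x_1,x_2]\,g((w_1,w_2),(x_1,x_2)) \;\le\; \Bigl(\sum_{x_1}\pi_1[x_1]\,g_1(w_1,x_1)\Bigr)\Bigl(\sum_{x_2}\pi_2[x_2]\,g_2(w_2,x_2)\Bigr).
\]

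Each factor on the right is bounded above by $V_{g_1}(\pi_1)$ and $V_{g_2}(\pi_2)$ respectively, by definition of $g$-vulnerability. Since $(w_1,w_2)$ was arbitrary, taking the maximum over $(w_1,w_2)$ on the left yields $\Mminpi \cdot V_g(\pi) \le V_{g_1}(\pi_1)\,V_{g_2}(\pi_2)$, which is the multiplicative bound above. Applying $-\log$ and using $\Hg = -\log V_g$ then gives the proposition.

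I do not expect any real obstacle: the only point requiring mild care is handling the case $(x_1,x_2)\notin \S_{w_1,w_2}$, which is why $\S_{w_1,w_2}$ was defined to exclude exactly the pairs where the denominator in $\Mminpi$ vanishes, and why Proposition~\ref{prop:Mgmin_positive} (positivity of $\Mminpi$) is needed so that the $\log$ is finite.
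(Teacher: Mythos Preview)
Your proposal is correct and follows essentially the same route as the paper: both establish the multiplicative inequality $\Mminpi\cdot V_g(\pi)\le V_{g_1}(\pi_1)\,V_{g_2}(\pi_2)$ by using the defining bound of $\Mminpi$ pointwise on $\S_{w_1,w_2}$, observing the contribution vanishes off $\S_{w_1,w_2}$, summing, factoring, and then converting via $-\log$. The only cosmetic difference is that you fix $(w_1,w_2)$ first and take the maximum at the end, whereas the paper carries the $\max$ through a chain of inequalities; the substance is identical.
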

\proof
See Appendix~\ref{subsec:proofs}.
\vspace{2ex}
%%%%%%%%%%%%%%%%%%%%%%%%%%%%%%

To obtain an upper bound for the $g$-entropy $\Hg(\pi)$ of a joint prior $\pi$, we assume that $\pi$ is jointly supported.
Then the upper bound is the summation of $\log \Mmaxpi$ and the $g$-entropies of its two marginals $\pi_1, \pi_2$:
\begin{prop} \label{lem:disjoint-dep-g-prior-upper}
For any jointly supported prior $\pi$ on $\X_1 \times \X_2$,
\[
\Hg(\pi) \le
H_{g_1}(\pi_1) + H_{g_2}(\pi_2) + \log \Mmaxpi.
\]
\end{prop}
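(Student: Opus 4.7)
The plan is to reduce the logarithmic inequality to a product inequality on vulnerabilities: showing $\Hg(\pi) \le H_{g_1}(\pi_1) + H_{g_2}(\pi_2) + \log \Mmaxpi$ is equivalent to showing
\[
V_g(\pi) \cdot \Mmaxpi \ \ge\ V_{g_1}(\pi_1) \cdot V_{g_2}(\pi_2).
\]
So I will work on this product form, which is directly amenable to combining the optimal marginal guesses.

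First I would pick guesses $w_1^* \in \W_1$ and $w_2^* \in \W_2$ that achieve $V_{g_1}(\pi_1)$ and $V_{g_2}(\pi_2)$ respectively (the maxima exist since $\W_1, \W_2$ are considered finite here). Then I would expand the product as
\[
V_{g_1}(\pi_1) \cdot V_{g_2}(\pi_2) = \sum_{x_1 \in \X_1, x_2 \in \X_2} \pi_1[x_1] g_1(w_1^*, x_1)\cdot \pi_2[x_2] g_2(w_2^*, x_2).
\]
The crucial observation is that every nonzero summand has $(x_1,x_2) \in \S_{w_1^*, w_2^*}$: nonvanishing of the numerator forces $\pi_1[x_1]\pi_2[x_2] \ne 0$ and $g_1(w_1^*,x_1) g_2(w_2^*,x_2) \ne 0$; joint support of $\pi$ then gives $\pi[x_1,x_2] \ne 0$, and the assumption on joint gain functions (that $g=0$ iff one of the components is $0$) gives $g((w_1^*,w_2^*),(x_1,x_2)) \ne 0$.

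Next I would restrict the sum to $\S_{w_1^*, w_2^*}$, multiply and divide each term by $\pi[x_1,x_2] g((w_1^*,w_2^*),(x_1,x_2))$ (now guaranteed nonzero), and bound the resulting ratio by $\Mmaxpi$:
\[
\sum_{(x_1,x_2) \in \S_{w_1^*, w_2^*}} \frac{\pi_1[x_1] g_1(w_1^*, x_1) \pi_2[x_2] g_2(w_2^*, x_2)}{\pi[x_1,x_2] g((w_1^*,w_2^*),(x_1,x_2))} \cdot \pi[x_1,x_2] g((w_1^*,w_2^*),(x_1,x_2)) \le \Mmaxpi \sum_{(x_1,x_2)} \pi[x_1,x_2] g((w_1^*,w_2^*),(x_1,x_2)).
\]
Finally, the latter sum is the expected joint gain at the particular guess $(w_1^*, w_2^*)$, which is at most the maximum over all joint guesses, i.e., $V_g(\pi)$. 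Taking $-\log$ yields the claim.

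The main technical issue, and the reason the jointly supported hypothesis is essential, is justifying that the denominator $\pi[x_1,x_2] g((w_1^*,w_2^*),(x_1,x_2))$ is nonzero precisely when the numerator is, so that the quotient is well-defined and bounded by $\Mmaxpi$; without joint support one could have $\pi_1[x_1]\pi_2[x_2] \ne 0$ but $\pi[x_1,x_2] = 0$, leaving a positive numerator but zero denominator and breaking the bound. All other steps are bookkeeping.
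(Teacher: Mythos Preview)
Your proof is correct and follows essentially the same approach as the paper's: both reduce to showing $V_g(\pi)\cdot\Mmaxpi \ge V_{g_1}(\pi_1)\cdot V_{g_2}(\pi_2)$, use joint support together with the joint-gain-function assumption to align the supports (the paper packages this as Lemma~\ref{lem:jointly-supported}), and bound the ratio pointwise by $\Mmaxpi$. The only cosmetic difference is that you fix optimal guesses $w_1^*,w_2^*$ and argue upward from $V_{g_1}(\pi_1)\cdot V_{g_2}(\pi_2)$, whereas the paper keeps the $\max$ over $(w_1,w_2)$ and argues downward from $V_g(\pi)\cdot\Mmaxpi$.
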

\proof
See Appendix~\ref{subsec:proofs}.

\subsubsection{The Posterior $g$-Entropy of Parallel Composition}
\label{subsubsec:g-post-of-paralell}
Next we show a lower and an upper bound on the posterior $g$-entropy $\Hg(\pi, C_1 \compd C_2)$.
\begin{lem} \label{lem:disjoint-dep-g-vul}
For any prior $\pi$ on $\X_1 \times \X_2$ with marginals $\pi_1$ and $\pi_2$, and two channels $(\X_1, \Y_1, C_1)$, $(\X_2, \Y_2, C_2)$,
\begin{enumerate}
\item 
$\Hg(\pi, C_1 \compd C_2) \ge H_{g_1}(\pi_1, C_1) + H_{g_2}(\pi_2, C_2) + \log \Mminpi$
\item 
$\Hg(\pi, C_1 \compd C_2) \le H_{g_1}\!(\pi_1, C_1) + H_{g_2}\!(\pi_2, C_2) + \log\Mmaxpi$
if $\pi$ is jointly supported.
\end{enumerate}
\end{lem}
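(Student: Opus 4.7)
The plan is to unfold the $g$-entropy inequalities into statements about posterior $g$-vulnerabilities, namely
\[
\frac{V_{g_1}(\pi_1,C_1)\,V_{g_2}(\pi_2,C_2)}{\Mmaxpi}
\;\le\; V_g(\pi, C_1\compd C_2) \;\le\;
\frac{V_{g_1}(\pi_1,C_1)\,V_{g_2}(\pi_2,C_2)}{\Mminpi},
\]
and then negate-logs to recover both parts of the lemma. The main tool is the pointwise bound, valid for every $(w_1,w_2)$ and every $(x_1,x_2)\in\S_{w_1,w_2}$, that follows directly from Definition~\ref{def:input-ratio}:
\[
\Mminpi\cdot\pi[x_1,x_2]\,g((w_1,w_2),(x_1,x_2))
\;\le\;
\pi_1[x_1]g_1(w_1,x_1)\,\pi_2[x_2]g_2(w_2,x_2)
\;\le\;
\Mmaxpi\cdot\pi[x_1,x_2]\,g((w_1,w_2),(x_1,x_2)).
\]

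For part (1), I would first observe that for $(x_1,x_2)\notin\S_{w_1,w_2}$ we have $\pi[x_1,x_2]\,g((w_1,w_2),(x_1,x_2))=0$, so the left inequality above trivially extends to all $(x_1,x_2)\in\X_1\times\X_2$. Substituting into the definition of $V_g(\pi, C_1\compd C_2)$ and using $(C_1\compd C_2)[(x_1,x_2),(y_1,y_2)]=C_1[x_1,y_1]\,C_2[x_2,y_2]$, the joint sum over $(x_1,x_2)$ factors as a product of two independent sums, one in the $C_1$-variables and one in the $C_2$-variables. The outer $\max_{(w_1,w_2)}$ of such a nonnegative product splits as $\max_{w_1}\cdot\max_{w_2}$, and the outer $\sum_{(y_1,y_2)}$ factors likewise. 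The resulting two factors are exactly $V_{g_1}(\pi_1,C_1)$ and $V_{g_2}(\pi_2,C_2)$, giving the desired upper bound on $V_g(\pi,C_1\compd C_2)$.

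For part (2), the same factorisation argument goes through once we show that the right-hand pointwise inequality also extends to all $(x_1,x_2)$, not only those in $\S_{w_1,w_2}$. For $(x_1,x_2)\notin\S_{w_1,w_2}$ there are two subcases: either $\pi[x_1,x_2]=0$, in which case jointly-supportedness forces $\pi_1[x_1]\pi_2[x_2]=0$ and the numerator vanishes; or $g((w_1,w_2),(x_1,x_2))=0$, in which case the defining property of a joint gain function forces $g_1(w_1,x_1)g_2(w_2,x_2)=0$ and again the numerator vanishes. Either way $0\le 0$ holds, so the bound extends globally. Exactly as in part~(1), summing against the product channel and using that $\max$ of a product of nonnegatives factors yields $V_g(\pi,C_1\compd C_2)\ge V_{g_1}(\pi_1,C_1)V_{g_2}(\pi_2,C_2)/\Mmaxpi$.

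The only real obstacle is the second case distinction: without joint-supportedness the right-hand pointwise bound can fail outside $\S_{w_1,w_2}$ because $\pi[x_1,x_2]\,g(\cdot)$ can vanish while the marginal product does not, which is precisely why the hypothesis is needed only for (2) and why Proposition~\ref{lem:disjoint-dep-g-prior-upper} had the same restriction. A small notational point worth flagging in the write-up is that the $\max$-over-products step uses nonnegativity of all factors, which is guaranteed because priors, channel entries, and gain values are all in $[0,1]$.
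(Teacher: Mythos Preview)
Your proposal is correct and follows essentially the same approach as the paper's proof: both establish the multiplicative inequalities $V_g(\pi,C_1\compd C_2)\cdot\Mminpi \le V_{g_1}(\pi_1,C_1)\,V_{g_2}(\pi_2,C_2) \le V_g(\pi,C_1\compd C_2)\cdot\Mmaxpi$ via the pointwise ratio bound from Definition~\ref{def:input-ratio}, extend the bound from $\S_{w_1,w_2}$ to all of $\X_1\times\X_2$ (trivially for part~(1), via jointly-supportedness and the joint-gain-function property for part~(2)), and then factor using the product structure of $C_1\compd C_2$. Your case distinction for $(x_1,x_2)\notin\S_{w_1,w_2}$ in part~(2) is exactly the content of the paper's auxiliary Lemma~\ref{lem:jointly-supported}, stated a bit more explicitly.
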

\Proof.
See Appendix~\ref{subsec:proofs}.
\vspace{2ex}

The equalities in Lemma~\ref{lem:disjoint-dep-g-vul} hold if the priors and the gain functions are independent:
\begin{cor} \label{cor:disjoint-g-vul}
If $g((w_1, w_2), (x_1, x_2)) = g_1(w_1, x_1) g_2(w_2, x_2)$ for all $x_1,x_2,w_1$ and $w_2$, then, for any  $\pi_1$ and $\pi_2$,
we have
\[
\Hg(\pi_1 \times \pi_2, C_1 \compd C_2) = H_{g_1}(\pi_1, C_1) + H_{g_2}(\pi_2, C_2)
\texttt{.}
\]
\end{cor}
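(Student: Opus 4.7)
The plan is to derive the equality by squeezing it between the two bounds of Lemma~\ref{lem:disjoint-dep-g-vul}, showing that under the stated independence hypotheses both $\Mminpi$ and $\Mmaxpi$ collapse to $1$, so that the $\log$ correction terms vanish and the two inequalities coincide.

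First I would verify that Lemma~\ref{lem:disjoint-dep-g-vul}~(2) is applicable, which requires $\pi_1\times\pi_2$ to be jointly supported. This is immediate from the observation recorded just after Definition~\ref{def:jointly-supported}: independent priors are always jointly supported, since $(\pi_1\times\pi_2)[x_1,x_2]=\pi_1[x_1]\pi_2[x_2]$ is nonzero exactly when both marginals are. Consequently both parts of Lemma~\ref{lem:disjoint-dep-g-vul} yield
\[
H_{g_1}(\pi_1,C_1)+H_{g_2}(\pi_2,C_2)+\log\Mminpi \le \Hg(\pi_1\times\pi_2, C_1\compd C_2) \le H_{g_1}(\pi_1,C_1)+H_{g_2}(\pi_2,C_2)+\log\Mmaxpi.
\]

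The second step is to compute $\Mminpi$ and $\Mmaxpi$ under the two independence assumptions. For any $(w_1,w_2)\in\W_1\times\W_2$ and any $(x_1,x_2)\in\S_{w_1,w_2}$, the ratio defining these measures in Definition~\ref{def:input-ratio} becomes
\[
\frac{\pi_1[x_1]g_1(w_1,x_1)\cdot\pi_2[x_2]g_2(w_2,x_2)}{(\pi_1\times\pi_2)[x_1,x_2]\cdot g((w_1,w_2),(x_1,x_2))} = \frac{\pi_1[x_1]\pi_2[x_2]\,g_1(w_1,x_1)g_2(w_2,x_2)}{\pi_1[x_1]\pi_2[x_2]\,g_1(w_1,x_1)g_2(w_2,x_2)} = 1,
\]
where the denominator is nonzero precisely by the definition of $\S_{w_1,w_2}$. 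Taking the min (resp.\ max) of the constant value $1$ over a nonempty range gives $\Mminpi=\Mmaxpi=1$, so $\log\Mminpi=\log\Mmaxpi=0$.

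Finally, substituting back into the sandwiching inequality from the first step forces $\Hg(\pi_1\times\pi_2, C_1\compd C_2) = H_{g_1}(\pi_1,C_1)+H_{g_2}(\pi_2,C_2)$, which is the claimed equality. The only subtlety worth checking is the edge case when some $\S_{w_1,w_2}$ is empty for every $(w_1,w_2)$ (so the ratio is nowhere evaluated), but this can happen only when $\pi$ or $g$ is trivial, and in those degenerate situations both $g$-entropies on the right-hand side are easily seen to collapse as well, matching the left-hand side; this is the only case that needs a small sanity check rather than a computation.
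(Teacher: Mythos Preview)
Your proposal is correct and follows essentially the same approach as the paper: the paper's proof simply states that independence of priors and gain functions gives $\Mminpi=\Mmaxpi=1$ and invokes Lemma~\ref{lem:disjoint-dep-g-vul}, which is exactly your sandwiching argument spelled out in more detail. Your explicit check that $\pi_1\times\pi_2$ is jointly supported and your computation of the ratio are useful elaborations the paper leaves implicit.
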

%
%\begin{proof}
\proof
By the independence of the priors and the gain functions, we have $\Mminpi = \Mmaxpi =~1$.
Hence the claim follows.
\qed
%\end{proof}

%
\subsubsection{The $g$-Leakage of Parallel Composition}
\label{subsubsec:g-leak-of-paralell}
Now we show a lower and an upper bound for the $g$-leakage of the parallel composition. To obtain this we assume that the prior is jointly supported.

\begin{thm}\label{thm:disjoint-dep-g-leak}
Let $\pi$ be a  jointly supported prior on $\X_1 \times \X_2$ with marginals $\pi_1$ and $\pi_2$. Let $(\X_1, \Y_1, \allowbreak C_1)$, $(\X_2, \Y_2, C_2)$ be two channels. Then:%~\\
\begin{enumerate}
\item 
$\Ig(\pi, C_1 \compd C_2) \ge I_{g_1}(\pi_1, C_1) + I_{g_2}(\pi_2, C_2)  - \log\!\frac{ \Mmaxpi }{ \Mminpi }$,
\item 
$\Ig(\pi, C_1 \compd C_2) \le I_{g_1}\!(\pi_1,\!C_1) + I_{g_2}\!(\pi_2,\!C_2)  + \log\!\frac{ \Mmaxpi }{ \Mminpi }$.
\end{enumerate}
\end{thm}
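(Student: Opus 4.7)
The plan is to derive both bounds by directly combining the four inequalities already established for the prior $g$-entropy and the posterior $g$-entropy of the composed channel, using the identity $\Ig(\pi, C) = \Hg(\pi) - \Hg(\pi, C)$. Since $I_g$ is a difference of entropies, a lower bound on $I_g$ requires pairing a lower bound on $\Hg(\pi)$ with an upper bound on $\Hg(\pi, C_1 \compd C_2)$, and vice versa for the upper bound. This immediately fixes which of the four auxiliary inequalities to invoke.

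First I would prove part (1). Apply Proposition~\ref{lem:disjoint-dep-g-prior} to get
\[
\Hg(\pi) \ge H_{g_1}(\pi_1) + H_{g_2}(\pi_2) + \log \Mminpi,
\]
and apply Lemma~\ref{lem:disjoint-dep-g-vul}(2), which is available because $\pi$ is jointly supported, to get
\[
\Hg(\pi, C_1 \compd C_2) \le H_{g_1}(\pi_1, C_1) + H_{g_2}(\pi_2, C_2) + \log \Mmaxpi.
\]
Subtracting the second from the first and regrouping the marginal terms into $I_{g_1}(\pi_1, C_1) + I_{g_2}(\pi_2, C_2)$ yields the correction $\log\Mminpi - \log\Mmaxpi = -\log\frac{\Mmaxpi}{\Mminpi}$, giving exactly inequality (1).

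Next I would prove part (2) symmetrically. Apply Proposition~\ref{lem:disjoint-dep-g-prior-upper}, which again uses that $\pi$ is jointly supported, together with the lower bound on the posterior $g$-entropy from Lemma~\ref{lem:disjoint-dep-g-vul}(1):
\[
\Hg(\pi) \le H_{g_1}(\pi_1) + H_{g_2}(\pi_2) + \log \Mmaxpi,
\]
\[
\Hg(\pi, C_1 \compd C_2) \ge H_{g_1}(\pi_1, C_1) + H_{g_2}(\pi_2, C_2) + \log \Mminpi.
\]
Subtracting and regrouping in the same way produces the correction $+\log\frac{\Mmaxpi}{\Mminpi}$, which is precisely inequality (2).

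There is no genuine obstacle here, as the hard work already lives in Propositions~\ref{lem:disjoint-dep-g-prior}, \ref{lem:disjoint-dep-g-prior-upper} and Lemma~\ref{lem:disjoint-dep-g-vul}. The only things I would be careful about are: (i) that Proposition~\ref{prop:Mgmin_positive} guarantees $\Mminpi > 0$ so the ratio $\Mmaxpi / \Mminpi$ is well-defined and its logarithm makes sense, and (ii) that the jointly supported hypothesis on $\pi$ is indispensable in both directions, since it is required by the upper bounds on $\Hg(\pi)$ and on $\Hg(\pi, C_1 \compd C_2)$ that feed into each of the two inequalities.
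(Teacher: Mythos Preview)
Your proposal is correct and follows exactly the same route as the paper: both derive each inequality by pairing the appropriate bound on the prior $g$-entropy (Propositions~\ref{lem:disjoint-dep-g-prior} and~\ref{lem:disjoint-dep-g-prior-upper}) with the complementary bound on the posterior $g$-entropy (Lemma~\ref{lem:disjoint-dep-g-vul}), and then regrouping into the marginal $g$-leakages. The paper only spells out part~(2) and leaves part~(1) as ``analogous,'' whereas you write both out explicitly, but the argument is the same.
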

%
%\begin{proof}
\proof
\begin{align*}
&\Ig(\pi, C_1 \compd C_2) \\[0.5ex]
=\,&
\Hg(\pi) - \Hg(\pi, C_1 \compd C_2)
\\ \le\,&
H_{g}(\pi) - H_{g_1}(\pi_1, C_1) - H_{g_2}(\pi_2, C_2) - \log \Mminpi
&\hspace{-20ex}\mbox{(By Lemma~\ref{lem:disjoint-dep-g-vul})}
\\[0.5ex] \le\,&
\left( H_{g_1}(\pi_1) - H_{g_1}(\pi_1, C_1) \right) + \left( H_{g_2}(\pi_2) - H_{g_2}(\pi_2, C_2) \right)
+ \log \Mmaxpi - \log \Mminpi
\\ &
&\hspace{-20ex}\mbox{(By Proposition~\ref{lem:disjoint-dep-g-prior-upper})}
\\[0.5ex] =\,&
I_{g_1}(\pi_1, C_1) + I_{g_2}(\pi_2, C_2)  + \log \frac{ \Mmaxpi }{ \Mminpi }.
\end{align*}
The other inequality is proven using Lemma~\ref{lem:disjoint-dep-g-vul} and Proposition~\ref{lem:disjoint-dep-g-prior} in an analogous way.
\qed
%\end{proof}
%
Again, the equality holds if the priors and gain functions are independent:
\begin{cor} \label{cor:disjoint-g-leak}
If two gain functions $g_1$ and $g_2$ are independent, 
then 
$
\Ig(\pi_1 \times \pi_2, C_1 \compd C_2) = I_{g_1}(\pi_1, C_1) \allowbreak + I_{g_2}(\pi_2, C_2)
\texttt{.}
$
\end{cor}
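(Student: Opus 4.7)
The plan is to derive this as a direct consequence of Theorem~\ref{thm:disjoint-dep-g-leak}, by verifying that under the hypothesis both compositionality measures $\Mminpi$ and $\Mmaxpi$ collapse to $1$, so that the lower and upper bounds coincide.

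First I would check the side condition of Theorem~\ref{thm:disjoint-dep-g-leak}, namely that $\pi_1 \times \pi_2$ is jointly supported. This is immediate from Definition~\ref{def:jointly-supported}: whenever $\pi_1[x_1] \cdot \pi_2[x_2] \neq 0$, the product distribution satisfies $(\pi_1 \times \pi_2)[x_1, x_2] = \pi_1[x_1] \cdot \pi_2[x_2] \neq 0$, as observed right after the definition.

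Next I would compute $\Mminpi$ and $\Mmaxpi$ for the product prior $\pi = \pi_1 \times \pi_2$ and an independent joint gain function. For any $(w_1,w_2) \in \W_1 \times \W_2$ and any $(x_1,x_2) \in \S_{w_1,w_2}$, substituting $\pi[x_1,x_2] = \pi_1[x_1]\,\pi_2[x_2]$ and $g((w_1,w_2),(x_1,x_2)) = g_1(w_1,x_1)\,g_2(w_2,x_2)$ into the ratio defining these measures yields
\begin{align*}
\frac{\pi_1[x_1]\, g_1(w_1,x_1)\cdot \pi_2[x_2]\, g_2(w_2,x_2)}{\pi_1[x_1]\,\pi_2[x_2]\cdot g_1(w_1,x_1)\,g_2(w_2,x_2)} \;=\; 1,
\end{align*}
where the cancellation is legitimate because membership in $\S_{w_1,w_2}$ guarantees that both numerator and denominator are nonzero. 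Taking the minimum (respectively maximum) over all such $(w_1,w_2)$ and $(x_1,x_2)$ gives $\Mminpi = \Mmaxpi = 1$.

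Finally I would plug these values into Theorem~\ref{thm:disjoint-dep-g-leak}: the lower bound becomes $I_{g_1}(\pi_1, C_1) + I_{g_2}(\pi_2, C_2) - \log(1/1)$ and the upper bound becomes $I_{g_1}(\pi_1, C_1) + I_{g_2}(\pi_2, C_2) + \log(1/1)$, both equal to $I_{g_1}(\pi_1, C_1) + I_{g_2}(\pi_2, C_2)$, yielding the desired equality. There is no real obstacle here; the only point requiring care is the appeal to $(x_1,x_2) \in \S_{w_1,w_2}$ to justify the cancellation, which is precisely what the support set in Definition~\ref{def:full-support-set} is designed to guarantee.
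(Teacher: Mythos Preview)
Your proposal is correct and follows essentially the same route as the paper: the paper's proof simply observes that independence of the priors and gain functions gives $\Mminpi = \Mmaxpi = 1$ and concludes from Theorem~\ref{thm:disjoint-dep-g-leak}. You have merely spelled out the details the paper leaves implicit, including the verification that $\pi_1\times\pi_2$ is jointly supported and the justification of the cancellation via membership in $\S_{w_1,w_2}$.
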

%
%\begin{proof}
\proof
By the independence of the priors and the gain functions, we have $\Mminpi = \Mmaxpi =~1$.
Hence the claim follows.
\qed
%\end{proof}

%
\subsection{The $g$-Leakage of Parallel Composition with Shared Input}
\label{subsec:results-g-leakage-shared-input}
In this section we present compositionality results for $g$-leakage when two channels share the same  input value. 

The parallel composition with shared input corresponds to the parallel composition with two identical inputs values: 
$(C_1 \compn C_2)[x, (y_1, y_2)] = C_1[x, y_1] C_2[x, y_2] = (C_1 \compd C_2)[(x, x), (y_1, y_2)]$.
To give the same input value $x$ to both $C_1$ and $C_2$, 
the prior $\pid$ on $\X \times \X$ is defined from a prior $\pi$ on $\X$ by:
\[
\pid[x, x'] = \left\{
\begin{array}{lll}
\pi[x] &\quad&\mbox{if $x = x'$}\\
0 &&\mbox{otherwise}
\end{array}
\right.
\]
Then $\Hg(\pi, C_1 \compn C_2) = \Hg(\pid, C_1 \compd C_2)$.
In addition, 
$\pid_1[x] = \pid_2[x] = \pi[x]$.

As we see in the definition, the attacker's gain is determined solely from a secret input $x$ and his guess $w$ on $x$ (and independently of channels that receive $x$ as input).
Let $g$ be a gain function from $\W \times \X$ to $[0, 1]$.
Since $C_1$ and $C_2$ receive input from the same domain $\X$, we use the same gain function $g$ to calculate both the $g$-leakages of $C_1$ and $C_2$.
Since an identical input value $x$ is given to $C_1$ and $C_2$ in the composed channel $C_1 \compn C_2$ and the attacker makes a single guess $w$ on the secret $x$, we define the joint gain function $\gd \colon \W \times\W \times \X\times\X \rightarrow [0, 1]$ from $g$ by:
\[
\gd((w, w'), (x, x')) = \left\{
\begin{array}{l}
g(w, x) ~~~~~~\hfil\mbox{(if $w = w'$ and $x = x'$)}\\
0 ~~~~~~\hspace{5.0ex}\mbox{(otherwise)}
\end{array}
\right.
\]
If $\pid[x, x'] \cdot \gd((w, w'), (x, x')) \neq 0$, then $w = w'$ and $x = x'$.
Let $(\W \times \X)^+ = \set{ (w, x) \in \W \times \X \mid \pi[x] g(w, x) \neq 0 }$.
By $\pid_1[x] = \pid_2[x] = \pi[x]$,
$\Mmin{\pid} = \min_{(w, x) \in (\W \times \X)^+} \pi[x] g(w, x)$ and
$\Mmax{\pid} = \max_{w \in \W} \sum_{x \in \X} \pi[x] g(w, x)$.
Then $\Hg(\pi) = -\log \Mmax{\pid}$.

To describe compositionality results, we introduce the following notation.
\begin{defi} \label{def:Hgmin} \rm
For any prior $\pi$ on $\X$ and any gain function $g \colon \W \times \X \rightarrow [0, 1]$, we define $\HgMin(\pi)$ by:
\begin{align*}
\HgMin(\pi) = -\log \min \left\{ \pi[x] g(w, x) \colon
x \in \X, w \in \W,\, \pi[x] g(w, x) \neq 0 \right\}
\texttt{.}
\end{align*}
\end{defi}
Then, for any prior $\pi$,\, $\HgMin(\pi) = -\log \Mmin{\pid}$ and $\HgMin(\pi) \ge\displaystyle  \Hg(\pi)$.
Note that computing $\HgMin(\pi)$ does not involve the channel matrix, and has a time complexity of ${\mathcal O}(\#\W\times\#\X)$, which is smaller than the complexity of computing the $g$-leakage of the composed channel: ${\mathcal O}(\#\W\times\#\X\times\#\Y_1\times\#\Y_2)$.
\medskip
\medskip

Since $\pid$ is \emph{not} jointly supported, we can instantiate compositionality results only on a lower bound for the posterior $g$-entropy in Lemma~\ref{lem:disjoint-dep-g-vul} (1).
The lower bound for the posterior $g$-entropy $\Hg(\pi, C_1 \compn C_2)$ of a channel composed in parallel with shared inputs is described using $\HgMin(\pi)$ and the posterior $g$-entropies of its two components:
\begin{thm} \label{thm:g-vul}
For any prior $\pi$ on $\X$ and two channels $(\X, \Y_1, C_1)$ and $(\X, \Y_2, C_2)$,
\[
\Hg(\pi, C_1 \compn C_2) \ge
H_{g}(\pi, C_1) + H_{g}(\pi, C_2) - \HgMin(\pi).
\]
\end{thm}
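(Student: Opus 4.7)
The plan is to derive this bound as a direct instantiation of Lemma~\ref{lem:disjoint-dep-g-vul}(1) applied to $C_1 \compd C_2$, exploiting the diagonal embedding already introduced in the text immediately preceding the theorem. First I would verify the identity $H_g(\pi, C_1 \compn C_2) = H_{\gd}(\pid, C_1 \compd C_2)$. Expanding $V_{\gd}(\pid, C_1 \compd C_2)$, the support of $\pid$ forces $x_1 = x_2 = x$ and the support of $\gd$ forces $w_1 = w_2 = w$, leaving
\[
\sum_{y_1, y_2} \max_w \sum_x \pi[x]\, C_1[x, y_1]\, C_2[x, y_2]\, g(w, x) = V_g(\pi, C_1 \compn C_2).
\]
This reduces the theorem to proving the corresponding bound for the distinct-input composition with the joint gain function $\gd$ and prior $\pid$.

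Second, I would apply Lemma~\ref{lem:disjoint-dep-g-vul}(1) to $(\X, \Y_1, C_1)$ and $(\X, \Y_2, C_2)$ with $g_1 = g_2 = g$, joint gain function $\gd$, and prior $\pid$ (whose marginals both equal $\pi$). The lemma immediately yields
\[
H_{\gd}(\pid, C_1 \compd C_2) \ge H_g(\pi, C_1) + H_g(\pi, C_2) + \log \Mmin{\gd, \pid}.
\]

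Third, I would evaluate $\Mmin{\gd, \pid}$ directly from Definition~\ref{def:input-ratio}. The support set $\S_{w_1, w_2}$ requires $\pid[x_1, x_2] \cdot \gd((w_1, w_2), (x_1, x_2)) \neq 0$, which forces $x_1 = x_2 = x$, $w_1 = w_2 = w$, and $\pi[x]\,g(w, x) \neq 0$. On each such diagonal point, the defining ratio collapses to $\pi[x]^2 g(w,x)^2 / (\pi[x]\, g(w, x)) = \pi[x]\, g(w, x)$. Taking the minimum over all qualifying $(w, x)$ gives $\Mmin{\gd, \pid} = \min\{\pi[x]\, g(w, x) : \pi[x]\, g(w, x) \neq 0\}$, so $\log \Mmin{\gd, \pid} = -\HgMin(\pi)$ by Definition~\ref{def:Hgmin}. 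Substituting into the previous inequality produces the claimed bound.

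The only real obstacle is the bookkeeping around the delta-like structure of $\pid$ and $\gd$: one must carefully track how their supports collapse the joint sums to the diagonal, both when identifying the posterior $g$-vulnerability and when simplifying the ratio from Definition~\ref{def:input-ratio}. Notably, this route gives only a lower bound because $\pid$ is not jointly supported, so the analogous upper bound from Lemma~\ref{lem:disjoint-dep-g-vul}(2) is unavailable---precisely the reason the theorem statement omits an upper bound companion.
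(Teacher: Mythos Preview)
Your proposal is correct and follows essentially the same approach as the paper: instantiate Lemma~\ref{lem:disjoint-dep-g-vul}(1) with the diagonal prior $\pid$, marginals $\pid_1=\pid_2=\pi$, and joint gain $\gd$, then identify $\log \Mmin{\gd,\pid}=-\HgMin(\pi)$ and $H_{\gd}(\pid, C_1\compd C_2)=H_g(\pi, C_1\compn C_2)$. The paper's proof is terser because these identifications are already recorded in the text preceding the theorem, but the logical structure is identical.
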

%
%\begin{proof}
\proof
We instantiate Lemma~\ref{lem:disjoint-dep-g-vul} by taking $g_1$ and $g_2$ as $g$, $\pi$ as $\pid$ and $\X_1 = \X_2 = \X$:
\[
\Hg(\pid, C_1 \compd C_2) \ge
H_{g}(\pid_1, C_1) + H_{g}(\pid_2, C_2) + \log \Mmin{\pid}.
\]
By $\pid_1[x] = \pid_2[x] = \pi[x]$ for all $x \in \X$,  $\log\Mmin{\pid} = -\HgMin(\pi)$ and $\Hg(\pi, C_1 \compn C_2) = \Hg(\pid, C_1 \compd C_2)$, the theorem follows.
\qed
%\end{proof}

An upper bound on the $g$-leakage $\Ig(\pi, C_1 \compn C_2)$ of a channel composed in parallel with shared inputs is described using the $g$-leakages of its two components in the following theorem.
On the other hand, since $\pid$ is not jointly supported, we cannot instantiate Theorem~\ref{thm:disjoint-dep-g-leak} hence cannot obtain a lower bound for the leakage.
\begin{thm} \label{thm:g-leak}
For any prior $\pi$ on $\X$ and two channels $(\X, \Y_1, C_1)$ and $(\X, \Y_2, C_2)$,
\pagebreak[2]
\[
\Ig(\pi, C_1 \compn C_2) \le
I_{g}(\pi, C_1) + I_{g}(\pi, C_2) + \HgMin(\pi) - \Hg(\pi).
\]
\end{thm}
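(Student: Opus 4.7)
The plan is to derive the desired upper bound on $\Ig(\pi, C_1 \compn C_2)$ directly from the already-established lower bound on the posterior $g$-entropy in Theorem~\ref{thm:g-vul}, by invoking the identity $\Ig(\pi, C) = \Hg(\pi) - \Hg(\pi, C)$ from Definition~\ref{def:g-leak}. The key observation is that an upper bound on leakage is, up to the $\Hg(\pi)$ term, the negation of a lower bound on conditional $g$-entropy, so no fresh combinatorial argument is needed.

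First, I would unfold the definition of $g$-leakage for the composed channel:
\[
\Ig(\pi, C_1 \compn C_2) = \Hg(\pi) - \Hg(\pi, C_1 \compn C_2).
\]
Next, I would apply Theorem~\ref{thm:g-vul} to the second term, which yields
\[
-\Hg(\pi, C_1 \compn C_2) \le -H_g(\pi, C_1) - H_g(\pi, C_2) + \HgMin(\pi).
\]
Substituting back gives
\[
\Ig(\pi, C_1 \compn C_2) \le \Hg(\pi) - H_g(\pi, C_1) - H_g(\pi, C_2) + \HgMin(\pi).
\]

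Finally, I would rewrite each posterior $g$-entropy using the identity $H_g(\pi, C_i) = \Hg(\pi) - I_g(\pi, C_i)$, which follows directly from Definition~\ref{def:g-leak}. After cancellation of two of the three copies of $\Hg(\pi)$, we obtain
\[
\Ig(\pi, C_1 \compn C_2) \le I_g(\pi, C_1) + I_g(\pi, C_2) + \HgMin(\pi) - \Hg(\pi),
\]
which is the stated bound.

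There is essentially no obstacle here: the only subtlety is that, as the authors already note, the prior $\pid$ is not jointly supported, so we cannot take the symmetric route via Theorem~\ref{thm:disjoint-dep-g-leak}; we must go through the posterior $g$-entropy bound of Theorem~\ref{thm:g-vul} and then convert. The correction term $\HgMin(\pi) - \Hg(\pi)$ is non-negative (as noted right after Definition~\ref{def:Hgmin}), which is consistent with the fact that sharing input cannot decrease leakage below the sum of components' leakages in general.
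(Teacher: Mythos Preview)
Your proof is correct and follows essentially the same route as the paper: unfold $\Ig$ via Definition~\ref{def:g-leak}, apply the lower bound on $\Hg(\pi, C_1 \compn C_2)$ from Theorem~\ref{thm:g-vul}, and then regroup using $H_g(\pi, C_i) = \Hg(\pi) - I_g(\pi, C_i)$. The paper's proof is line-for-line the same computation.
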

\begin{proof}
% \proof
\begin{align*}
&
\Ig(\pi, C_1 \compn C_2)
\\ =\, &
\Hg(\pi) - \Hg(\pi, C_1 \compn C_2)
\\ \le\, &
H_{g}(\pi) - H_{g}(\pi, C_1) - H_{g}(\pi, C_2) + \HgMin(\pi)
&
\hspace{-15ex}\mbox{(By Theorem~\ref{thm:g-vul})}
\\[0.5ex] \le\, &
\left( H_{g}(\pi) - H_{g}(\pi, C_1) \right) + \left( H_{g}(\pi) - H_{g}(\pi, C_2) \right)
- H_{g}(\pi) + \HgMin(\pi)
\\ =\, &
I_{g}(\pi, C_1) + I_{g}(\pi, C_2)  + \HgMin(\pi) - \Hg(\pi). \tag*{\qEd}
\end{align*}
\def\popQED{}
% \qed
\end{proof}

We emphasize this result holds for any prior.
Note that in the right-hand side of the above inequality, $\HgMin(\pi) - \Hg(\pi)$ is necessary as the following illustrates.
\begin{exa} \label{eg:g-leak}
Let us consider the channel $(\X, \Y, C)$ where $\X = \Y = \set{0, 1}$ and $C$ is the $2 \times 2$ matrix defined by $C[0, 0] = C[1, 1] = 0.9$ and $C[0, 1] = C[1, 0] = 0.1$.
Let $g$ be the identity gain function $g_{id}$ and $\pi$ be the prior on $\X$ such that $\pi[0] = 0.1$ and $\pi[1] = 0.9$.
Then $\Hg(\pi) = \Hg(\pi, C) = -\log 0.9$, $\Hg(\pi, C \compn C) = -\log 0.972$.
Therefore $\Ig(\pi, C \compn C) = \log 1.08 > 0 = \Ig(\pi, C) + \Ig(\pi, C)$.

Note that the inequality of Theorem~\ref{thm:g-leak} does not give a useful upper bound when the prior $\pi$ is very far from the uniform distribution.
In this example, by $\HgMin(\pi) - \Hg(\pi) = \log 9$, the left-hand side is $\log 1.08 \approx 0.111$ while the right-hand side is $\log 9 \approx 3.170$.
\end{exa}

On the other hand, the result may not hold when the composition of channels is done in a dependent way, which is not a parallel composition.
The following is a counterexample:
\begin{exa} \label{prop:non-compositionality}
Let $\X = \Y_1 = \Y_2 = \{0, 1\}$, $\pi$ be the uniform distribution on $\X$ and $g$ be the identity gain function.
We consider the channel that, given an input $x \in \X$, outputs a bit $y_1$ uniformly drawn from $\Y_1$ and the exclusive OR $y_2$ of $x$ and $y_1$.
Then the $g$-leakage of the channel is $1$ while both of the $g$-leakages from $\X$ to $\Y_1$ and from $\X$ to $\Y_2$ are $0$ and $\HgMin(\pi) - \Hg(\pi) = 0$.
Hence the property expressed by Theorem~\ref{thm:g-leak} in general does not hold if we replace $\compn$ with some other kind of composition.
\end{exa}

\subsection{Extension to $n$-ary composition}
\label{subsec:extension-to-n}
In this section we extend some of the previous compositionality results to the case of $n$-ary parallel compositions.
Let $\I = \{1, 2, \ldots, n\}$ be the index set.
The $n$-ary compositions, with distinct and shared inputs, are denoted by:
$\prod_{i\in\I} C_i = C_1 \compd C_2 \compd \cdots \compd C_n$ and
$\parallel_{i\in\I} C_i = C_1 \compn C_2 \compn \cdots \compn C_n$.
The definitions of $\Mminpi$ and $\Mmaxpi$ are extended to $n$-ary compositions as follows.
\begin{defi} \label{def:input-ratio-n-channels} \rm
Let $\W = \W_1 \times \cdots \times \W_n$,
$\X = \X_1 \times \cdots \times \X_n$
and $g$ be a joint gain function from $\W \times \X$ to $[0 ,1]$. For $i\in\I$, let $g_i$ be a gain function from $\W_i  \times \X_i$ to $[0 ,1]$.
Given a prior $\pi$ on $\X$,
we define:
\begin{align*}
\Mminpi &= %\displaystyle %\hspace{-1.0ex}
\min_{\overline{w} \in \W}\,
\min_{\overline{x} \in \S_{\overline{w}}}
\frac{ \prod_{i\in\I}\, \pi_i[x_i] g_i(w_i, x_i) }{ \pi[\overline{x}] \cdot g(\overline{w}, \overline{x}) }
\\[2.0ex]
\Mmaxpi &= %\displaystyle %\hspace{-1.0ex}
\max_{\overline{w} \in \W}\,
\max_{\overline{x} \in \S_{\overline{w}}}
\frac{ \prod_{i\in\I}\, \pi_i[x_i] g_i(w_i, x_i) }{ \pi[\overline{x}] \cdot g(\overline{w}, \overline{x}) }
\end{align*}
where $\overline{w} = (w_1, w_2, \cdots, w_n)$ and $\overline{x} = (x_1, x_2, \cdots, x_n)$.
\end{defi}
\medskip

For the $n$-ary composition with distinct input we obtain the following extension of Theorem~\ref{thm:disjoint-dep-g-leak}
by using the above measures $\Mminpi$ and $\Mmaxpi$ for the $n$-ary composition.
\begin{prop}\label{thm:disjoint-dep-g-leak-k-channels}
For any jointly supported prior $\pi$ on $\X$ and channels $(\X_i, \Y_i, \allowbreak C_i)$ for $i\in\I$,
\begin{align*}
\displaystyle
\sum_{i\in\I} I_{g_i}(\pi_i, C_i) - \log\!\frac{ \Mmaxpi }{ \Mminpi }
%\\
\le
\Ig\Bigl(\!\pi, \prod_{i\in\I} C_i\!\Bigr)
%\\
\le
\sum_{i\in\I} I_{g_i}(\pi_i,\, C_i) + \log\!\frac{ \Mmaxpi }{ \Mminpi }.
\end{align*}
\end{prop}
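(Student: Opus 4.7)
The plan is to mirror the proof of the binary Theorem~\ref{thm:disjoint-dep-g-leak}, extending each of its auxiliary ingredients to $n$-ary parallel composition. Concretely, I will first establish $n$-ary analogues of Propositions~\ref{lem:disjoint-dep-g-prior},~\ref{lem:disjoint-dep-g-prior-upper} (prior $g$-entropy) and of Lemma~\ref{lem:disjoint-dep-g-vul} (posterior $g$-entropy), using the $n$-ary measures $\Mminpi$ and $\Mmaxpi$ of Definition~\ref{def:input-ratio-n-channels}, and then combine them by the same one-line algebra as in the binary proof. An induction on $n$ using the binary theorem directly would also be possible, but is awkward because the per-step ratios do not telescope cleanly into the $n$-ary $\Mminpi$ and $\Mmaxpi$; the direct argument is smoother.

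The key step is the lower bound $\Hg(\pi,\prod_{i\in\I} C_i) \ge \sum_{i\in\I} H_{g_i}(\pi_i,C_i) + \log\Mminpi$. Fix $\overline{w}\in\W_1\times\cdots\times\W_n$ and $\overline{y}=(y_1,\ldots,y_n)$. By Definition~\ref{def:input-ratio-n-channels}, for every $\overline{x}\in\S_{\overline{w}}$ one has $\pi[\overline{x}]\,g(\overline{w},\overline{x}) \le (1/\Mminpi)\prod_{i\in\I}\pi_i[x_i]\,g_i(w_i,x_i)$, while for $\overline{x}\notin\S_{\overline{w}}$ both sides vanish. Multiplying by $\prod_i C_i[x_i,y_i]$ and summing over $\overline{x}$, the right-hand side factorises into independent factors in each $i$, giving $\sum_{\overline{x}}\pi[\overline{x}]\prod_i C_i[x_i,y_i]\,g(\overline{w},\overline{x}) \le (1/\Mminpi)\prod_{i\in\I}\sum_{x_i}\pi_i[x_i]\,C_i[x_i,y_i]\,g_i(w_i,x_i)$. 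Since the bounding factors depend on disjoint groups of indices, $\max_{\overline{w}}$ and $\sum_{\overline{y}}$ commute with $\prod_i$, yielding $V_g(\pi,\prod_i C_i) \le (1/\Mminpi)\prod_i V_{g_i}(\pi_i,C_i)$, and $-\log$ of this gives the claim.

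The dual bound $\Hg(\pi,\prod_i C_i) \le \sum_i H_{g_i}(\pi_i,C_i) + \log\Mmaxpi$ is obtained symmetrically: for each $\overline{y}$ choose per-component optimisers $w_i^{\ast}(y_i)$ for $V_{g_i}(\pi_i,C_i)$ and lift them to a joint guess, then use the $\Mmaxpi$ direction of the ratio inequality in the reverse sense. The jointly supported hypothesis is used here exactly as in the binary case, to ensure that terms with $\pi[\overline{x}]=0$ but $\prod_i\pi_i[x_i]\neq 0$ do not spoil the inequality. Specialising both arguments to identity channels removes the sums over $\overline{y}$ and yields the $n$-ary analogues of Propositions~\ref{lem:disjoint-dep-g-prior} and~\ref{lem:disjoint-dep-g-prior-upper}. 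Assembling the pieces,
$\Ig(\pi,\prod_i C_i) = \Hg(\pi) - \Hg(\pi,\prod_i C_i) \le \bigl(\sum_i H_{g_i}(\pi_i) + \log\Mmaxpi\bigr) - \bigl(\sum_i H_{g_i}(\pi_i,C_i) + \log\Mminpi\bigr) = \sum_i I_{g_i}(\pi_i,C_i) + \log(\Mmaxpi/\Mminpi)$, and the lower bound on $\Ig(\pi,\prod_i C_i)$ follows by swapping the two extreme inequalities.

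The main obstacle is purely notational: keeping the index bookkeeping straight in the factorisation step, and verifying carefully that after the $M$-ratio inequality the bounding integrand really decouples into $n$ factors indexed by disjoint variables, so that $\max_{\overline{w}}$, $\sum_{\overline{y}}$, $\sum_{\overline{x}}$ and $\prod_{i\in\I}$ all commute in the correct order. Once that is done, the role of joint support is confined to the $\Mmaxpi$ direction and is identical to that in the binary proof.
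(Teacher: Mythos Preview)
Your proposal is correct. It amounts to carrying out, for general $n$, the very same calculations that the appendix gives for $n=2$ (Propositions~\ref{lem:disjoint-dep-g-prior}, \ref{lem:disjoint-dep-g-prior-upper} and Lemma~\ref{lem:disjoint-dep-g-vul}), using the $n$-ary measures of Definition~\ref{def:input-ratio-n-channels} and an $n$-ary version of Lemma~\ref{lem:jointly-supported} at the one place where joint support is needed. One small imprecision: in the $\Mminpi$ direction you say that for $\overline{x}\notin\S_{\overline{w}}$ ``both sides vanish''; in fact only the left-hand side $\pi[\overline{x}]\,g(\overline{w},\overline{x})$ is guaranteed to vanish there, but since the inequality is $\le$ this is already enough, so the argument stands.

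The paper's own proof is different in spirit: it is the one-liner ``by Theorem~\ref{thm:disjoint-dep-g-leak} and by induction on~$\I$'', i.e.\ repeated application of the binary result along the splits $(\X_1\times\cdots\times\X_{k-1})\times\X_k$. You anticipate this alternative and rightly flag its awkwardness: each inductive step produces a binary correction factor for the corresponding split (and presupposes an intermediate ``joint'' gain function on the first $k-1$ coordinates that the paper never specifies), and these per-step factors do not obviously telescope into the single $n$-ary $\Mmaxpi/\Mminpi$ of Definition~\ref{def:input-ratio-n-channels}. Your direct route avoids this bookkeeping entirely and yields the stated bound in one pass; the paper's route is shorter to state but leaves that telescoping argument implicit.
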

%
%\begin{proof}
\proof
The proof is by Theorem~\ref{thm:disjoint-dep-g-leak} and by induction on $\I$.
\qed
%\end{proof}

Theorem~\ref{thm:g-leak} also extends naturally to the $n$-ary shared input case:
\begin{prop} \label{thm:g-leak-n-channels}
For any prior $\pi$ on $\X$ and channels $(\X, \Y_i, C_i)$ for $i\in\I$,
\begin{align*}
\Ig(\pi, \compn_{i\in\I} C_i) \le
\sum_{i\in\I} I_{g}(\pi, C_i) + (\#\I - 1) \cdot \left( \HgMin(\pi) - \Hg(\pi) \right).
\end{align*}
\end{prop}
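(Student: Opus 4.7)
The plan is to prove this $n$-ary bound by induction on $n = \#\I$, leveraging the binary version (Theorem~\ref{thm:g-leak}) together with the associativity of the shared-input parallel composition $\compn$. Associativity follows immediately from Definition~\ref{def:parallel-compo}: for channels $C_i$ with shared input domain $\X$, the composed matrix $(C_1 \compn C_2 \compn \cdots \compn C_n)[x, (y_1,\ldots,y_n)] = \prod_{i\in\I} C_i[x, y_i]$ is independent of how the composition is parenthesized, so we may freely regroup terms.

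The base case $n = 1$ is trivial: the right-hand side is $I_g(\pi, C_1) + 0$, matching the left-hand side exactly. For the inductive step, assuming the inequality holds for $n - 1 \ge 1$, I would set $C' = \compn_{i=1}^{n-1} C_i$, a well-defined channel from $\X$ to $\Y_1 \times \cdots \times \Y_{n-1}$ with shared input distribution $\pi$. Then $\compn_{i\in\I} C_i = C' \compn C_n$ (up to the obvious identification of output tuples), so applying Theorem~\ref{thm:g-leak} to $C'$ and $C_n$ with the \emph{same} prior $\pi$ and gain function $g$ gives
\[
\Ig(\pi, \compn_{i\in\I} C_i) \le I_g(\pi, C') + I_g(\pi, C_n) + \HgMin(\pi) - \Hg(\pi).
\]

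Next I would apply the induction hypothesis to the $(n-1)$-ary composition $C'$, yielding
\[
I_g(\pi, C') \le \sum_{i=1}^{n-1} I_g(\pi, C_i) + (n-2)\bigl(\HgMin(\pi) - \Hg(\pi)\bigr),
\]
and substituting this into the previous bound produces the claimed inequality with the coefficient $(n-1)$ on the correction term $\HgMin(\pi) - \Hg(\pi)$.

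The key observation that makes the induction close cleanly is that $\HgMin(\pi)$ and $\Hg(\pi)$ depend only on the shared prior $\pi$ and gain function $g$, not on the channels; they therefore remain unchanged when we peel off one component at a time. Consequently there is no real obstacle here beyond checking associativity of $\compn$ and verifying that Theorem~\ref{thm:g-leak} applies uniformly at each step with the same $\pi$ and $g$; both are immediate from the definitions.
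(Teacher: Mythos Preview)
Your proof is correct and follows exactly the approach the paper takes: induction on $\#\I$ using Theorem~\ref{thm:g-leak} at each step, with the correction term $\HgMin(\pi)-\Hg(\pi)$ accumulating once per application since it depends only on $\pi$ and $g$.
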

%
%\begin{proof}
\proof
The proof is by Theorem~\ref{thm:g-leak} and by induction on $\I$.
\qed
%\end{proof}

%
%%%%%%%%%%%%%%%%%%%%%%%%%%%%%%%%%%%%%
% Section for Compositionality Results on Min-Entropy Leakage
%\input{min-entropy-leakage}
%
\section{Compositionality Results on Min-Entropy Leakage}
\label{sec:compo-MEL}
In this section we present compositionality results for min-entropy leakage, which we recall is a particular case of $g$-leakage obtained when $g$ is the identity gain function.
The results here are instances of the results for $g$-leakage in the previous section while they yield compositionality theorems for min-capacity.
The summary of these results are shown in Table~\ref{table:compositionality:min-leak}.

\begin{table}[t]
\newcommand{\bhline}[1]{\noalign{\hrule height #1}}
\begin{center}
\begin{small}
  \def\arraystretch{1.1}
 \begin{tabular}{|l|c|c|c|}
 \hline
 Composition & Measure					& Lower bound		& Upper bound \\[0.4ex] \bhline{0.3mm}
 ~Composition	& posterior min-entropy
 	& Corollary~\ref{cor:disjoint-vul} (1)	& Corollary~\ref{cor:disjoint-vul} (2) \\[0.4ex]
 ~with distinct	&\hspace{3ex}\scriptsize{with independent priors}
 	& \multicolumn{2}{c|}{\hspace{-8ex}Corollary~\ref{cor:disjoint-vul} (3)} \\[0.4ex] \cline{2-4}
 ~input $\compd$& min-entropy leakage
 	& Corollary~\ref{cor:disjoint-min-entropy-leak} (1) & Corollary~\ref{cor:disjoint-min-entropy-leak} (2)  \\[0.4ex]
 &\hspace{3ex}\scriptsize{with independent priors}
 	& \multicolumn{2}{c|}{\hspace{-8ex}Corollary~\ref{cor:disjoint-min-entropy-leak} (3)} \\[0.4ex] \cline{2-4}
 			& min-capacity
	& \multicolumn{2}{c|}{Corollary~\ref{cor:disjoint-min-cap} \scriptsize{(first appeared in~\cite{barthe2011information})}} \\[0.4ex] \hline
 ~Composition  & posterior min-entropy
 	& Corollary~\ref{cor:vul} (1)	& Corollary~\ref{cor:vul} (2) \\[0.4ex] \cline{2-4}
 ~with shared & min-entropy leakage
 	& Corollary~\ref{cor:vul} (3)	& Corollary~\ref{cor:vul} (4)  \\[0.4ex] \cline{2-4}
 ~input $\compn$& min-capacity
 	& {---} & Corollary~\ref{cor:min-cap} \scriptsize{(first appeared in~\cite{espinoza2013min})}  \\[0.4ex] \hline
 \end{tabular}
 \caption{Compositionality results for min-entropy leakage.}%\vspace{-7.0ex}
 \label{table:compositionality:min-leak}
\end{small}
\end{center}
\end{table}

\subsection{Min-Entropy Leakage of  Parallel Composition with Distinct Input}
\label{subsec:MEL-composition}
In this section we derive bounds for the min-entropy leakage of the parallel composition with distinct input.

We start by remarking that,  when the gain functions are identity gain functions,  $\Mminpi$ and $\Mmaxpi$ reduce to  
$\MminInfpi$ and $\MmaxInfpi$ defined as:
\begin{align*}
\MminInfpi &= %\displaystyle \hspace{-1.0ex}
\min_{(x_1, x_2) \in (\X_1\times \X_2)^+}
\frac{ \pi_1[x_1] \cdot \pi_2[x_2] }{ \pi[x_1, x_2] } \\[0.5ex]
\MmaxInfpi &= %\displaystyle \hspace{-1.0ex}
\max_{(x_1, x_2) \in (\X_1\times \X_2)^+}
\frac{ \pi_1[x_1] \cdot \pi_2[x_2] }{ \pi[x_1, x_2] }
\end{align*}
where $(\X_1\times \X_2)^+ = \{ (x_1, x_2)\in\X_1\times\X_2 \colon \pi[x_1,x_2]\neq 0 \}$.

The next results are consequences of the results of Section~\ref{sec:compo-g-leak}:

\begin{cor} \label{cor:disjoint-vul}
For any prior $\pi$ on $\X_1 \times \X_2$ and two channels $(\X_1, \Y_1, C_1)$, $(\X_2, \Y_2, C_2)$,
\begin{enumerate}
\item
$\Hinf(\pi, C_1 \compd C_2) \ge
\Hinf(\pi_1, C_1) + \Hinf(\pi_2, C_2) + \log \MminInfpi
$,
\item 
$
\Hinf(\pi, C_1 \compd C_2)\!\le\!
\Hinf(\pi_1, C_1) + \Hinf(\pi_2, C_2) + \log \MmaxInfpi
$
if $\pi$ is jointly supported,
\item if $\pi = \pi_1 \times \pi_2$, then
$\Hinf(\pi_1 \times \pi_2, C_1 \compd C_2) = \Hinf(\pi_1, C_1) + \Hinf(\pi_2, C_2)$.
\end{enumerate}
\end{cor}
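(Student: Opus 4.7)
The plan is to derive Corollary~\ref{cor:disjoint-vul} as a direct instantiation of the $g$-leakage results from Section~\ref{sec:compo-g-leak}, specialising to the identity gain function $g_{id}$. By Example~\ref{eg:id-gain}, $\Hinf(\pi) = H_{g_{id}}(\pi)$ and $\Hinf(\pi, C) = H_{g_{id}}(\pi, C)$, so every inequality claimed is simply the corresponding $g$-entropy inequality under this choice.

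The first step is to fix the gain functions. For $g_1 = g_{id}$ on $\X_1$, $g_2 = g_{id}$ on $\X_2$ and $g = g_{id}$ on $\X_1 \times \X_2$, I would verify that these are \emph{independent} in the sense of Section~\ref{subsec:joint-gain}: indeed $g((w_1,w_2),(x_1,x_2)) = 1$ iff $(w_1,w_2) = (x_1,x_2)$ iff $w_1 = x_1$ and $w_2 = x_2$ iff $g_1(w_1,x_1)g_2(w_2,x_2) = 1$, and both sides are $0$ otherwise. Consequently the factor $g((w_1,w_2),(x_1,x_2))$ cancels against $g_1(w_1,x_1) g_2(w_2,x_2)$ in Definition~\ref{def:input-ratio}, and the support $\S_{w_1,w_2}$ is nonempty only when $w_i = x_i$, where it reduces to the single pair $(w_1,w_2)$ provided $\pi[w_1,w_2] \neq 0$. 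Hence $\Mminpi$ and $\Mmaxpi$ collapse to
\[
\Mminpi \;=\; \MminInfpi, \qquad \Mmaxpi \;=\; \MmaxInfpi,
\]
as defined just before the corollary.

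With this identification in hand, parts (1) and (2) follow immediately by applying Lemma~\ref{lem:disjoint-dep-g-vul}(1) and Lemma~\ref{lem:disjoint-dep-g-vul}(2) respectively, since both sides of those inequalities become exactly the min-entropy expressions of Corollary~\ref{cor:disjoint-vul}; part (2) inherits the jointly supported hypothesis from its $g$-leakage counterpart. For part (3), when $\pi = \pi_1 \times \pi_2$ we have $\pi[x_1,x_2] = \pi_1[x_1]\pi_2[x_2]$, so every nonzero ratio inside $\MminInfpi$ and $\MmaxInfpi$ equals $1$; equivalently, this is the case of independent priors and independent gain functions, so Corollary~\ref{cor:disjoint-g-vul} applies directly to give the equality.

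There is no real obstacle here: the entire argument is bookkeeping to confirm that the identity gain function produces independent components and that the general-purpose quantities $\Mminpi, \Mmaxpi$ specialise to $\MminInfpi, \MmaxInfpi$. The only point requiring a line of care is the support condition in Definition~\ref{def:full-support-set}, to ensure the minimum and maximum are taken over exactly $(\X_1 \times \X_2)^+$ and not over a spurious larger set.
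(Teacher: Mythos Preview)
Your proposal is correct and follows essentially the same approach as the paper: instantiate $g$, $g_1$, $g_2$ to the identity gain function, observe that $\Mminpi$ and $\Mmaxpi$ reduce to $\MminInfpi$ and $\MmaxInfpi$, and then invoke Lemma~\ref{lem:disjoint-dep-g-vul} for parts (1)--(2) and the independence case (either via $\MminInfpi = \MmaxInfpi = 1$ or Corollary~\ref{cor:disjoint-g-vul}) for part (3). You spell out the support computation and the independence check in more detail than the paper does, which simply asserts the identifications; your slight awkwardness in phrasing ``$\S_{w_1,w_2}$ is nonempty only when $w_i = x_i$'' would be cleaner as ``$\S_{w_1,w_2}$ is either $\{(w_1,w_2)\}$ or empty, according to whether $\pi[w_1,w_2] \neq 0$'', but the intent is clear and correct.
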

%
%\begin{proof}
\proof
By taking $g$, $g_1$, $g_2$ as the identity gain functions, 
$H_{g}(\pi_1 \times \pi_2, C_1 \compd C_2) = \Hinf(\pi_1 \times \pi_2, C_1 \compd C_2)$,
$H_{g_1}(\pi_1, C_1) = \Hinf(\pi_1, C_1)$, $H_{g_2}(\pi_2, C_2) = \Hinf(\pi_2, C_2)$,
$\Mminpi = \Mmin{\infty, \pi}$ and $\Mmaxpi = \Mmax{\infty, \pi}$.
Then the first and second claims follow from (1) and (2) of Lemma~\ref{lem:disjoint-dep-g-vul} respectively.
If $\pi_1$ and $\pi_2$ are independent we have $\Mmin{\infty, \pi} = \Mmax{\infty, \pi} = 1$, hence the last claim follows.
\qed
%\end{proof}

%
\begin{cor} \label{cor:disjoint-min-entropy-leak}
For any jointly supported prior  $\pi$ on $\X_1 \times \X_2$ and two channels $(\X_1, \Y_1, C_1)$, $(\X_2, \Y_2, C_2)$,
\begin{enumerate}
\item 
$\Iinf(\pi, C_1 \compd C_2) \ge 
\Iinf(\pi_1, C_1)\!+\!\Iinf(\pi_2, C_2)\,\!-\!\,\log\!\frac{ \MmaxInfpi }{ \MminInfpi }$,
\item 
$\Iinf(\pi, C_1 \compd C_2)%
\,\!\le\!\,
\Iinf(\pi_1, C_1)\!+\!\Iinf(\pi_2, C_2)\,\!+\!\,\log\!\frac{ \MmaxInfpi }{ \MminInfpi }$,
\item if $\pi = \pi_1 \times \pi_2$, then
$\Iinf(\pi_1 \times \pi_2, C_1 \compd C_2) = \Iinf(\pi_1, C_1) + \Iinf(\pi_2, C_2)$.
\end{enumerate}

\end{cor}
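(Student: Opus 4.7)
The plan is to derive all three claims as direct specializations of the $g$-leakage compositionality results obtained in the previous section, namely Theorem~\ref{thm:disjoint-dep-g-leak} and Corollary~\ref{cor:disjoint-g-leak}, by instantiating the gain functions $g$, $g_1$, $g_2$ to the identity gain function $g_{id}$ of Example~\ref{eg:id-gain}.

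First I would verify that the identity gain function is consistent with the framework for joint gain functions: the joint identity gain satisfies $g_{id}((w_1,w_2),(x_1,x_2)) = 1$ iff $(w_1,w_2) = (x_1,x_2)$, which equals $g_{id}(w_1,x_1)\,g_{id}(w_2,x_2)$, so $g_1$ and $g_2$ are independent and the nullity condition on joint guesses holds. Next I would rewrite the support sets $\S_{w_1,w_2}$ of Definition~\ref{def:full-support-set} under this instantiation: since $g((w_1,w_2),(x_1,x_2))$ vanishes unless $(x_1,x_2)=(w_1,w_2)$, the set $\S_{w_1,w_2}$ is either the singleton $\{(w_1,w_2)\}$ (when $\pi[w_1,w_2] \neq 0$) or empty. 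Collapsing the double minimization/maximization in Definition~\ref{def:input-ratio} along this observation reduces $\Mminpi$ and $\Mmaxpi$ precisely to the quantities $\MminInfpi$ and $\MmaxInfpi$ defined at the start of Section~\ref{subsec:MEL-composition}, since the numerator simplifies to $\pi_1[x_1]\cdot\pi_2[x_2]$ and the denominator to $\pi[x_1,x_2]$, with $(x_1,x_2)$ ranging over $(\X_1\times\X_2)^+$.

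With this identification in hand, claims (1) and (2) are immediate by applying the lower and upper bounds of Theorem~\ref{thm:disjoint-dep-g-leak} and using $I_{g_{id}} = \Iinf$. For claim (3), I would appeal to Corollary~\ref{cor:disjoint-g-leak}: the identity gain functions are independent as shown above, so the corollary applies and yields the exact equality $\Iinf(\pi_1 \times \pi_2, C_1 \compd C_2) = \Iinf(\pi_1, C_1) + \Iinf(\pi_2, C_2)$. Equivalently, under $\pi = \pi_1 \times \pi_2$ one directly checks $\MminInfpi = \MmaxInfpi = 1$, so the logarithmic slack term in (1) and (2) vanishes.

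There is no real obstacle here; the entire argument is a bookkeeping reduction from the $g$-leakage setting to the min-entropy setting. The only point requiring a line of care is the reduction of the indexing sets in the definitions of $\Mminpi$ and $\Mmaxpi$ under the identity gain function, which is handled by the observation about $\S_{w_1,w_2}$ described above.
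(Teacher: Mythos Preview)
Your proposal is correct and follows essentially the same approach as the paper's own proof: instantiate $g$, $g_1$, $g_2$ to the identity gain function, identify $\Mminpi,\Mmaxpi$ with $\MminInfpi,\MmaxInfpi$, and then invoke Theorem~\ref{thm:disjoint-dep-g-leak} for parts (1) and (2) and Corollary~\ref{cor:disjoint-g-leak} for part (3). The only difference is that you spell out the reduction of the support sets $\S_{w_1,w_2}$ and the resulting simplification of $\Mminpi,\Mmaxpi$ in more detail than the paper does.
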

%
%\begin{proof}
\proof
By taking $g$, $g_1$, $g_2$ as the identity gain functions, 
$I_{g}(\pi_1 \times \pi_2, C_1 \compd C_2) = \Iinf(\pi_1 \times \pi_2, C_1 \compd C_2)$,
$I_{g_1}(\pi_1, C_1) = \Iinf(\pi_1, C_1)$, $I_{g_2}(\pi_2, C_2) = \Iinf(\pi_2, C_2)$,
$\Mminpi = \Mmin{\infty,\pi}$ and $\Mmaxpi = \Mmax{\infty,\pi}$.
Then the first and second claims follow from Theorem~\ref{thm:disjoint-dep-g-leak} and the last from Corollary~\ref{cor:disjoint-g-leak} respectively.
\qed
%\end{proof}

The min-entropy leakage coincides with the min-capacity when the prior $\pi$ is uniform. Thus we re-obtain the following result from the literature~\cite{barthe2011information}:
\begin{cor} \label{cor:disjoint-min-cap}
For any two channels $(\X_1, \Y_1, C_1)$ and $(\X_2, \Y_2, C_2)$,
\[
\Cinf(C_1 \compd C_2) = \Cinf(C_1) + \Cinf(C_2).
\]
\end{cor}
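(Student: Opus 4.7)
The plan is to deduce this directly from Corollary~\ref{cor:disjoint-min-entropy-leak}(3) together with the classical fact that min-entropy leakage is maximized at the uniform prior.

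The key observation is that if $u_1$ and $u_2$ denote the uniform priors on $\X_1$ and $\X_2$ respectively, then $u_1 \times u_2$ is exactly the uniform prior on $\X_1 \times \X_2$ (since $u_1[x_1]\cdot u_2[x_2] = 1/(|\X_1|\cdot|\X_2|)$ for every pair $(x_1,x_2)$) and is trivially jointly supported. Applying Corollary~\ref{cor:disjoint-min-entropy-leak}(3) to these independent priors yields
\[
\Iinf(u_1 \times u_2,\, C_1 \compd C_2) = \Iinf(u_1, C_1) + \Iinf(u_2, C_2).
\]

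It then remains to invoke the standard fact that $\Iinf(u, C) = \Cinf(C)$ whenever $u$ is uniform on the input set of $C$. This follows from the well-known identity $\Cinf(C) = \log \sum_{y} \max_{x} C[x,y]$: direct substitution shows the uniform prior attains this value, while the bound $\max_x \pi[x] C[x,y] \le V(\pi) \max_x C[x,y]$ produces the matching upper bound $\Iinf(\pi, C) \le \log \sum_y \max_x C[x,y]$ for every prior $\pi$. Substituting $\Iinf(u_i, C_i) = \Cinf(C_i)$ and $\Iinf(u_1 \times u_2, C_1 \compd C_2) = \Cinf(C_1 \compd C_2)$ into the displayed equation gives the desired identity.

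There is essentially no obstacle here: the nontrivial content has already been packaged into Corollary~\ref{cor:disjoint-min-entropy-leak}(3), and the remaining ingredient is the textbook uniform-achieves-capacity lemma, which is self-contained and reduces to the one-line computation above.
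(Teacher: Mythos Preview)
Your proposal is correct and follows essentially the same route as the paper: take the uniform prior on $\X_1\times\X_2$, observe it factors as a product of uniforms, apply Corollary~\ref{cor:disjoint-min-entropy-leak}(3), and use that min-capacity is achieved at the uniform prior. The only difference is that you spell out the uniform-achieves-capacity fact explicitly, whereas the paper simply invokes it as known.
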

%
%\begin{proof}
\proof
Let $\pi$ be the uniform distribution on $\X_1 \times \X_2$. Then $\pi = \pi_1 \times \pi_2$.
Hence $\Cinf(C_1 \compd C_2) = \Iinf(\pi, C_1 \compd C_2) = \Iinf(\pi_1, C_1) + \Iinf(\pi_2, C_2)$ by Corollary~\ref{cor:disjoint-min-entropy-leak}.
Since $\pi_1$ and $\pi_2$ are also uniformly distributed, we obtain $\Iinf(\pi_1, C_1) = \Cinf(C_1)$ and $\Iinf(\pi_2, C_2) = \Cinf(C_2)$.
Therefore $\Cinf(C_1 \compd C_2) = \Cinf(C_1) + \Cinf(C_2)$.
\qed
%\end{proof}
%
Note that~\cite{barthe2011information} does \emph{not} present any compositionality result for min-entropy leakage while proving the result for min-capacity in a different way.

\subsection{Min-Entropy Leakage of Parallel Composition with Shared Input}
\label{subsec:MEL-shared-composition}
In this section we show bounds on the min-entropy leakage of the parallel composition with shared input.
As corollaries of Theorems~\ref{thm:g-vul} and~\ref{thm:g-leak} we obtain the compositionality results for the posterior min-entropy and the min-entropy leakage by taking $g$ as the identity gain function $g_{id}$.
Then for any prior $\pi$ on $\X$,\, $\HgMin(\pi)$ is instantiated to the following:
\begin{align*}
\HMin(\pi) = -\log \min \left\{ \pi[x] \mid x \in \X, \, \pi[x] \neq 0 \right\}.
\end{align*}
Then $\HMin(\pi) \ge \log |\X| \ge \Hinf(\pi)$.

\begin{cor} \label{cor:vul}
For any prior $\pi$ on $\X$ and channels $(\X, \Y_1, C_1)$ and $(\X, \Y_2, C_2)$,
\begin{enumerate}
\item 
$\Hinf(\pi,C_1 \compn C_2) \ge \Hinf(\pi,C_1) + \Hinf(\pi,C_2) - \HMin(\pi)$,
\item 
$\Hinf(\pi,C_1 \compn C_2) \le \min\{ \Hinf(\pi,C_1), \Hinf(\pi,C_2) \}$,
\item
$\Iinf(\pi, C_1 \compn C_2) \ge \max\{ \Iinf(\pi, C_1), \Iinf(\pi, C_2) \}$,
\item
$\Iinf(\pi, C_1 \compn C_2) \le \Iinf(\pi, C_1) + \Iinf(\pi, C_2) + \HMin(\pi) - \Hinf(\pi)$.
\end{enumerate}
\end{cor}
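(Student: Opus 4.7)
Parts (1) and (4) drop out immediately from the corresponding $g$-leakage theorems. Specifically, I would take $g$ to be the identity gain function $g_{id}$: then for every prior $\pi$ on $\X$ we have $V_g(\pi,C) = V(\pi,C)$, $H_g(\pi,C) = \Hinf(\pi,C)$, $I_g(\pi,C) = \Iinf(\pi,C)$, and $\HgMin(\pi)$ collapses to $\HMin(\pi) = -\log\min\{\pi[x] : \pi[x]\neq 0\}$. Plugging these identifications into Theorem~\ref{thm:g-vul} yields part (1), and plugging them into Theorem~\ref{thm:g-leak} yields part (4).

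The real content is part (2). The plan is to prove it at the level of posterior vulnerabilities, by showing that observing both outputs can only help the adversary compared with observing one of them; that is, $V(\pi, C_1\compn C_2) \ge V(\pi, C_i)$ for $i=1,2$. For $i=1$, I would write
\begin{align*}
V(\pi, C_1\compn C_2)
&= \sum_{y_1\in\Y_1}\sum_{y_2\in\Y_2}\max_{x\in\X}\, \pi[x]\,C_1[x,y_1]\,C_2[x,y_2] \\
&\ge \sum_{y_1\in\Y_1}\max_{x\in\X}\, \pi[x]\,C_1[x,y_1]\sum_{y_2\in\Y_2} C_2[x,y_2] \\
&= \sum_{y_1\in\Y_1}\max_{x\in\X}\, \pi[x]\,C_1[x,y_1] \;=\; V(\pi,C_1),
\end{align*}
where the inequality comes from pulling the nonnegative factor $\pi[x]C_1[x,y_1]$ inside the max and bounding a sum of maxima from below by a max of sums, and the penultimate equality uses stochasticity of the row $C_2[x,\cdot]$. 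The argument for $C_2$ is symmetric. Taking $-\log$ and using that $\log$ is monotone gives $\Hinf(\pi,C_1\compn C_2)\le \Hinf(\pi,C_i)$, hence (2).

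Part (3) is then an immediate consequence of (2): by the definition of min-entropy leakage,
\[
\Iinf(\pi,C_1\compn C_2) = \Hinf(\pi)-\Hinf(\pi,C_1\compn C_2)\ge \Hinf(\pi)-\min\{\Hinf(\pi,C_1),\Hinf(\pi,C_2)\},
\]
and the right-hand side equals $\max\{\Iinf(\pi,C_1),\Iinf(\pi,C_2)\}$. I do not expect any of these steps to be difficult; the only slightly delicate point is the ``sum of maxima vs.\ max of sums'' manipulation in part (2), which must be applied carefully so that the factor pulled outside the max does not depend on the summation index.
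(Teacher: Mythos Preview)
Your proof is correct. For parts (1) and (4), your approach is identical to the paper's: instantiate Theorems~\ref{thm:g-vul} and~\ref{thm:g-leak} at $g=g_{id}$, noting that $\HgMin(\pi)$ collapses to $\HMin(\pi)$.

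For parts (2) and (3), however, you actually do more than the paper does. The paper's proof simply asserts that ``the claims follow from Theorems~\ref{thm:g-vul} and~\ref{thm:g-leak},'' but those theorems give only a \emph{lower} bound on the posterior $g$-entropy and an \emph{upper} bound on the $g$-leakage---the opposite directions from (2) and (3). So the paper's sketch does not literally cover (2) and (3); it presumably treats them as folklore (they express the standard monotonicity fact that observing an extra output cannot decrease vulnerability, i.e.\ that each $C_i$ is a post-processing of $C_1\compn C_2$). Your direct computation via the elementary inequality $\sum_{y_2}\max_x f(x,y_2)\ge \max_x\sum_{y_2} f(x,y_2)$, followed by the row-stochasticity of $C_2$, is a clean and self-contained way to establish these parts. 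Your derivation of (3) from (2) is also correct. In short, your write-up is strictly more complete than the paper's at this point.
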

%
%\begin{proof}
\proof
When $w = x$, $\pi[x]g_{id}(w, x) = \pi[x]$ holds .
Hence $H_{g_{id}}^{min}(X) = \HMin(X)$.
By definition,  $H_{g_{id}}(\pi, C_1 \compn C_2) = \Hinf(\pi, C_1 \compn C_2)$,
$H_{g_{id}}(\pi, C_1) = \Hinf(\pi, C_1)$, $H_{g_{id}}(\pi, C_2) = \Hinf(\pi, C_2)$
and $H_{g_{id}}(\pi) = \Hinf(\pi)$.
Then the claims follow from Theorems~\ref{thm:g-vul} and~\ref{thm:g-leak}.
\qed
%\end{proof}

The min-entropy leakage coincides with the min-capacity when the prior $\pi$ is uniform.
If $\pi$ is uniform we have $\HMin(\pi) = \Hinf(\pi)$. Thus we re-obtain the following result from the literature~\cite{espinoza2013min}:
\begin{cor} \label{cor:min-cap}
For any two channels $(\X,\Y_{1},C_{1})$ and $(\X,\Y_{2},C_{2})$,
\[
\Cinf(C_{1} \compn C_{2}) \le \Cinf(C_{1}) + \Cinf(C_{2})
\texttt{.}
\]
\end{cor}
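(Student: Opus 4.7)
The plan is to derive this subadditivity of min-capacity directly from Corollary~\ref{cor:vul}(4), exploiting the well-known fact that for a channel with finite input space $\X$, the min-capacity is attained at the uniform prior $\pi_u$ on $\X$. Since $C_1 \compn C_2$ is itself a channel with input space $\X$, this fact applies to it as well, giving $\Cinf(C_1 \compn C_2) = \Iinf(\pi_u, C_1 \compn C_2)$.

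First, I would instantiate Corollary~\ref{cor:vul}(4) with $\pi = \pi_u$, obtaining
\[
\Iinf(\pi_u, C_1 \compn C_2) \le \Iinf(\pi_u, C_1) + \Iinf(\pi_u, C_2) + \HMin(\pi_u) - \Hinf(\pi_u).
\]
The key observation is that at the uniform prior the correction term vanishes: indeed $\pi_u[x] = 1/|\X|$ for every $x \in \X$, so both $\HMin(\pi_u) = -\log(1/|\X|) = \log|\X|$ and $\Hinf(\pi_u) = -\log \max_x \pi_u[x] = \log|\X|$, whence $\HMin(\pi_u) - \Hinf(\pi_u) = 0$.

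Next, for each $i \in \{1,2\}$, the definition of min-capacity as a supremum over priors gives $\Iinf(\pi_u, C_i) \le \Cinf(C_i)$. Chaining these inequalities yields
\[
\Cinf(C_1 \compn C_2) = \Iinf(\pi_u, C_1 \compn C_2) \le \Cinf(C_1) + \Cinf(C_2),
\]
which is the desired bound.

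The only non-trivial input is the fact that the uniform prior attains min-capacity, which is a standard result in the literature on min-entropy leakage and can be cited. Beyond that, no real obstacle arises: the whole argument reduces to observing that the extra term $\HMin(\pi) - \Hinf(\pi)$ in Corollary~\ref{cor:vul}(4) collapses to zero precisely at the capacity-achieving prior. Note also that, unlike the parallel composition with distinct inputs (Corollary~\ref{cor:disjoint-min-cap}), here we only obtain an inequality rather than an equality, which is consistent with the examples of strict inequality — e.g.\ Example~\ref{eg:g-leak}, where leakage can grow under repeated observations on a shared secret — so one should not expect to strengthen the argument to an equality.
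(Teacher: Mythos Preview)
Your proposal is correct and follows essentially the same route as the paper: instantiate Corollary~\ref{cor:vul}(4) at the uniform prior, observe that $\HMin(\pi_u)=\Hinf(\pi_u)=\log|\X|$ so the correction term vanishes, and use the standard fact that min-capacity is realized at the uniform prior. The paper states this more tersely (and uses the equality $\Iinf(\pi_u,C_i)=\Cinf(C_i)$ rather than your weaker inequality, though either suffices), but the argument is the same.
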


The following  is an example of the above inequality.
\begin{exa} \label{eg:min-cap} 
Consider the channel $(\X, \Y, C)$ shown in Example~\ref{eg:g-leak}.
Let $\pi$ be the uniform prior on $\X$.
Then $\Hinf(\pi) = 1$, $\Hinf(\pi, C) = \Hinf(\pi, C \compn C) \approx 0.152$.
Hence $\Cinf(C \compn C) = \Hinf(\pi) - \Hinf(\pi, C \compn C) \approx 0.848$ while 
$\Cinf(C) + \Cinf(C) \approx 1.696$.
\end{exa}

%
%%%%%%%%%%%%%%%%%%%%%%%%%%%%%%%%%%%%%
% Section for Improving the $g$-Leakage Bounds by Input Approximation
%\input{input-approx}
%
\section{Improving Leakage Bounds by Input Approximation}
\label{sec:input-approx}
The compositionality results shown in the previous sections may not give good bounds when the prior is far from the uniform distribution, as illustrated in Example~\ref{eg:g-leak}.
In particular, probabilities that are closer to $0$ in priors make our leakage bounds much worse.

Since such small probabilities do not affect true $g$-leakage values much, they can be removed from the priors while this may cause little error on $g$-leakage values.
In the following we present a way of improving bad $g$-leakage bounds by removing small probabilities from priors.
We call it \emph{input approximation}  technique. 
The idea of removing small probabilities is reminiscent of the notion of \emph{smooth entropy}~\cite{Cachin97},
although the motivation and technicalities are different.
For simplicity we will only show the case of min-entropy leakage, i.e., when $g$ is the identity gain function.
The summary of the results are shown in Table~\ref{table:compositionality:min-leak-approx}.

\begin{table}[t]
\newcommand{\bhline}[1]{\noalign{\hrule height #1}}
\begin{center}
\begin{small}
  \def\arraystretch{1.1}
 \begin{tabular}{|l|c|c|c|}
 \hline
 Composition & Measure				& Lower bound		& Upper bound \\[0.4ex] \bhline{0.3mm}
 ~with distinct	& min-entropy leakage
 	& Theorem~\ref{thm:comp-approx-MEL-joint-input}& Theorem~\ref{thm:comp-approx-MEL-joint-input} \\[0.4ex] \cline{3-4}
 ~input $\compd$&\hspace{11ex}\scriptsize{$n$-ary composition}
 	& Proposition~\ref{cor:comp-approx-MEL-joint-input-n-channels} & Proposition~\ref{cor:comp-approx-MEL-joint-input-n-channels}
 \\[0.4ex] \hline
 ~with shared & min-entropy leakage
 	& {---} & Theorem~\ref{thm:comp-approx-MEL-shared-input}  \\[0.4ex] \cline{3-4}
 ~input $\compn$&\hspace{11ex}\scriptsize{$n$-ary composition}
 	& {---} & Proposition~\ref{cor:comp-approx-MEL-shared-input-n-channels}  \\[0.4ex] \hline
 \end{tabular}
 \caption{Compositionality results for min-entropy leakage of black-box systems obtained by input approximation.}%\vspace{-7.0ex}
 \label{table:compositionality:min-leak-approx}
\end{small}
\end{center}
\end{table}

\subsection{Bounds for White-Box Channels}
\label{subsec:known-channel}
We first consider the case in which an analyst knows the channel components. 
Given a prior $\pi$ on $\X$ we construct a sub-probability distribution of $\pi$ from which small probabilities are removed as follows.
Let $\X'$ be a non-empty proper subset of $\X$ such that 
\[
\max_{x' \in \X'} \pi[x'] \le \min_{x \in \X\setminus\X'} \pi[x]
\texttt{.}
\]
Then $\!\displaystyle\max_{x \in \X\setminus\X'} \pi[x] = \max_{x \in \X} \pi[x]$.
Let $\displaystyle \epsilon =\hspace{-1ex}\sum_{x' \in \X'}\hspace{-1ex} \pi[x']$.
We define a function $\pi|_{\X\setminus \X'} \colon \X \rightarrow [0, 1]$ by:
\[
\pi|_{\X\setminus \X'}[x] = \left\{
\begin{array}{lll}
0 &\quad&\mbox{if $x \in \X'$}\\
\pi[x] &&\mbox{otherwise}
\end{array}
\right.
\]
Then $\pi|_{\X\setminus \X'}$ is a sub-probability distribution, as it does not sum up to $1$; i.e., $\displaystyle\sum_{x \in \X} \pi|_{\X\setminus \X'}[x]  \allowbreak < 1$.
However, the results in previous sections do not require $\pi$ to be a probability distribution, and neither do the definitions of entropy and leakage.  
Errors caused by the above input approximation are bounded as follows:
\begin{thm} \label{thm:approx-g-leak}
For any prior $\pi$ on $\X$, channel $(\X, \Y, C)$, $\epsilon \ge 0$ and the set $\X'$ defined as above,
\[
 \Iinf(\pi|_{\X\setminus \X'}, C) \le \Iinf(\pi, C) \le\!
 \Iinf(\pi|_{\X\setminus \X'}, C) + \log\left( 1 + \frac{\epsilon}{V(\pi|_{\X\setminus \X'}, C)} \right)
\texttt{.}
\]
\end{thm}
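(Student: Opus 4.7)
The plan is to work directly with vulnerabilities, since $\Iinf(\pi,C) = \log\bigl(V(\pi,C)/V(\pi)\bigr)$, and then take logarithms at the end. Write $\pi' = \pi|_{\X \setminus \X'}$ for brevity.

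First I would handle the prior vulnerability. By the defining hypothesis on $\X'$, namely $\max_{x' \in \X'} \pi[x'] \le \min_{x \in \X \setminus \X'} \pi[x]$, the maximum of $\pi$ is attained on $\X \setminus \X'$, so $V(\pi) = \max_{x \in \X \setminus \X'} \pi[x] = V(\pi')$. This reduces the task to comparing $V(\pi,C)$ and $V(\pi',C)$.

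Next I would prove the two-sided bound
\[
V(\pi', C) \;\le\; V(\pi, C) \;\le\; V(\pi', C) + \epsilon.
\]
The lower inequality is immediate since $\pi'[x] \le \pi[x]$ pointwise, so the inner maxima can only grow. For the upper inequality, fix an output $y$ and let $x_y^\star$ be an input achieving $\max_{x \in \X} \pi[x]C[x,y]$. If $x_y^\star \in \X \setminus \X'$, then $\max_{x} \pi[x]C[x,y] \le \max_{x \in \X \setminus \X'} \pi'[x]C[x,y]$. Otherwise $x_y^\star \in \X'$ and $\max_{x} \pi[x]C[x,y] \le \sum_{x' \in \X'} \pi[x']C[x',y]$. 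In either case,
\[
\max_{x \in \X} \pi[x]C[x,y] \;\le\; \max_{x \in \X \setminus \X'} \pi'[x]C[x,y] \;+\; \sum_{x' \in \X'} \pi[x']C[x',y].
\]
Summing over $y \in \Y$ and exchanging the order of summation in the last term gives $V(\pi,C) \le V(\pi',C) + \sum_{x' \in \X'} \pi[x'] = V(\pi',C) + \epsilon$.

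Finally I would assemble the leakage bounds. From $V(\pi) = V(\pi')$ and $V(\pi',C) \le V(\pi,C)$, taking logarithms yields $\Iinf(\pi', C) \le \Iinf(\pi, C)$. From $V(\pi) = V(\pi')$ and $V(\pi,C) \le V(\pi',C)+\epsilon$, we get
\[
\Iinf(\pi, C) \;\le\; \log\!\frac{V(\pi', C) + \epsilon}{V(\pi')} \;=\; \Iinf(\pi', C) + \log\!\left(1 + \frac{\epsilon}{V(\pi', C)}\right),
\]
which is exactly the stated upper bound. There is no substantive obstacle; the only subtle point is the case split in the proof of $V(\pi,C) \le V(\pi',C)+\epsilon$, and one should note that the argument is valid even though $\pi'$ is a sub-distribution, because the definitions of $V$ and $\Iinf$ only rely on non-negativity and do not require the weights to sum to one.
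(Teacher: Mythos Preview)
Your proof is correct and follows essentially the same approach as the paper: establish $V(\pi)=V(\pi')$, prove the sandwich $V(\pi',C)\le V(\pi,C)\le V(\pi',C)+\epsilon$, and take logarithms. The only cosmetic difference is that you justify the key inequality $\max_{x}\pi[x]C[x,y]\le \max_{x\in\X\setminus\X'}\pi'[x]C[x,y]+\sum_{x'\in\X'}\pi[x']C[x',y]$ via an explicit case split on the location of the maximizer, whereas the paper simply asserts it directly; the content is identical.
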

%
%\begin{proof}
\proof
\begin{align*}
%\[
%\begin{array}{rl}
\phantom{=} V(\pi, C)
&= \displaystyle \sum_{y \in \Y} 
\max_{x \in \X} \pi[x] C[x, y] \\
&\le \displaystyle \sum_{y \in \Y} \biggl(
\max_{x \in \X\setminus\X'} \pi[x] C[x, y]
+ \sum_{x' \in \X'} \pi[x'] C[x', y] \biggr) \\
&= \displaystyle \sum_{y \in \Y} \biggl(
\max_{x \in \X\setminus\X'} \pi[x] C[x, y] \biggr)
+ \sum_{x' \in \X'} \biggl( \pi[x'] \sum_{y \in \Y} C[x', y] \biggr) \\
&= \displaystyle \sum_{y \in \Y} \biggl(
\max_{x \in \X\setminus\X'} \pi[x] C[x, y] \biggr)
+ \epsilon \\[0.5ex]
&= V(\pi|_{\X\setminus \X'}, C) + \epsilon.
%\end{array}
%\]
\end{align*}
By $\displaystyle \max_{x \in \X\setminus\X'} \pi[x] = \max_{x \in \X} \pi[x]$,
we have $V(\pi|_{\X\setminus \X'}) = V(\pi)$.
Then
\begin{align*}
%\[
%\begin{array}{rl}
\Iinf(\pi, C)
&= \displaystyle \log\frac{V(\pi, C)}{V(\pi|_{\X\setminus \X'})} \\[0.0ex]
& \le \displaystyle \log\frac{V(\pi|_{\X\setminus \X'}, C) + \epsilon}{V(\pi|_{\X\setminus \X'})} \\[0.0ex]
& = \displaystyle \log\left( \frac{V(\pi|_{\X\setminus \X'}, C)}{V(\pi|_{\X\setminus \X'})} \cdot \frac{V(\pi|_{\X\setminus \X'}, C) + \epsilon}{V(\pi|_{\X\setminus \X'}, C)}\right) \\[0.0ex]
& = \displaystyle \Iinf(\pi|_{\X\setminus \X'}, C) + \log\left( 1 + \frac{\epsilon}{V(\pi|_{\X\setminus \X'}, C)}\right).
%\end{array}
%\]
\end{align*}
On the other hand, by $V(\pi|_{\X\setminus \X'}, C) \le V(\pi, C)$,\, we obtain $\Iinf(\pi|_{\X\setminus \X'}, C) \le \Iinf(\pi, C)$.
\qed
%\end{proof}

So, the idea is to remove very small probabilities in priors and then apply our compositional approach to derive bounds illustrated in a previous section. 
This will allow us to obtain better bounds, as small probabilities affect dramatically the precision of our approach, while removing them produces only relatively small errors as shown in Theorem~\ref{thm:approx-g-leak}.

More precisely, the technique works as follows.
Consider a channel $C$ composed of $C_1$ and $C_2$ in parallel and a joint prior $\pi$ on $\X_1 \times \X_2$.
We take $\X_1 \times \X_2$ as $\X$ in the input approximation procedure and Theorem~\ref{thm:approx-g-leak}.
To apply our compositional approach, 
we  take a $\X' \subseteq \X_1 \times \X_2$ so that $\pi|_{\X\setminus \X'}$ is jointly supported.
Then we apply Corollary~\ref{cor:disjoint-min-entropy-leak} to obtain a lower and an upper bound for $\Iinf(\pi|_{\X\setminus \X'}, C)$. Finally we apply Theorem~\ref{thm:approx-g-leak} to obtain bounds for the original $\Iinf(\pi, C)$.

\begin{exa} \label{eg:input-approx:min-entropy} 
\begin{figure}[h]
%\begin{wrapfigure}[5]{r}{0.34\linewidth}
\begin{rm}
\begin{tabular}{|l||c|c|c|} 
\hline
		& $y_0$ 	& $y_1$ 	& $y_2$ \\ \hline \hline
$x_0$	& 0.50	& 0.23	& 0.27 \\ \hline
$x_1$	& 0.20	& 0.40	& 0.40 \\ \hline
$x_2$	& 0.21	& 0.43	& 0.36 \\ \hline
\end{tabular}
\caption{Channel matrix.}
\label{fig:channel-matrix}
\end{rm}
%\end{wrapfigure}
\end{figure}
Consider the channel $(\X, \Y, C)$ for $\X = \{x_0, x_1, x_2\}$, $\Y = \{y_0, y_1, y_2\}$ 
and $C$ is given in Figure~\ref{fig:channel-matrix}.
We assume the prior $\pi$ such that $\pi(x_0) = 0.01$,  $\pi(x_1) = 0.49$ and $\pi(x_2) = 0.50$, is shared among channels.
Then the min-entropy leakage of the channel $C^{10}$ composed of ten $C$'s in parallel is $0.1319$, while our upper bound is $0.7444$ when $\epsilon = 0.01$.
On the other hand, the upper bound obtained  using min-capacity~\cite{barthe2011information}   is $4.114$, which is much larger than ours.
\end{exa}
\subsection{Bounds for Channels Composed of Black-Box Channels}
\label{subsec:unknown-channel}
In some situations an analyst may not know the channel matrices $C_1$, $C_2$ and therefore cannot calculate $\Iinf(\pi|_{\X\setminus \X'}, C_i)$ or $V(\pi|_{\X\setminus \X'}, C_i)$ (necessary to apply Corollary~\ref{cor:disjoint-min-entropy-leak}), while he may know the information leakages  $\Iinf(\pi_1, C_1)$ and $\Iinf(\pi_2, C_2)$.
Our input approximation technique allows us to obtain bounds also in this case, although less precise than in the case of known channels.
Hereafter we let $\pi' = \pi|_{\X\setminus \X'}$.
We obtain the following from Theorem~\ref{thm:approx-g-leak}:
\begin{thm} \label{thm:comp-approx-MEL-joint-input}
For any jointly supported prior $\pi$ on $\X_1 \compd \X_2$, two channels $(\X_1, \Y_1, C_1)$ and $(\X_2, \Y_2, C_2)$, $\epsilon > 0$ and the sub-distribution $\pi'$ defined as above, 
\begin{align*}
 \Iinf(\pi, C_1 \compd C_2)
 &\ge
 \Iinf(\pi_1, C_1) + \Iinf(\pi_2, C_2) \\
 &\hspace{2ex} - \log \frac{ \Mmax{\infty,\pi'}}{ \Mmin{\infty,\pi'} }
  - \log\!\frac{V(\pi_1, C_1)}{V(\pi_1, C_1) - \epsilon}
  - \log\!\frac{V(\pi_2, C_2)}{V(\pi_2, C_2) - \epsilon} \\[1ex]
 \Iinf(\pi, C_1 \compd C_2)
 &\le
  \Iinf(\pi_1, C_1) + \Iinf(\pi_2, C_2) \\
 &\hspace{2ex} + \log\!\frac{ \Mmax{\infty,\pi'}}{ \Mmin{\infty,\pi'} }
  + \log\!\frac{ \max( V(\pi_1, C_1), V(\pi_2, C_2) }{ \max( V(\pi_1, C_1), V(\pi_2, C_2) ) - \epsilon}
\texttt{.}
\end{align*}
\end{thm}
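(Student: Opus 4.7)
\emph{Plan.} I would sandwich $\Iinf(\pi, C_1 \compd C_2)$ between the leakages computed at the thinned sub-distribution $\pi' = \pi|_{\X \setminus \X'}$, which by the choice of $\X'$ is jointly supported and hence amenable to Corollary~\ref{cor:disjoint-min-entropy-leak}. The proof then proceeds in three steps: (i) use Theorem~\ref{thm:approx-g-leak} to pass from $\pi$ to $\pi'$ on the composed channel, (ii) use Corollary~\ref{cor:disjoint-min-entropy-leak} to decompose $\Iinf(\pi', C_1 \compd C_2)$ into marginal leakages at $\pi'_1, \pi'_2$, and (iii) transfer those marginal leakages to the advertised leakages at $\pi_1, \pi_2$ via a mass-defect estimate for vulnerabilities.

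\emph{Executing the plan.} Step (i) immediately gives the sandwich $\Iinf(\pi', C_1 \compd C_2) \le \Iinf(\pi, C_1 \compd C_2) \le \Iinf(\pi', C_1 \compd C_2) + \log(1 + \epsilon/V(\pi', C_1 \compd C_2))$. To turn the residual logarithm into the stated fourth summand of the upper bound, I would observe that a joint adversary can always ignore one component, so swapping sum and max and using $\sum_{y_j} C_j[x_j, y_j] = 1$ yields $V(\pi', C_1 \compd C_2) \ge \max(V(\pi'_1, C_1), V(\pi'_2, C_2))$; combining this with the mass-defect estimate $V(\pi'_i, C_i) \ge V(\pi_i, C_i) - \epsilon$ (proved exactly as in Theorem~\ref{thm:approx-g-leak}) lower-bounds the denominator by $\max(V(\pi_1, C_1), V(\pi_2, C_2)) - \epsilon$ and produces the desired $\log\!\frac{\max(V(\pi_1, C_1), V(\pi_2, C_2))}{\max(V(\pi_1, C_1), V(\pi_2, C_2)) - \epsilon}$. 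Step (ii) applies Corollary~\ref{cor:disjoint-min-entropy-leak} at $\pi'$, contributing the $\pm \log(\Mmax{\infty,\pi'} / \Mmin{\infty,\pi'})$ term. Step (iii) combines $V(\pi'_i, C_i) \ge V(\pi_i, C_i) - \epsilon$ with $V(\pi'_i) \le V(\pi_i)$ to obtain $\Iinf(\pi'_i, C_i) \ge \Iinf(\pi_i, C_i) - \log\!\frac{V(\pi_i,C_i)}{V(\pi_i,C_i) - \epsilon}$, which yields the two $\log\!\frac{V(\pi_i,C_i)}{V(\pi_i,C_i) - \epsilon}$ summands in the lower bound. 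Chaining the upper (resp.\ lower) ends of the three estimates gives the two inequalities.

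\emph{Main obstacle.} The trickiest point is the transfer in step (iii), since the marginal $\pi'_i$ of the thinned joint prior is not itself of the form $\pi_i|_{\X_i \setminus \X'_i}$; one has to control the simultaneous perturbations of $V(\pi_i, C_i)$ and $V(\pi_i)$ and check that the resulting error aligns with the single logarithmic correction stated, without double-counting. The argument relies on the fact that $\X'$ consists of lowest-probability entries of $\pi$ so that the joint prior vulnerability is preserved ($V(\pi') = V(\pi)$), and that the posterior vulnerability perturbation is bounded by $\epsilon$ on each component via the same sum-of-rows calculation that appears in the proof of Theorem~\ref{thm:approx-g-leak}.
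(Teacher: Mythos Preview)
Your proposal is correct and follows essentially the same three-step route as the paper: apply Theorem~\ref{thm:approx-g-leak} to pass from $\pi$ to $\pi'$ on the composed channel, decompose via Corollary~\ref{cor:disjoint-min-entropy-leak} at $\pi'$, and then transfer the marginal leakages from $\pi'_i$ back to $\pi_i$. The only cosmetic difference is in how the residual $\log(1+\epsilon/V(\pi',C_1\compd C_2))$ is controlled: the paper bounds $V(\pi',C_1\compd C_2)\ge V(\pi,C_1\compd C_2)-\epsilon\ge\max_i V(\pi_i,C_i)-\epsilon$, whereas you go via $V(\pi',C_1\compd C_2)\ge\max_i V(\pi'_i,C_i)\ge\max_i V(\pi_i,C_i)-\epsilon$; both reach the same term. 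Your identification of step~(iii) as the delicate point is apt; note that for the \emph{upper} inequality you still need $\Iinf(\pi'_i,C_i)\le\Iinf(\pi_i,C_i)$, which the paper obtains by invoking Theorem~\ref{thm:approx-g-leak} on each marginal --- your write-up only spells out the lower-bound direction of this transfer, so make that step explicit when you expand the sketch.
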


\begin{proof}
% \proof
By the first claim in the proof for Theorem~\ref{thm:approx-g-leak},  
$V(\pi', C_1 \compd C_2) \ge V(\pi, C_1 \compd C_2) - \epsilon \ge
 \max( V(\pi_1, C_1), V(\pi_2, C_2) ) - \epsilon$.
Then
\begin{align*}
\Iinf(\pi, C_1 \compd C_2)
 \le\,& 
\Iinf(\pi', C_1 \compd C_2) +
{\textstyle \log\left( 1 + \frac{\epsilon}{V(\pi', C_1 \compd C_2)} \right)}
\\ \le\,& 
\Iinf(\pi', C_1 \compd C_2) +
{\textstyle \log\left( 1 + \frac{\epsilon}{ \max( V(\pi_1, C_1), V(\pi_2, C_2) ) - \epsilon} \right)}
\\ \le\,& 
\Iinf(\pi'_1, C_1) + \Iinf(\pi'_2, C_2) +
\log \frac{ \Mmax{\infty,\pi'} }{ \Mmin{\infty,\pi'} } +
{\textstyle \log\left( \frac{ \max( V(\pi_1, C_1), V(\pi_2, C_2) }{ \max( V(\pi_1, C_1), V(\pi_2, C_2) ) - \epsilon} \right)}
\\&&\hspace{-28.0ex}
\mbox{(By Corollary \ref{cor:disjoint-min-entropy-leak})}
\\ \le\,& 
\Iinf(\pi_1, C_1) + \Iinf(\pi_2, C_2) +
\log \frac{ \Mmax{\infty,\pi'} }{ \Mmin{\infty,\pi'} } +
{\textstyle \log\left( \frac{ \max( V(\pi_1, C_1), V(\pi_2, C_2) }{ \max( V(\pi_1, C_1), V(\pi_2, C_2) ) - \epsilon} \right)}
\\&&\hspace{-28.0ex}
\mbox{(By Theorem~\ref{thm:approx-g-leak})}
\end{align*}

The other inequality is shown as follows.
\begin{align*}
\Iinf(\pi, C_1 \compd C_2)
 \ge\, &
\Iinf(\pi', C_1 \compd C_2)
&\hspace{-31.0ex}\!\mbox{(By Theorem~\ref{thm:approx-g-leak})}
\\ \ge\, &
\Iinf(\pi'_1, C_1) + \Iinf(\pi'_2, C_2) -
\log\!\frac{ \Mmax{\infty,\pi'} }{ \Mmin{\infty,\pi'} }
&\hspace{-31.0ex}\!\mbox{(By Corollary~\ref{cor:disjoint-min-entropy-leak})}
\\ \ge\, & 
\Iinf(\pi_1, C_1) + \Iinf(\pi_2, C_2) -
\log\!\frac{ \Mmax{\infty,\pi'} }{ \Mmin{\infty,\pi'} }\! -
{\textstyle \log\!\left( \frac{V(\pi', C_1) + \epsilon}{V(\pi', C_1)} \right)}\! -
{\textstyle \log\!\left( \frac{V(\pi', C_2) + \epsilon}{V(\pi', C_2)} \right)}
\\&
&\hspace{-31.0ex}\!\mbox{(By Theorem~\ref{thm:approx-g-leak})}
\\ \ge\, & 
\Iinf(\pi_1, C_1) + \Iinf(\pi_2, C_2) -
\log\!\frac{ \Mmax{\infty,\pi'} }{ \Mmin{\infty,\pi'} }\! -
{\textstyle \log\!\left( \frac{V(\pi, C_1)}{V(\pi, C_1) - \epsilon} \right)}\! -
{\textstyle \log\!\left( \frac{V(\pi, C_2)}{V(\pi, C_2) - \epsilon} \right)} \tag*{\qEd}
\end{align*}
% \qed
\def\popQED{}
\end{proof}

\begin{thm} \label{thm:comp-approx-MEL-shared-input}
For any prior $\pi$ on $\X$, two channels $(\X, \Y_1, C_1)$ and $(\X, \Y_2, C_2)$, $\epsilon > 0$ and the sub-distribution $\pi'$ defined as above,
\begin{align*}
\Iinf(\pi, C_1 \compn C_2)
&\le\, \Iinf(\pi_1, C_1) + \Iinf(\pi_2, C_2) \\[0.5ex]
&~~~~~+ \log\!\frac{ \max(V(\pi_1, C_1), V(\pi_2, C_2)) }{\max(V(\pi_1, C_1), V(\pi_2, C_2)) - \epsilon}
+ \HMin(\pi') - \Hinf(\pi')
\texttt{.}
\end{align*}
\end{thm}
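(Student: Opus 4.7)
The plan is to compose three ingredients in the natural way: the input-approximation inequality of Theorem~\ref{thm:approx-g-leak} applied to the whole composed channel $C_1 \compn C_2$, the shared-input compositionality bound of Corollary~\ref{cor:vul}(4) applied to the sub-distribution $\pi' = \pi|_{\X\setminus \X'}$, and a second application of Theorem~\ref{thm:approx-g-leak} to each component channel $C_i$ in order to translate $\Iinf(\pi', C_i)$ back into $\Iinf(\pi_i, C_i)$. This mirrors the argument used for Theorem~\ref{thm:comp-approx-MEL-joint-input}, except that now the inner compositional step is Corollary~\ref{cor:vul}(4) rather than Corollary~\ref{cor:disjoint-min-entropy-leak}, and the term $\log(\Mmax{\infty,\pi'}/\Mmin{\infty,\pi'})$ is replaced by $\HMin(\pi') - \Hinf(\pi')$.

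Concretely, I would first write, by Theorem~\ref{thm:approx-g-leak},
\[
\Iinf(\pi, C_1 \compn C_2) \le \Iinf(\pi', C_1 \compn C_2) + \log\!\left(1 + \frac{\epsilon}{V(\pi', C_1 \compn C_2)}\right),
\]
then bound the first summand by Corollary~\ref{cor:vul}(4) as $\Iinf(\pi', C_1) + \Iinf(\pi', C_2) + \HMin(\pi') - \Hinf(\pi')$, and finally apply the left inequality of Theorem~\ref{thm:approx-g-leak} componentwise to get $\Iinf(\pi', C_i) \le \Iinf(\pi, C_i) = \Iinf(\pi_i, C_i)$ (recalling that in the shared-input setting $\pi_1 = \pi_2 = \pi$).

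The key technical step — and the only nontrivial part of the argument — is to lower-bound the denominator $V(\pi', C_1 \compn C_2)$ by $\max(V(\pi_1, C_1), V(\pi_2, C_2)) - \epsilon$, so that the error term assumes the form stated in the theorem. I would establish the data-processing style inequality $V(\pi', C_1 \compn C_2) \ge \max(V(\pi', C_1), V(\pi', C_2))$ by observing, for $i=1$ and using $\sum_{y_2} C_2[x, y_2] = 1$, that
\[
V(\pi', C_1) = \sum_{y_1} \max_{x} \sum_{y_2} \pi'[x] C_1[x, y_1] C_2[x, y_2] \le \sum_{y_1, y_2} \max_{x} \pi'[x] C_1[x, y_1] C_2[x, y_2] = V(\pi', C_1 \compn C_2),
\]
and symmetrically for $i=2$. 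Combining this with the bound $V(\pi_i, C_i) \le V(\pi', C_i) + \epsilon$, which is precisely the first displayed chain of inequalities in the proof of Theorem~\ref{thm:approx-g-leak}, yields $V(\pi', C_1 \compn C_2) \ge \max(V(\pi_1, C_1), V(\pi_2, C_2)) - \epsilon$, whence the error term is at most $\log\frac{\max(V(\pi_1, C_1), V(\pi_2, C_2))}{\max(V(\pi_1, C_1), V(\pi_2, C_2)) - \epsilon}$.

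Chaining these three ingredients then produces the claimed upper bound. The main obstacle is the data-processing lower bound on the posterior vulnerability of $C_1 \compn C_2$; once that is in place the rest of the proof is a straightforward mechanical composition of the previously established approximation and compositionality lemmas.
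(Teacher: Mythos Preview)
Your proposal is correct and follows essentially the same route as the paper: apply Theorem~\ref{thm:approx-g-leak} to $C_1 \compn C_2$, use Corollary~\ref{cor:vul}(4) on $\pi'$, and then Theorem~\ref{thm:approx-g-leak} again componentwise. The only cosmetic difference is in justifying $V(\pi', C_1 \compn C_2) \ge \max_i V(\pi_i, C_i) - \epsilon$: the paper (implicitly reusing the argument from the proof of Theorem~\ref{thm:comp-approx-MEL-joint-input}) first bounds $V(\pi', C_1 \compn C_2) \ge V(\pi, C_1 \compn C_2) - \epsilon$ and then applies data processing to $\pi$, whereas you apply data processing to $\pi'$ first and then subtract $\epsilon$ componentwise---note that your data-processing inequality is exactly Corollary~\ref{cor:vul}(2), so you need not reprove it.
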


\begin{proof}
% \proof
Recall that $\pi' = \pi|_{\X\setminus \X'}$.
Then by Theorem~\ref{thm:approx-g-leak} and Corollary~\ref{cor:vul},
\begin{align*}
&
\Iinf(\pi, C_1 \compn C_2)
\\ \le\,& 
\Iinf(\pi', C_1 \compn C_2) +
{\textstyle
\log\left( 1 + \frac{\epsilon}{V(\pi', C_1 \compn C_2)} \right)}
\\ \le\,& 
\Iinf(\pi', C_1 \compn C_2) +
{\textstyle
\log\left( \frac{ \max( V(\pi_1, C_1), V(\pi_2, C_2) ) }{ \max( V(\pi_1, C_1), V(\pi_2, C_2) ) - \epsilon} \right)}
\\ \le\,& 
\Iinf(\pi'_1, C_1) + \Iinf(\pi'_2, C_2) +
\HMin(\pi') - \Hinf(\pi') +
{\textstyle
\log\!\left( \frac{ \max( V(\pi_1, C_1), V(\pi_2, C_2) }{ \max( V(\pi_1, C_1), V(\pi_2, C_2) ) - \epsilon} \right)}
\\ \le\,& 
\Iinf(\pi_1, C_1) + \Iinf(\pi_2, C_2) +
\HMin(\pi') - \Hinf(\pi') +
{\textstyle
\log\!\left( \frac{ \max( V(\pi_1, C_1), V(\pi_2, C_2) }{ \max( V(\pi_1, C_1), V(\pi_2, C_2) ) - \epsilon} \right)} \tag*{\qEd}
\end{align*}
% \qed
\def\popQED{}
\end{proof}

In the right-hand side of these theorems $V(\pi_1, C_1)$ and $V(\pi_2, C_2)$ are calculated from $V(\pi_1)$, $V(\pi_2)$, $\Iinf(\pi_1, C_1)$ and $\Iinf(\pi_2, C_2)$.
So it is sufficient for an analyst to know only $\pi$, $\Iinf(\pi_1, C_1)$ and $\Iinf(\pi_2, C_2)$ to calculate the above leakage bounds.

It is easy to see that these bounds are not as good as those in Section~\ref{subsec:known-channel}. Also they are more sensitive to the choice of $\epsilon$.
If we take a very small $\epsilon$, the input approximation does not improve substantially, as neither $\frac{ \Mmax{\infty, \pi'} }{ \Mmin{\infty, \pi'} }$ nor $\HMin(\pi') - \Hinf(\pi')$ decreases much.
In fact, when $\epsilon = 0$ these theorems coincide with Corollaries~\ref{cor:disjoint-min-entropy-leak} and~\ref{cor:vul}.
If we take a very large $\epsilon$, then the error caused by the input approximation is also very large, while $\frac{ \Mmax{\infty, \pi'} }{ \Mmin{\infty, \pi'} }$ and $\HMin(\pi') - \Hinf(\pi')$ are close to $0$.
In Section~\ref{subsec:eval-rand} we will present experiments on the input approximation and illustrate that we should take $\epsilon$ as a value less than $\max\{ V(\pi_1, C_1), V(\pi_2, C_2) \}$.

\subsection{Extension to $n$-ary Composition}

The input approximation techniques illustrated in Sections~\ref{subsec:known-channel}  and \ref{subsec:unknown-channel} can be extended to $n$-ary composition.
The following results can be proven by induction on $\I$, where the inductive steps are similar to the proofs of Theorems~\ref{thm:comp-approx-MEL-joint-input} and~\ref{thm:comp-approx-MEL-shared-input}, respectively.
In the following $\X'$ is defined as in Section~\ref{subsec:known-channel}, and $\pi' = \pi|_{\X\setminus \X'}$.
\begin{prop} \label{cor:comp-approx-MEL-joint-input-n-channels}
For any channels $(\X_i, \Y_i, C_i)$ with $i\in\I$, $\X = \X_1 \times \ldots \times \X_n$ and any jointly supported prior $\pi$ on $\X$, 
\begin{align*}
 \Iinf\Bigl(\pi,\, \prod_{i\in\I} C_i\Bigr)
 &\ge \displaystyle
 \sum_{i\in\I} \left(
 \Iinf(\pi_i, C_i) -
 \log\Bigl( \frac{ V(\pi_i, C_i) }{V(\pi_i, C_i) - \epsilon} \Bigr) \right) -
 \log \frac{ \Mmax{\infty, \pi'} }{ \Mmin{\infty, \pi'} }
 \\[1.5ex]
 \Iinf\Bigl(\pi,\, \prod_{i\in\I} C_i\Bigr)
 &\le \displaystyle
 \biggl( \sum_{i\in\I}
 \Iinf(\pi_i, C_i) \biggr) +
 \log\!\left( \frac{ \displaystyle\max_{j\in\I}\, V(\pi_j, C_j) }{ \displaystyle\max_{j\in\I}\, V(\pi_j, C_j) - \epsilon} \right) +
  \log \frac{ \Mmax{\infty, \pi'} }{ \Mmin{\infty, \pi'} }
\texttt{.}
\end{align*}
\end{prop}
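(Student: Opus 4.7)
The plan is to proceed by induction on $n = \#\I$, as the paper suggests, with Theorem~\ref{thm:comp-approx-MEL-joint-input} serving as the workhorse at each inductive step. The base case $n=2$ is exactly Theorem~\ref{thm:comp-approx-MEL-joint-input}, so nothing is needed there.

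For the inductive step, let $\I' = \I \setminus \{n\}$, set $D = \prod_{i\in\I'} C_i$, and let $\pi_{\I'}$ denote the marginal of $\pi$ on $\X_1 \times \cdots \times \X_{n-1}$. Since $\pi$ is jointly supported on $\X_1\times\cdots\times\X_n$, the induced joint distribution on $(\X_1\times\cdots\times\X_{n-1})\times\X_n$ is also jointly supported, and each marginal $\pi_{\I'}$, viewed as a distribution on $\X_1\times\cdots\times\X_{n-1}$, remains jointly supported. I would first apply Theorem~\ref{thm:comp-approx-MEL-joint-input} to the binary decomposition $\prod_{i\in\I} C_i = D \compd C_n$ with priors $\pi_{\I'}$ and $\pi_n$; this bounds $\Iinf(\pi,D\compd C_n)$ by a sum of $\Iinf(\pi_{\I'}, D)$ and $\Iinf(\pi_n, C_n)$ together with a binary $\log(\Mmax{\infty,\pi''}/\Mmin{\infty,\pi''})$ ratio and a $\log(\max(V(\pi_{\I'},D),V(\pi_n,C_n))/(\cdot-\epsilon))$ (upper) or sum of $\log(V(\cdot,\cdot)/(V(\cdot,\cdot)-\epsilon))$ (lower) error term. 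Then I would invoke the inductive hypothesis on $\Iinf(\pi_{\I'}, D)$ to replace it by $\sum_{i\in\I'}\Iinf(\pi_i, C_i)$ plus the corresponding $(n-1)$-ary error terms.

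The main work is consolidating what induction produces into the single $\Mmax/\Mmin$ ratio and the single $\log(V/(V-\epsilon))$ term of the stated bound. For the measure ratios, the key observation is that $\Mminpi$ and $\Mmaxpi$ in Definition~\ref{def:input-ratio-n-channels} are themselves defined by a product over $i\in\I$, so the product of the binary ratio (on $(\prod_{i\in\I'}\X_i)\times\X_n$) and the inner $(n-1)$-ary ratio (on $\prod_{i\in\I'}\X_i$) computed from the marginals of $\pi'$ telescopes to the single $n$-ary ratio $\Mmax{\infty,\pi'}/\Mmin{\infty,\pi'}$. For the upper-bound error term, one exploits the monotonicity $V(\pi_{\I'}, D) \ge \max_{j\in\I'} V(\pi_j, C_j)$, so that the binary error term $\log(\max(V(\pi_{\I'},D), V(\pi_n, C_n))/(\cdot-\epsilon))$ is dominated by $\log(\max_{j\in\I} V(\pi_j, C_j)/(\max_{j\in\I} V(\pi_j, C_j)-\epsilon))$, and likewise the inductive error term is dominated by the same expression (since $\log(x/(x-\epsilon))$ is decreasing in $x$, one needs to be careful about which direction to dominate). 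For the lower bound, the sums of $\log(V(\pi_i, C_i)/(V(\pi_i, C_i)-\epsilon))$ terms combine additively across $i\in\I$.

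The hardest point will be this consolidation: the naive induction produces two separate $\log(V/(V-\epsilon))$ contributions in the upper bound and two separate $\Mmax/\Mmin$ ratios, and matching these exactly to the form of the statement requires careful bookkeeping together with the monotonicity of $V(\pi_{\I'}, D)$ under extension of the composition. Routine calculations in the style of the proof of Theorem~\ref{thm:comp-approx-MEL-joint-input}, together with the multiplicative structure of $\Mminpi,\Mmaxpi$ on $\pi'$, then close the induction.
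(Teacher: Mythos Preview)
Your plan has a genuine gap in the ``telescoping'' step. You claim that the binary ratio $\Mmax{\infty,\pi''}/\Mmin{\infty,\pi''}$ (on $(\prod_{i\in\I'}\X_i)\times\X_n$) together with the inner $(n-1)$-ary ratio combine exactly into the single $n$-ary ratio $\Mmax{\infty,\pi'}/\Mmin{\infty,\pi'}$. This is false: although the \emph{pointwise} quotient
\[
\frac{\prod_{i\in\I}\pi'_i[x_i]}{\pi'[\overline{x}]}
\;=\;
\frac{\pi'_{\I'}[x_1,\ldots,x_{n-1}]\,\pi'_n[x_n]}{\pi'[\overline{x}]}
\cdot
\frac{\prod_{i<n}\pi'_i[x_i]}{\pi'_{\I'}[x_1,\ldots,x_{n-1}]}
\]
does factor, minima and maxima of a product do not factor as the product of minima and maxima. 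One only gets
$\Mmin{\infty,\pi'}\ge \Mmin{\infty,\pi',\mathrm{bin}}\cdot\Mmin{\infty,\pi'_{\I'}}$ and
$\Mmax{\infty,\pi'}\le \Mmax{\infty,\pi',\mathrm{bin}}\cdot\Mmax{\infty,\pi'_{\I'}}$,
so the sum of the two $\log(\Mmax{}/\Mmin{})$ terms that your induction produces is in general \emph{larger} than the single $n$-ary term in the stated bound. Your upper bound is therefore weaker than the claimed one, and symmetrically your lower bound is weaker. In addition, iterating Theorem~\ref{thm:comp-approx-MEL-joint-input} introduces a fresh $\log\bigl(V/(V-\epsilon)\bigr)$ contribution at every step, whereas the stated upper bound contains exactly one such term; the monotonicity you invoke controls the \emph{size} of each term but does not make the extra ones vanish. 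There is also the bookkeeping problem that the truncation $\pi''$ you form at the outer binary step and the truncation $(\pi_{\I'})'$ used in the inductive hypothesis are unrelated to the single $n$-ary truncation $\pi'$ appearing in the statement.

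The remedy is not to iterate Theorem~\ref{thm:comp-approx-MEL-joint-input} as a black box but to rerun its \emph{proof} once at the $n$-ary level: apply Theorem~\ref{thm:approx-g-leak} a single time to the full prior $\pi$ to pass to $\pi'$; then invoke the $n$-ary compositionality bound (Proposition~\ref{thm:disjoint-dep-g-leak-k-channels}, specialised to min-entropy) on $\pi'$, which already yields the single $n$-ary $\log\bigl(\Mmax{\infty,\pi'}/\Mmin{\infty,\pi'}\bigr)$; finally use Theorem~\ref{thm:approx-g-leak} on each marginal to replace $\Iinf(\pi'_i,C_i)$ by $\Iinf(\pi_i,C_i)$. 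The ``induction on $\I$'' the paper refers to is the induction already inside Proposition~\ref{thm:disjoint-dep-g-leak-k-channels}, not an outer induction on Theorem~\ref{thm:comp-approx-MEL-joint-input}.
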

\begin{prop} \label{cor:comp-approx-MEL-shared-input-n-channels}
For any channels $(\X, \Y_i, C_i)$ with $i\in\I$ and any prior $\pi$ on $\X$, 
\begin{align*}
 \Iinf(\pi, \compn_{i\in\I} C_i)
  &\le
 \sum_{i\in\I}\!\Iinf(\pi_i, C_i) + (\#\I - 1)\!\cdot\!\Large(
 \HMin(\pi') -
 \Hinf(\pi') \Large) 
 + \log\!\Biggl(\!\frac{ \displaystyle\max_{j\in\I} V(\pi_j, C_j) }{\displaystyle\max_{j\in\I} V(\pi_j, C_j) - \epsilon}\!\Biggr).
\end{align*}
\end{prop}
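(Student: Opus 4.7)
The plan is to apply the input approximation bound of Theorem~\ref{thm:approx-g-leak} just once to the entire composed channel, and to defer the $n$-ary compositional reasoning to Proposition~\ref{thm:g-leak-n-channels}. Concretely, I would first invoke Theorem~\ref{thm:approx-g-leak} with input domain $\X$ and sub-distribution $\pi' = \pi|_{\X \setminus \X'}$ applied to the channel $\compn_{i\in\I} C_i$, obtaining
\[
\Iinf(\pi, \compn_{i\in\I} C_i) \le \Iinf(\pi', \compn_{i\in\I} C_i) + \log\!\left(1 + \frac{\epsilon}{V(\pi', \compn_{i\in\I} C_i)}\right).
\]

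Next, I would instantiate Proposition~\ref{thm:g-leak-n-channels} with the identity gain function $g_{id}$, so that $\Ig$, $\Hg$, and $\HgMin$ specialise to $\Iinf$, $\Hinf$, and $\HMin$, and apply it with the prior $\pi'$ to split the first summand:
\[
\Iinf(\pi', \compn_{i\in\I} C_i) \le \sum_{i\in\I} \Iinf(\pi', C_i) + (\#\I - 1)\bigl(\HMin(\pi') - \Hinf(\pi')\bigr).
\]
Each $\Iinf(\pi', C_i)$ is then dominated by $\Iinf(\pi_i, C_i)$ via the left inequality of Theorem~\ref{thm:approx-g-leak}, using that in the shared-input encoding the marginals $\pi_i$ all coincide with $\pi$ (and $\pi'_i$ with $\pi'$).

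To bound the log-error term, I would iterate Corollary~\ref{cor:vul}(2) to obtain $V(\pi, \compn_{i\in\I} C_i) \ge \max_{j\in\I} V(\pi, C_j) = \max_{j\in\I} V(\pi_j, C_j)$, and combine this with the estimate $V(\pi', \compn_i C_i) \ge V(\pi, \compn_i C_i) - \epsilon$ that appears in the first display of the proof of Theorem~\ref{thm:approx-g-leak}. Together these give $V(\pi', \compn_i C_i) \ge \max_j V(\pi_j, C_j) - \epsilon$, so the log-error is bounded by $\log\!\bigl(\max_j V(\pi_j, C_j)/(\max_j V(\pi_j, C_j) - \epsilon)\bigr)$. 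Assembling the three ingredients produces the claimed inequality.

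The main subtlety, and what makes the literal reading of ``induction on $\I$'' in the hint a little misleading, lies in the placement of the input-approximation step. A direct binary induction that peels off one channel at a time using Theorem~\ref{thm:comp-approx-MEL-shared-input} would accumulate $\#\I - 1$ copies of the log-error term rather than the single one appearing in the statement. Keeping a single log factor therefore requires applying Theorem~\ref{thm:approx-g-leak} only once, at the outermost level, and letting Proposition~\ref{thm:g-leak-n-channels} carry the entire $n$-ary splitting in one shot; the induction is effectively absorbed into the proof of that proposition.
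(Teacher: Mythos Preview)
Your proposal is correct and matches the structure the paper intends: apply Theorem~\ref{thm:approx-g-leak} once to the full composed channel, bound the error term via $V(\pi',\compn_i C_i)\ge V(\pi,\compn_i C_i)-\epsilon\ge\max_j V(\pi_j,C_j)-\epsilon$, then split using the $n$-ary Proposition~\ref{thm:g-leak-n-channels} (with $g=g_{id}$), and finally absorb each $\Iinf(\pi',C_i)$ into $\Iinf(\pi_i,C_i)$ by the left inequality of Theorem~\ref{thm:approx-g-leak}. This is exactly the shape of the proof of Theorem~\ref{thm:comp-approx-MEL-shared-input}, lifted verbatim to $n$ channels.

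Your remark about the phrase ``induction on $\I$'' is well taken: a black-box induction that repeatedly invokes Theorem~\ref{thm:comp-approx-MEL-shared-input} would indeed pile up $\#\I-1$ copies of the logarithmic correction, so the single-log statement forces the input-approximation step to happen once at the top and the induction to live entirely inside Proposition~\ref{thm:g-leak-n-channels}. That is precisely what the paper's terse hint (``inductive steps similar to the proof of Theorem~\ref{thm:comp-approx-MEL-shared-input}'') is pointing at, and you have unpacked it correctly.
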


%
%%%%%%%%%%%%%%%%%%%%%%%%%%%%%%%%%%%%%
% Additional Section using Approximate Channels
%\input{compare}
%
\section{Leakage Bounds by Channel Approximation} \label{sec:channel-approx}
The leakage bounds can be computed more efficiently by using simpler channels that over/under-approximate the original complicated system.
In this section we show how to compute the leakage bounds using channels related by the cascade composition refinement.

\subsection{Comparing Composed Channels}
The comparison of channels from the viewpoint of information leakage has been investigated in several studies such as~\cite{YasuokaT10,Malacaria11,Alvim:12:CSF}.
The \emph{cascade composition refinement} $\sqsubseteq$ is shown to be useful to compare the $g$-leakages of two channels.
\begin{defi} \label{def:comparing-channels} \rm
For any two channels $(\X, \Z, C_1)$ and $(\X, \Y, C_2)$, we write $C_1 \sqsubseteq C_2$ if there exists a channel $(\Y, \Z, C_3)$ such that $C_1[x, z] = C_2[x, y] \cdot C_3[y, z]$ for any $x \in \X$, $y \in \Y$ and $z \in \Z$.
\end{defi}

\begin{rem}
Note that, for the relation $C_1 \sqsubseteq C_2$ to be defined, $C_1$ and $C_2$ need to have the same input domain, but not necessarily the same output domain.
\end{rem}

We obtain the following results.
\begin{prop} \label{prop:comparing-g-leak}
Let $C_1$, $C_2$, $C'_1$ and $C'_2$ be channel matrices.
\begin{enumerate}
\item 
If $C_1 \sqsubseteq C'_1$, then $C_1 \compd C_2 \sqsubseteq C'_1 \compd C_2$.
\item 
If $C_1 \sqsubseteq C'_1$ and $C_2 \sqsubseteq C'_2$, then $C_1 \compd C_2 \sqsubseteq C'_1 \compd C'_2$.
\item 
If $C_1 \sqsubseteq C'_1$, then $C_1 \compn C_2 \sqsubseteq C'_1 \compn C_2$.
\item 
If $C_1 \sqsubseteq C'_1$ and $C_2 \sqsubseteq C'_2$, then $C_1 \compn C_2 \sqsubseteq C'_1 \compn C'_2$.
\end{enumerate}
\end{prop}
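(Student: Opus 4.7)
The plan is to produce, in each case, an explicit witness channel $D$ satisfying the cascade equation. Throughout, I rely on the standard reading of Definition~\ref{def:comparing-channels} in which the cascade equation has an implicit sum over $y$: $C_1[x,z] = \sum_{y\in\Y} C_2[x,y]\,C_3[y,z]$. I will also freely use (and verify in passing) the fact that $\sqsubseteq$ is transitive, since if $A = BD_1$ and $B = CD_2$ as stochastic matrices, then $A = C(D_2 D_1)$ and $D_2 D_1$ is again a channel.

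For part (1), let $D_1\colon \Y_1 \to \Z_1$ be the witness of $C_1 \sqsubseteq C'_1$, and let $\Y_2$ be the output alphabet of $C_2$. I would define a witness $D\colon \Y_1 \times \Y_2 \to \Z_1 \times \Y_2$ by
\[
D[(y_1,y_2),(z_1,y'_2)] \;=\; D_1[y_1,z_1]\cdot \mathbf{1}[y_2 = y'_2],
\]
i.e.\ cascade $D_1$ in the first coordinate and use the identity channel in the second. A quick check confirms that row sums of $D$ are $1$, and plugging it into the cascade equation for $C'_1 \compd C_2$ factors the sum into $\bigl(\sum_{y_1} C'_1[x_1,y_1]D_1[y_1,z_1]\bigr)\cdot C_2[x_2,y'_2] = C_1[x_1,z_1]\,C_2[x_2,y'_2]$, which is exactly $(C_1 \compd C_2)[(x_1,x_2),(z_1,y'_2)]$. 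Part (3) is identical in structure, with the shared input $x$ replacing the pair $(x_1,x_2)$; the same witness $D$ works verbatim because the sum over $y_2$ is collapsed by $\mathbf{1}[y_2=y'_2]$ regardless of whether $C'_1$ and $C_2$ share the input.

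For parts (2) and (4), I would obtain the conclusions either by applying (1) (resp.\ (3)) twice and using transitivity, or, more directly, by taking the witness $D = D_1 \compd D_2$, where $D_i$ witnesses $C_i \sqsubseteq C'_i$. In either composition scheme one computes
\[
\sum_{y_1,y_2}\!C'_1[\cdot,y_1]C'_2[\cdot,y_2]\,D_1[y_1,z_1]D_2[y_2,z_2]
\;=\;\Bigl(\!\sum_{y_1}\! C'_1[\cdot,y_1]D_1[y_1,z_1]\!\Bigr)\!\Bigl(\!\sum_{y_2}\! C'_2[\cdot,y_2]D_2[y_2,z_2]\!\Bigr),
\]
which equals $C_1[\cdot,z_1]\,C_2[\cdot,z_2]$, matching $(C_1\compd C_2)$ or $(C_1\compn C_2)$ in the respective entries. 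The factoring is legitimate because $C'_1 \compd C'_2$ (or $\compn$) is defined as a product of its marginal entries, so no interaction between the two coordinates arises.

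I do not expect any real obstacle; the main thing to be careful about is bookkeeping the input/output alphabets to ensure the witness $D$ has the correct type, and noting that in parts (1) and (3) the output space of the refined composition is $\Z_1 \times \Y_2$ rather than $\Y_1 \times \Y_2$, so the identity-channel factor on $\Y_2$ is essential. Once the witnesses are written down, each verification is a one-line exchange of sums.
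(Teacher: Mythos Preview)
Your proposal is correct and is precisely the explicit unfolding of what the paper means by ``Immediate from Definitions~\ref{def:disjoint-compo},~\ref{def:parallel-compo} and~\ref{def:comparing-channels}.'' The paper gives no further detail, so your witness constructions $D = D_1 \compd \mathrm{Id}$ (for parts (1), (3)) and $D = D_1 \compd D_2$ (for parts (2), (4)) are exactly what one would write to make that one-liner rigorous.
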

%
%\begin{proof}
  \begin{proof}
Immediate from Definitions~\ref{def:disjoint-compo},~\ref{def:parallel-compo}
and~\ref{def:comparing-channels}. \hfill \qEd

\vspace{1em}
The cascade composition refinement and the $g$-leakage ordering coincide: 
Alvim et al.~\cite{Alvim:12:CSF} show that $C_1 \sqsubseteq C_2$ implies $\Ig(\pi, C_1) \le \Ig(\pi, C_2)$ for all priors $\pi$ and gain functions $g$; 
McIver et al.~\cite{McIverMSEM14post} show the converse implication.
\def\popQED{}
\end{proof}
% \end{proof}

\subsection{Approximate Leakage Bounds by Channel Approximation}

In this section we show some refinement results that can be used to reason about lower\slash upper bounds on information leakage of huge channels.
These are straightforward from Proposition~\ref{prop:comparing-g-leak} and compositionality results in previous sections.

Given two complicated systems modeled as channels $(\X_1, \Y_1, C_1)$ and $(\X_2, \Y_2, C_2)$ respectively, we consider simpler systems that are described as channel matrices $C'_1$, $C'_2$, $C''_1$ and $C''_2$ such that $C'_1 \sqsubseteq C_1 \sqsubseteq C''_1$ and $C'_2 \sqsubseteq C_2 \sqsubseteq C''_2$.
The following result shows that an analyzer can learn a lower\slash upper bound on the $g$-leakage $\Ig(\pi, C_1 \compd C_2)$ from the knowledge/computation of $\IgA(\pi_1, C'_1)$ and $\IgB(\pi_2, C'_2)$ or that of $\IgA(\pi_1, C''_1)$ and $\IgB(\pi_2, C''_2)$, without calculating the $g$-leakages of the two original channels $C_1$ and $C_2$:

\begin{prop} \label{cor:compd-channel-approx}
For any jointly supported prior $\pi$ on $\X_1\times\X_2$,
\begin{enumerate}
\item
$\Ig(\pi, C_1 \compd C_2) \ge \IgA(\pi_1, C'_1) + \IgB(\pi_2, C'_2) - \log \frac{\Mmaxpi}{\Mminpi}$.
\item
$\Ig(\pi, C_1 \compd C_2) \le \IgA(\pi_1, C''_1) + \IgB(\pi_2, C''_2) + \log \frac{\Mmaxpi}{\Mminpi}$.
\end{enumerate}
\end{prop}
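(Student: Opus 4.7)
The plan is to combine Theorem~\ref{thm:disjoint-dep-g-leak} with the monotonicity of $g$-leakage under the cascade composition refinement (as recalled after Proposition~\ref{prop:comparing-g-leak}): namely, $D \sqsubseteq D'$ implies $I_{g'}(\pi', D) \le I_{g'}(\pi', D')$ for every prior $\pi'$ and gain function $g'$. Since the statement of Proposition~\ref{cor:compd-channel-approx} contains the same $\log (\Mmaxpi / \Mminpi)$ correction term as Theorem~\ref{thm:disjoint-dep-g-leak}, no additional quantitative estimate is needed: the only work is to replace the inner $g$-leakage terms with their monotonic bounds.

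For part (1), I would first apply Theorem~\ref{thm:disjoint-dep-g-leak}(1) to the original composition, obtaining
\[
\Ig(\pi, C_1 \compd C_2) \ge \IgA(\pi_1, C_1) + \IgB(\pi_2, C_2) - \log\!\frac{\Mmaxpi}{\Mminpi}.
\]
Then, using $C'_1 \sqsubseteq C_1$ and $C'_2 \sqsubseteq C_2$ together with the monotonicity of $g$-leakage under $\sqsubseteq$, I would bound $\IgA(\pi_1, C_1) \ge \IgA(\pi_1, C'_1)$ and $\IgB(\pi_2, C_2) \ge \IgB(\pi_2, C'_2)$ and substitute into the right-hand side to conclude.

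For part (2), the argument is symmetric: apply Theorem~\ref{thm:disjoint-dep-g-leak}(2) to obtain the upper bound in terms of $\IgA(\pi_1, C_1) + \IgB(\pi_2, C_2) + \log(\Mmaxpi/\Mminpi)$, then use $C_1 \sqsubseteq C''_1$ and $C_2 \sqsubseteq C''_2$ and monotonicity to replace these with $\IgA(\pi_1, C''_1)$ and $\IgB(\pi_2, C''_2)$ respectively.

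There is no real obstacle here, since the measures $\Mminpi$ and $\Mmaxpi$ depend only on $\pi$ and the joint gain function $g$ (not on the channels), so they are unaffected by the refinement substitutions. The only thing worth emphasizing in the write-up is that the joint support hypothesis on $\pi$ is needed precisely because it is required by Theorem~\ref{thm:disjoint-dep-g-leak}; the refinement step itself imposes no additional conditions.
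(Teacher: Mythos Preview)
Your proposal is correct and follows essentially the same approach as the paper: apply Theorem~\ref{thm:disjoint-dep-g-leak} to obtain bounds in terms of $I_{g_i}(\pi_i, C_i)$, then use the monotonicity of $g$-leakage under $\sqsubseteq$ (from \cite{Alvim:12:CSF}) componentwise to replace each $C_i$ by $C'_i$ or $C''_i$. The paper's proof additionally cites Proposition~\ref{prop:comparing-g-leak}, but as you implicitly observe, that result is not strictly needed here since the refinement is applied to each factor separately rather than to the composed channel.
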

%
%\begin{proof}
\proof
By~\cite{Alvim:12:CSF,McIverMSEM14post}, $C'_i \sqsubseteq C_i \sqsubseteq C''_i$ iff $\Ig(\pi_i, C'_i) \le \Ig(\pi_i, C_i) \le \Ig(\pi_i, C''_i)$ for each $i = 1, 2$.
Hence the claims are immediate from Proposition~\ref{prop:comparing-g-leak}, Theorem~\ref{thm:disjoint-dep-g-leak}.
\qed
%\end{proof}

Next we present a result for the case of shared input.
Given two channels $(\X, \Y_1, C_1)$ and $(\X, \Y_2, C_2)$ we consider two channel matrices $C'_1$ and $C'_2$ such that $C_1 \sqsubseteq C'_1$ and $C_2 \sqsubseteq C'_2$.
Then an analyzer can learn an upper bound on the $g$-leakage $\Ig(\pi, C_1 \compn C_2)$ from the knowledge on $\IgA(\pi_1, C'_1)$ and $\IgB(\pi_2, C'_2)$, without calculating the $g$-leakages of two original channels $C_1$ and $C_2$:

\begin{prop} \label{cor:compn-channel-approx}
For any jointly supported prior $\pi$ on $\X_1\times\X_2$,
\begin{align*}
\Ig(\pi, C_1 \compn C_2) \le 
%\Ig(\pi, C'_1 \compn C'_2) \le
\IgA(\pi_1, C'_1) + \IgB(\pi_2, C'_2) + \HgMin(\pi)- \Hg(\pi).
\end{align*}
\end{prop}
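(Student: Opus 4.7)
The plan is to combine Theorem~\ref{thm:g-leak} with the monotonicity of $g$-leakage under the cascade composition refinement. Specifically, by the equivalence proved in~\cite{Alvim:12:CSF,McIverMSEM14post} and already invoked in the proof of Proposition~\ref{cor:compd-channel-approx}, the relation $C_i \sqsubseteq C'_i$ implies $\Ig(\pi, C_i) \le \Ig(\pi, C'_i)$ for every prior $\pi$ and every gain function $g$ (so in particular for $g_1$ on $C_1$ and $g_2$ on $C_2$).

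The forward argument proceeds in two steps. First, I would apply Theorem~\ref{thm:g-leak} directly to $C_1 \compn C_2$, which yields
\[
\Ig(\pi, C_1 \compn C_2) \,\le\, \Ig(\pi, C_1) + \Ig(\pi, C_2) + \HgMin(\pi) - \Hg(\pi).
\]
Second, I would upper-bound each of $\Ig(\pi, C_1)$ and $\Ig(\pi, C_2)$ by $\IgA(\pi_1, C'_1)$ and $\IgB(\pi_2, C'_2)$ respectively, via the monotonicity recalled above and the hypotheses $C_1 \sqsubseteq C'_1$ and $C_2 \sqsubseteq C'_2$. Chaining the two inequalities yields the stated bound.

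There is no real obstacle here: the proof is a one-line composition of Theorem~\ref{thm:g-leak} with the refinement monotonicity of $g$-leakage. The only point worth checking is that the ``jointly supported'' hypothesis in the statement is in fact unnecessary for this particular bound, since Theorem~\ref{thm:g-leak} holds for arbitrary priors on $\X$ in the shared-input setting; the hypothesis appears to have been carried over by analogy with Proposition~\ref{cor:compd-channel-approx}, where it is genuinely needed to apply Theorem~\ref{thm:disjoint-dep-g-leak}. Thus a short proof mirroring that of Proposition~\ref{cor:compd-channel-approx}, but invoking Theorem~\ref{thm:g-leak} in place of Theorem~\ref{thm:disjoint-dep-g-leak}, suffices.
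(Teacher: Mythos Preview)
Your proposal is correct and follows essentially the same route as the paper, which simply states that the result is ``immediate from Proposition~\ref{prop:comparing-g-leak} and Theorem~\ref{thm:g-leak}.'' The only cosmetic difference is the order of the two ingredients: the paper first lifts the refinement to the composed channel via Proposition~\ref{prop:comparing-g-leak} and then applies Theorem~\ref{thm:g-leak}, whereas you apply Theorem~\ref{thm:g-leak} first and then use monotonicity on each component; both chains yield the same bound, and your remark that the ``jointly supported'' hypothesis is superfluous here is a valid observation.
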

%
%\begin{proof}
\proof
Immediate from Proposition~\ref{prop:comparing-g-leak} and Theorem~\ref{thm:g-leak}.
\qed
%\end{proof}

%
%%%%%%%%%%%%%%%%%%%%%%%%%%%%%%%%%%%%%
%Compositionality Results on Mutual Information
%\input{MI}
%
\section{Shannon Mutual Information}
\label{sec:compo-MI}
In this section we present compositionality results for Shannon \emph{mutual information} for the completeness of this paper.
These are easily obtained from Shannon information theory by applying it to our formalization of channel compositions.

Recall that, given a prior $\pi$ on $\X$ and a discrete channel $(\X, \Y, C)$, the mutual information is defined as:
\[
I(\pi, C) = \sum_{x \in \X, y \in \Y}
\pi[x] C[x, y] \log\left( \frac{ C[x, y] }{ \sum_{x' \in \X} \pi[x']C[x', y] } \right)
\texttt{.}
\]

Similarly to Corollary~\ref{cor:disjoint-g-leak} for $g$-leakage, the mutual information of channels composed in parallel is the summation of the mutual information of components when the priors are independent.
\begin{thm} \label{thm:MI-comp-disjoint}
For any two priors $\pi_1$ on $\X_1$, $\pi_2$ on $\X_2$ and any two discrete channels $(\X_1, \Y_1, C_1)$ and $(\X_2, \Y_2, C_2)$, we have 
\[
I(\pi_1 \times \pi_2, C_1 \compd C_2) = I(\pi_1, C_1) + I(\pi_2, C_2)
\texttt{.}
\]
\end{thm}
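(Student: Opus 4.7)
The plan is to exploit the fact that, under the independent prior $\pi_1\times\pi_2$ together with the channel $C_1 \compd C_2$, the pairs $(X_1,Y_1)$ and $(X_2,Y_2)$ are fully independent as joint random variables. First I would unfold the definitions of $\pi_1\times\pi_2$ and $C_1\compd C_2$ to see that the four-variable joint distribution factors cleanly as
\[
p(x_1,x_2,y_1,y_2) \,=\, \pi_1[x_1]\,\pi_2[x_2]\,C_1[x_1,y_1]\,C_2[x_2,y_2] \,=\, p(x_1,y_1)\cdot p(x_2,y_2).
\]
Marginalising over $(x_1,x_2)$ then yields $p(y_1,y_2) = p(y_1)\,p(y_2)$, where $p(y_i) = \sum_{x_i}\pi_i[x_i]\,C_i[x_i,y_i]$ is exactly the output distribution of $C_i$ under $\pi_i$.

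Next I would substitute into the definition of mutual information. The argument of the logarithm becomes
\[
\frac{C_1[x_1,y_1]\,C_2[x_2,y_2]}{p(y_1)\,p(y_2)},
\]
which splits additively under $\log$ into a $C_1$-part and a $C_2$-part. Plugging this split into the big quadruple sum and using the factorisation of $p(x_1,x_2,y_1,y_2)$, each of the two resulting sums has the shape (sum over the other channel's coordinates of a probability distribution) $\times$ (the mutual-information summand of the relevant channel). The irrelevant marginals sum to $1$, leaving exactly $I(\pi_1,C_1) + I(\pi_2,C_2)$.

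An equally short alternative route, which I might prefer for elegance, is the entropy decomposition $I(X;Y) = H(X)+H(Y)-H(X,Y)$. Independence of $X_1,X_2$ gives $H(X_1,X_2) = H(X_1)+H(X_2)$; independence of $Y_1,Y_2$ (shown above) gives $H(Y_1,Y_2) = H(Y_1)+H(Y_2)$; and independence of the pairs gives $H(X_1,X_2,Y_1,Y_2) = H(X_1,Y_1) + H(X_2,Y_2)$. Summing the three identities with appropriate signs yields the claim directly.

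There is no real obstacle here: the only thing to be careful about is the bookkeeping when marginalising the irrelevant indices in the direct-expansion approach, and, in the entropy approach, making sure each of the three independence facts is established before invoking additivity. Since both routes rely only on standard Shannon identities and the factorisation inherent in Definitions~\ref{def:disjoint-compo} and the independent prior $\pi_1\times\pi_2$, the proof is essentially mechanical.
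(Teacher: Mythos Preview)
Your proposal is correct and follows essentially the same approach as the paper, which simply states that the result is ``immediate from the independency of the random variables.'' You have merely unpacked that one-line justification into explicit computations (via either direct expansion of the mutual-information sum or the entropy decomposition $I=H(X)+H(Y)-H(X,Y)$), both of which are standard and rest on exactly the independence the paper invokes.
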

%
%\begin{proof}
\proof
Immediate from the independency of the random variables.
\qed
%\end{proof}

In the case of shared input we have inequality like Theorem~\ref{thm:g-leak}.
\begin{thm} \label{thm:MI-comp}
For any prior $\pi$ on $\X$ and any two discrete channels $(\X, \Y_1, C_1)$ and $(\X, \Y_2, C_2)$, we have 
\[
I(\pi, C_1 \compn C_2) \le I(\pi, C_1) + I(\pi, C_2).
\]
\end{thm}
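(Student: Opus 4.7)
The plan is to view the statement as the standard information-theoretic inequality $I(X;Y_1,Y_2) \le I(X;Y_1) + I(X;Y_2)$, and to derive it from the chain rule for mutual information together with the conditional independence of the two channel outputs given the shared input.

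First I would introduce random variables $X \sim \pi$ on $\X$, and $Y_1$, $Y_2$ corresponding to the outputs of $C_1$ and $C_2$ respectively. The crucial observation from Definition~\ref{def:parallel-compo} is that $(C_1 \compn C_2)[x,(y_1,y_2)] = C_1[x,y_1] \cdot C_2[x,y_2]$, which is precisely the statement that $Y_1$ and $Y_2$ are \emph{conditionally independent given $X$}; equivalently, $H(Y_2 \mid X, Y_1) = H(Y_2 \mid X)$. Observe that $I(\pi, C_1 \compn C_2) = I(X; Y_1, Y_2)$ by the definition of Shannon mutual information applied to the joint channel.

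Next I would apply the chain rule $I(X; Y_1, Y_2) = I(X; Y_1) + I(X; Y_2 \mid Y_1)$, reducing the problem to establishing $I(X; Y_2 \mid Y_1) \le I(X; Y_2)$. Expanding,
\begin{align*}
I(X; Y_2 \mid Y_1) &= H(Y_2 \mid Y_1) - H(Y_2 \mid X, Y_1) \\
&= H(Y_2 \mid Y_1) - H(Y_2 \mid X),
\end{align*}
where the second equality uses conditional independence. Since $I(X; Y_2) = H(Y_2) - H(Y_2 \mid X)$, the desired inequality reduces to $H(Y_2 \mid Y_1) \le H(Y_2)$, i.e.\ the standard fact that conditioning does not increase entropy. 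Combining these yields the claim.

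I do not anticipate any real obstacle: the only subtlety is carefully justifying $H(Y_2 \mid X, Y_1) = H(Y_2 \mid X)$ from the definition of $\compn$ given in the paper, which follows directly because the joint conditional distribution factors as $p(y_1,y_2 \mid x) = C_1[x,y_1]\,C_2[x,y_2]$. Everything else is routine manipulation of Shannon entropies.
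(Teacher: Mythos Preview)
Your proposal is correct and follows essentially the same route as the paper: both apply the chain rule $I(X;Y_1,Y_2)=I(X;Y_1)+I(X;Y_2\mid Y_1)$, use the conditional independence built into $\compn$, and conclude via the non-negativity of $I(Y_1;Y_2)$ (which you phrase as ``conditioning does not increase entropy''). The paper merely packages the computation through the interaction-information identity $I(X;Y_1\mid Y_2)-I(X;Y_1)=I(Y_1;Y_2\mid X)-I(Y_1;Y_2)$, but the underlying argument is the same.
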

\begin{proof}
% \proof
This is derived from a property of the \emph{interaction information} as follows.
Let $X$ be the random variable  for the input of the channels $C_1$ and $C_2$, and $Y_1$ and $Y_2$ be the two random variables for the outputs of the two channels.
Then
\[
\begin{array}[b]{rl}
&
I(\pi, C_1 \compn C_2) - I(\pi, C_1) - I(\pi, C_2)
\\ =&
I(X ; Y_1 Y_2) - I(X ; Y_1) - I(X ; Y_2)
\\ =&
I(X ; Y_1 | Y_2) + I(X ; Y_2) - I(X ; Y_1) - I(X ; Y_2)
\\ =&
I(X ; Y_1 | Y_2) - I(X ; Y_1)
\\ =&
I(Y_1 ; Y_2 | X) - I(Y_1 ; Y_2)
\\ =&
- I(Y_1 ; Y_2)
~~~~~~~~~~~~~~~~~~~\hfil\mbox{(by conditional independence)}
\\ \le&
0.
\end{array}\tag*{\qEd}
\]
% \qed
\def\popQED{}
\end{proof}
%%%%%%%%%%%%%%%%%%%%%%%%%%%%%%
%\fi

These results can be extended to the $n$-ary compositions:
\begin{thm} \label{thm:MI-comp-disjoint-n}
For any priors $\pi_i$ on $\X_i$ and any discrete channels $(\X_i, \Y_i, C_i)$ for $i\in\I$, we have 
\[
I\Bigl( \prod_{i\in\I} \pi_i,\, \prod_{i\in\I} C_i \Bigr) = \sum_{i\in\I} I(\pi_i, C_i)
\texttt{.}
\]
\end{thm}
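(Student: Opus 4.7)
\proof
The plan is to proceed by induction on $n = \#\I$, with Theorem~\ref{thm:MI-comp-disjoint} providing both the key inductive tool and (essentially) the base case. Without loss of generality take $\I = \{1, 2, \ldots, n\}$.

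For the base case $n = 1$ the claim is trivial, and for $n = 2$ it is precisely Theorem~\ref{thm:MI-comp-disjoint}. For the inductive step, assume the equality holds for some $n \ge 2$ and consider $n+1$ channels $(\X_i, \Y_i, C_i)$ with priors $\pi_i$. The key observation is that the $\compd$ operation is associative in the sense that
\[
\prod_{i=1}^{n+1} C_i \;=\; \Bigl(\prod_{i=1}^{n} C_i\Bigr) \compd C_{n+1},
\qquad
\prod_{i=1}^{n+1} \pi_i \;=\; \Bigl(\prod_{i=1}^{n} \pi_i\Bigr) \times \pi_{n+1},
\]
both of which follow directly from Definition~\ref{def:disjoint-compo} and the definition of the product distribution (after identifying the nested product space $(\X_1 \times \cdots \times \X_n) \times \X_{n+1}$ with $\X_1 \times \cdots \times \X_{n+1}$, and similarly on the output side).

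Applying Theorem~\ref{thm:MI-comp-disjoint} to the two channels $\prod_{i=1}^n C_i$ and $C_{n+1}$, with independent priors $\prod_{i=1}^n \pi_i$ and $\pi_{n+1}$, yields
\[
I\Bigl(\prod_{i=1}^{n+1} \pi_i,\, \prod_{i=1}^{n+1} C_i\Bigr)
\;=\;
I\Bigl(\prod_{i=1}^{n} \pi_i,\, \prod_{i=1}^{n} C_i\Bigr) + I(\pi_{n+1}, C_{n+1}).
\]
The inductive hypothesis rewrites the first summand as $\sum_{i=1}^{n} I(\pi_i, C_i)$, giving the desired identity.

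The only subtle step is the associativity/identification of the product channels, which is just bookkeeping on indices and does not carry any genuine mathematical obstacle; the mutual information itself is invariant under such relabelling of input and output alphabets since it depends only on the induced joint distribution. Hence there is no real hard part here: the result is an immediate consequence of Theorem~\ref{thm:MI-comp-disjoint} iterated $n-1$ times.
\qed
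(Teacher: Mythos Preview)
Your proof is correct and follows exactly the approach indicated in the paper, namely induction on $\I$ using Theorem~\ref{thm:MI-comp-disjoint} for the inductive step. In fact, your argument spells out in detail (base case, associativity of $\compd$ and of the product prior, and the inductive application of the binary result) what the paper compresses into a single line.
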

%
%\begin{proof}
\proof
By Theorem~\ref{thm:MI-comp-disjoint} and by induction on $\I$.
\qed
%\end{proof}

%
\begin{thm} \label{thm:MI-comp-n}
For any prior $\pi$ on $\X$ and any discrete channels $(\X, \Y_i, C_i)$ for $i\in\I$, we have 
\[
I\large(\pi, \compn_{i\in\I} C_i\large) \le \sum_{i\in\I} I(\pi, C_i)
\texttt{.}
\]
\end{thm}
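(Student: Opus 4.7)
The plan is to prove Theorem~\ref{thm:MI-comp-n} by induction on $n = \#\I$, using Theorem~\ref{thm:MI-comp} as the binary base case and the associativity of $\compn$ to peel off one channel at a time.

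For the base case $n = 1$ the inequality is a trivial equality, and $n = 2$ is exactly Theorem~\ref{thm:MI-comp}. For the inductive step, assume the claim holds for any collection of $n-1$ channels with shared input, and consider channels $C_1, C_2, \ldots, C_n$ all with input domain $\X$. Let $D = \compn_{i=1}^{n-1} C_i$, which has input $\X$ and output $\Y_1 \times \cdots \times \Y_{n-1}$. By Definition~\ref{def:parallel-compo} and a straightforward unfolding, $\compn_{i \in \I} C_i = D \compn C_n$: both sides assign to input $x$ and output $(y_1, \ldots, y_n)$ the product $\prod_{i=1}^{n} C_i[x, y_i]$, because $D[x, (y_1, \ldots, y_{n-1})] = \prod_{i=1}^{n-1} C_i[x, y_i]$.

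Next I would apply Theorem~\ref{thm:MI-comp} to the pair $(D, C_n)$ sharing input $\X$, which gives
\[
I(\pi, D \compn C_n) \le I(\pi, D) + I(\pi, C_n).
\]
Applying the inductive hypothesis to $D = \compn_{i=1}^{n-1} C_i$ yields $I(\pi, D) \le \sum_{i=1}^{n-1} I(\pi, C_i)$, and combining these two inequalities gives the desired bound $I(\pi, \compn_{i \in \I} C_i) \le \sum_{i \in \I} I(\pi, C_i)$.

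The only step that requires a moment of care is ensuring that the conditional-independence assumption underlying Theorem~\ref{thm:MI-comp} is preserved at each inductive stage: we need $D$ and $C_n$ to be conditionally independent given $X$. This is immediate from the product form of $D$, since for any $x$, the outputs $(Y_1, \ldots, Y_{n-1})$ of $D$ and $Y_n$ of $C_n$ factor as $p(y_1, \ldots, y_{n-1}, y_n \mid x) = \bigl(\prod_{i=1}^{n-1} C_i[x, y_i]\bigr) \cdot C_n[x, y_n]$, so no additional argument is needed. Thus the induction closes cleanly, and the theorem follows.
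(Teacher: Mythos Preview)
Your proposal is correct and follows exactly the approach the paper indicates: the paper's own proof is the single line ``By Theorem~\ref{thm:MI-comp} and by induction on $\I$,'' and you have simply spelled out that induction in full, including the associativity of $\compn$ and the preservation of conditional independence at each step.
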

%
%\begin{proof}
\proof
By Theorem~\ref{thm:MI-comp} and by induction on $\I$.
\qed
%\end{proof}

On the other hand, such a compositionality may not hold when the composition is adaptive, i.e., when the channels depend on each other.
\begin{prop} \label{prop:MI-countereg}
There exist a prior $\pi$ on $\X$ and a discrete channel $(\X, \Y_1 \times \Y_2, C)$ with two outputs such that
$
I(\pi, C) > I(\pi, C|_{\Y_1}) + I(\pi, C|_{\Y_2})
$.
\end{prop}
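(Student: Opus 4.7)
The plan is to reuse the exclusive-OR construction from Example~\ref{prop:non-compositionality}, which already exhibits the required non-compositional behavior for $g$-leakage; the same witness works for Shannon mutual information.

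Concretely, I would take $\X = \Y_1 = \Y_2 = \{0,1\}$, let $\pi$ be the uniform distribution on $\X$, and define a channel $(\X, \Y_1\times\Y_2, C)$ that on input $x$ draws $y_1$ uniformly from $\{0,1\}$ and sets $y_2 = x \xor y_1$. Concretely, $C[x, (y_1, y_2)] = \tfrac12$ if $y_2 = x \xor y_1$ and $0$ otherwise.

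The verification then proceeds in three short steps. First, I compute the decomposition $C|_{\Y_1}$: summing over $y_2$ gives $C|_{\Y_1}[x, y_1] = \tfrac12$ for every $x, y_1$, so $Y_1$ is independent of $X$ and $I(\pi, C|_{\Y_1}) = 0$. Second, by symmetry (since $y_1 = x \xor y_2$ with $y_2$ playing the symmetric role when $y_1$ is summed out), $C|_{\Y_2}[x, y_2] = \tfrac12$ and $I(\pi, C|_{\Y_2}) = 0$. Third, for the joint channel, the pair $(y_1, y_2)$ recovers $x = y_1 \xor y_2$, hence $H(X \mid Y_1 Y_2) = 0$ and $I(\pi, C) = H(X) = 1 > 0 + 0$.

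There is no real obstacle here; the only thing to be mindful of is that the proposition is about the decomposition $C|_{\Y_i}$ of a single channel with pair-valued output (Definition~\ref{def:de-compo-o}), not about an $\compn$-composition, so I should phrase the verification in terms of $C|_{\Y_1}$ and $C|_{\Y_2}$ rather than appealing to two a priori separate channels. The example makes the point that the upper bound $I(\pi, C_1 \compn C_2) \le I(\pi, C_1) + I(\pi, C_2)$ of Theorem~\ref{thm:MI-comp} relies essentially on conditional independence of $Y_1$ and $Y_2$ given $X$, and fails as soon as the two output components are coupled through a shared source of randomness.
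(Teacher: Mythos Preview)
Your proposal is correct and matches the paper's own proof essentially verbatim: the paper constructs exactly the same XOR channel on $\{0,1\}$ with uniform prior, computes $C|_{\Y_1}$ and $C|_{\Y_2}$ as the flat $\tfrac12$-channel, and concludes $I(\pi,C)=1>0+0$. Your additional remark about the role of conditional independence is apt and goes slightly beyond what the paper states explicitly.
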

%
%\begin{proof}
\proof
For example, let $\X = \Y_1 = \Y_2 = \{0, 1\}$ and $\pi$ be the uniform distribution on $\X$.
We consider the channel that, given an input $x \in \X$, outputs a random bit $y_1$ uniformly drawn from $\Y_1$ and the exclusive OR of the two bits $x$ and $y_1$ as the second output, i.e, $y_2 \mathbin{:=} x \xor y_1$.

Formally, the channel matrix $C$ is given by:
$C[0, (0, 0)] = C[0, (1, 1)] = 0.5$, $C[0, (0, 1)] = C[0, (1, 0)] = 0$,
$C[1, (0, 0)] = C[1, (1, 1)] = 0$ and $C[1, (0, 1)] = C[1, (1, 0)] = 0.5$.
Recall the definition of decomposition of channels in Definition~\ref{def:de-compo-o}.
Then, for all $x \in \X$, $y_1 \in \Y_1$ and $y_2 \in \Y_2$, $C|_{\Y_1}[x, y_1] = C|_{\Y_2}[x, y_2] = 0.5$.

The channel $C$ leaks the bit $x$ from $y_1$ and $y_2$ by $x \mathbin{:=} y_1 \xor y_2$, while neither $C|_{\Y_1}$ nor $C|_{\Y_2}$ leaks any information on $x$; i.e., $I(\pi, C) = 1$ and $I(\pi, C|_{\Y_1}) = I(\pi, C|_{\Y_2}) = 0$.
Therefore $I(\pi, C) > I(\pi, C|_{\Y_1}) + I(\pi, C|_{\Y_2})$.
\qed

\begin{rem}\rm\label{rem:shannon-as-g-leak}
As shown in~\cite{AlvimCMMPS14csf}, Shannon entropy is a particular instance of $g$-vulnerability.
However, due to the presence of the logarithm that we consider here, the compositionality of Shannon mutual information is not an instance of that of $g$-leakage.
\end{rem}

%
%%%%%%%%%%%%%%%%%%%%%%%%%%%%%%%%%%%%%
% Section for Examples
%\input{example}
%
\section{Experimental Evaluation}
\label{sec:experiment}

In this section we evaluate our bounds in two use-cases:
first, on the Crowds protocol for anonymous communication, running on a mobile ad-hoc network
(MANET), and second, on randomly generated channels.

\subsection{Crowds Protocol on a MANET}
\label{subsec:crowds}

Crowds~\cite{Reiter:98:TISS} is a protocol for anonymous communication,
in which participants achieve anonymity by forwarding messages through other
users. A group of $n$ users, called the Crowd, participate in the protocol, and
one of them, called the \emph{initiator} decides to send a message to some
arbitrary recipient in the network, called the \emph{server}. The protocol
works as follows: first the initiator selects randomly (with uniform
distribution) a member of the crowd, called the \emph{forwarder}, and forwards
the message to him. A forwarder, upon receiving a message, throws a (biased)
probabilistic coin: with probability $p_f$ (a parameter of the system) he
randomly selects a new forwarder and advances the message to him, and with
probability $1-p_f$ he delivers the message directly to the server. Replies
from the server follow the inverse path to arrive to the initiator and future
requests use the already established route, to avoid repeating the protocol.

The goal of the protocol is to provide sender anonymity w.r.t. an attacker who
does not control the whole network, but controls only some of the nodes
and can only see traffic passing through them. Still, if the attacker controls
some members of the crowd, strong anonymity is not satisfied. A forwarding
request from user $i$ is evidence that $i$ is the initiator of the message.
However, some anonymity is still provided since user $i$ can always claim that
he was in fact only forwarding a message from user $j$. 
If the number of corrupted users is relatively small, 
it is more likely that $i$ is
innocent (i.e. the initiator is user $j \neq i$) than guilty, offering a
notion of anonymity called \emph{probable innocence} \cite{Reiter:98:TISS}.

In this section we consider an instance of Crowds running on a mobile ad-hoc
network, in which users are mobile and can communicate only to neighbouring
nodes hence the network topology changes frequently. Due to the network
changes, routes become invalid and the initiator needs to rerun the
protocol to establish a new route, which causes further information leakage.
Our goal is to measure how quickly the leakage increases as a function of the
number of re-executions. Concerning the attacker model, we assume that the
attacker (i) knows the network topology (this could be achieved using known
protocols for MANETs, e.g.~\cite{nassu2007topology}), (ii) controls some
members of the crowd and (iii) controls the server. For a given network
topology, the system is modeled by a channel with inputs $\mathrm{init}_i$, meaning
that user $i$ is the initiator. The observable events are $\mathrm{forw}_{j,k}$, meaning
that user $j$ forwarded the message to the corrupted node $k$ (possibly
the destination server). A matrix element $C[\mathrm{init}_i,\mathrm{forw}_{j,k}]$ gives the probability
that $\mathrm{forw}_{j,k}$ happens when $i$ is the initiator.
Finally, for channels $C_1,C_2$ modeling the protocol under different network topologies, the repetition of the protocol is modeled as $C_1 \compn C_2$.

As an anonymity metric, we use $g$-leakage with the $2$-tries gain function $g_{\W_2}$, modeling an attacker who can guess the initiator twice.
Recall that $\W_2$ is the set of all subsets of $\X$ with $\#{\X} = 2$,
and $g_{\W_2}(w, x)$ is $1$ if $x \in w$ and $0$ otherwise.

We evaluate our compositionality results on a Crowds instance with $25$ users,
of which one is corrupted, and with $p_f=0.7$. The network topology is
generated by randomly adding a connection between any two users with
probability $0.4$. For a given topology, the matrix is computed by the PRISM
model checker~\cite{Kwiatkowska:04:QEST}, using a model similar to one of
\cite{Shmatikov:02:CSFW}. Although executions in Crowds can be infinite, a
finite state model can be employed, keeping track of only the current forwarder
instead of the full route. Then each element of the channel matrix can be
computed by PRISM as the probability of reaching the corresponding state.

%
%\begin{figure}[t]
\begin{wrapfigure}[17]{r}{0.56\linewidth}
% \begin{minipage}{0.5\hsize}
  \begin{center}
  \includegraphics[scale=0.33]{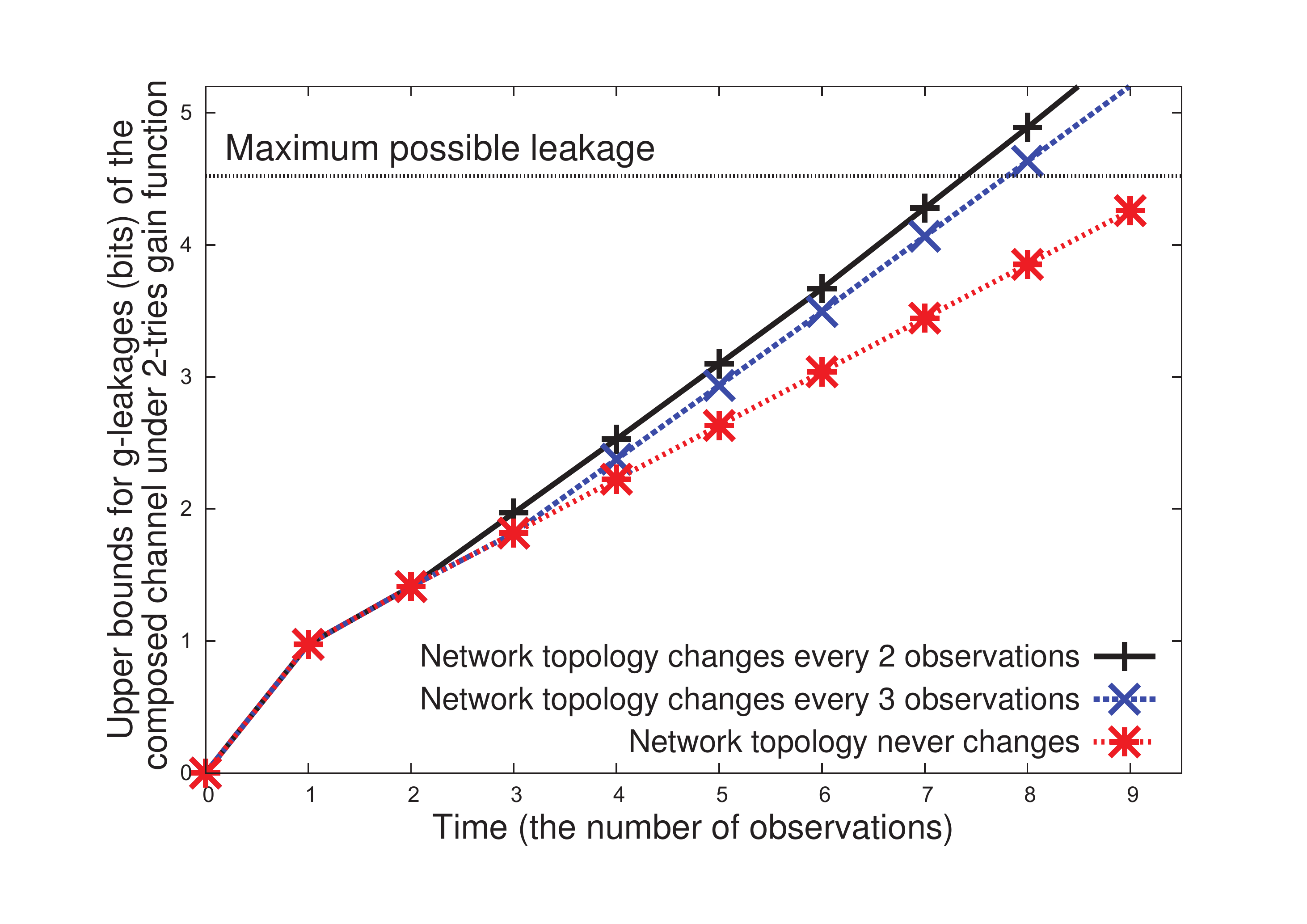}   
  \caption{Numbers of observations and bounds.} \label{fig:crowds1}
  \end{center}
% \end{minipage}
\end{wrapfigure}
%\end{figure}
%
The $g$-leakage of a single execution can be directly computed from the
channel; however, 
for multiple executions, the channel quickly
becomes too big to be of practical use (already at 5 repetitions). On the other hand,
$g$-leakage can be bounded using the results in
Section~\ref{sec:compo-MEL}.
The obtained bounds for up to 9 protocol repetitions are shown in Figure~\ref{fig:crowds1}.
Three variations are given in which the topology changes every 2 executions, every 3 executions or always stays the same.
All bounds are computed using a uniform prior and some randomly generated channels.
The experiments show that the compositionality technique allows us to obtain meaningful
bounds when the system is too big to compute exact values.

Note that the assumption of uniformly chosen forwarders is standard for the Crowds protocol, however it would be interesting to study how our results would change if we considered non-uniform distributions.
For instance, we could have a non-uniform distribution if the possible forwarders were equipped with a notion of trust, like in~\cite{SassoneHY10}.
We leave this for future work.

%
%%%%%%%%%%%%%%%%%%%%%%%%%%%%%%%%%%%%%
% Section for Evaluation of our approach
%\input{performance}
%
%\section{Discussion}
%
\subsection{Evaluation on Randomly Generated Channels}
\label{subsec:eval-rand}

In this section we evaluate our bounds on min-entropy leakage
using randomly generated channels. In particular, we evaluate the improvement 
on the bounds due to the input approximation technique, and the efficiency of our approach, 
which we have implemented as
a library of \leakiEst{} version 1.3~\cite{Kawamoto2014tool}
that is an extension of \leakiEst~\cite{chothia2013tool} and is available online.

We first compare the exact leakage values with their upper bounds calculated using the input approximation technique in the case of shared input.
Figure~\ref{fig:known-eps0.1} shows the average upper bounds obtained from Theorem~\ref{thm:approx-g-leak}, that can be applied  when we know the channel matrix. Figure~\ref{fig:unknown-epsOpt} shows those obtained from Theorem~\ref{thm:comp-approx-MEL-shared-input} 
that we can apply when we \emph{do not  know} it.
For both experiments we use randomly generated $10 \times 10$ channel matrices $C$ and a prior $\pi$ that contains some input with very small probabilities. We set   
$\epsilon = 0.1$ in the first case and $\epsilon = V(\pi, C) / 3$ in the second one. 
We calculated the min-entropy leakage $\Iinf(\pi, C \compn C \compn C \compn C \compn C \compn C)$ (composition of six $C$'s), and its lower and upper bounds, 
using Corollary~\ref{cor:vul}, Theorem~\ref{thm:approx-g-leak} with the analyst's knowledge about $C$ and Theorem~\ref{cor:comp-approx-MEL-shared-input-n-channels} without the analyst's knowledge about $C$.

These cases give  similar upper bounds as shown in Figures~\ref{fig:known-eps0.1} and~\ref{fig:unknown-epsOpt}.
The x-axis represents \emph{noise levels} of randomly generated matrices, which we define as the maximum values (over rows of $C$) of the summations of the differences of probabilities from the uniform distributions.
For instance, when the noise level is $0.10$, the average upper bound is $1.699$ in the first case (Figure~\ref{fig:known-eps0.1}) while it is $1.701$ in the second (Figure~\ref{fig:unknown-epsOpt}).

\begin{figure}[t]
 \begin{minipage}{0.45\hsize}
  \begin{center}
  \includegraphics[scale=0.28]{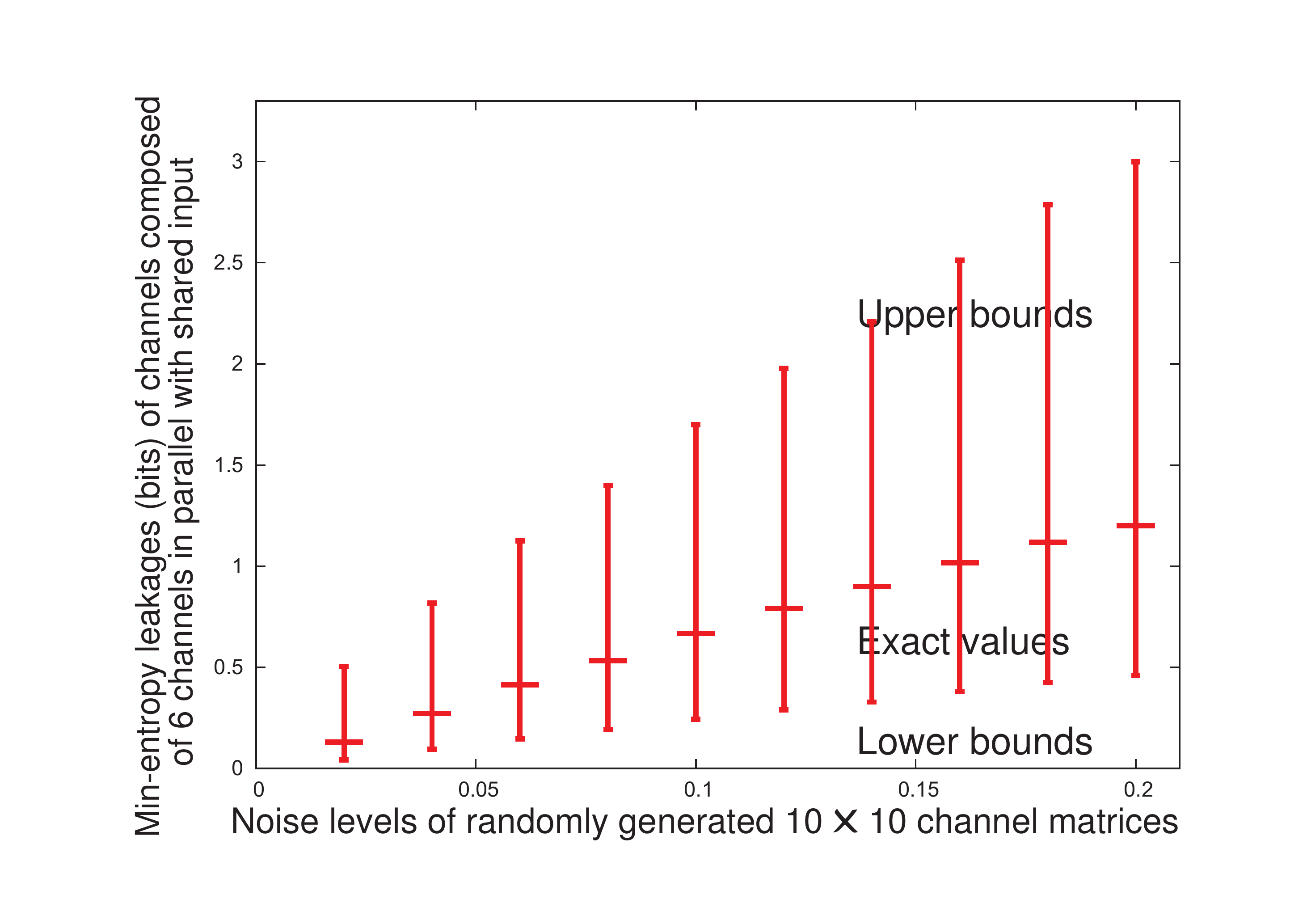}
  \caption{Min-entropy leakages and their bounds 
  for \emph{known} channels
  }
  \label{fig:known-eps0.1}
  %\vspace{-5ex}
  \end{center}
 \end{minipage}
 \begin{minipage}{0.05\hsize}~
 \end{minipage}
 \begin{minipage}{0.45\hsize}
  \begin{center}
  \includegraphics[scale=0.28]{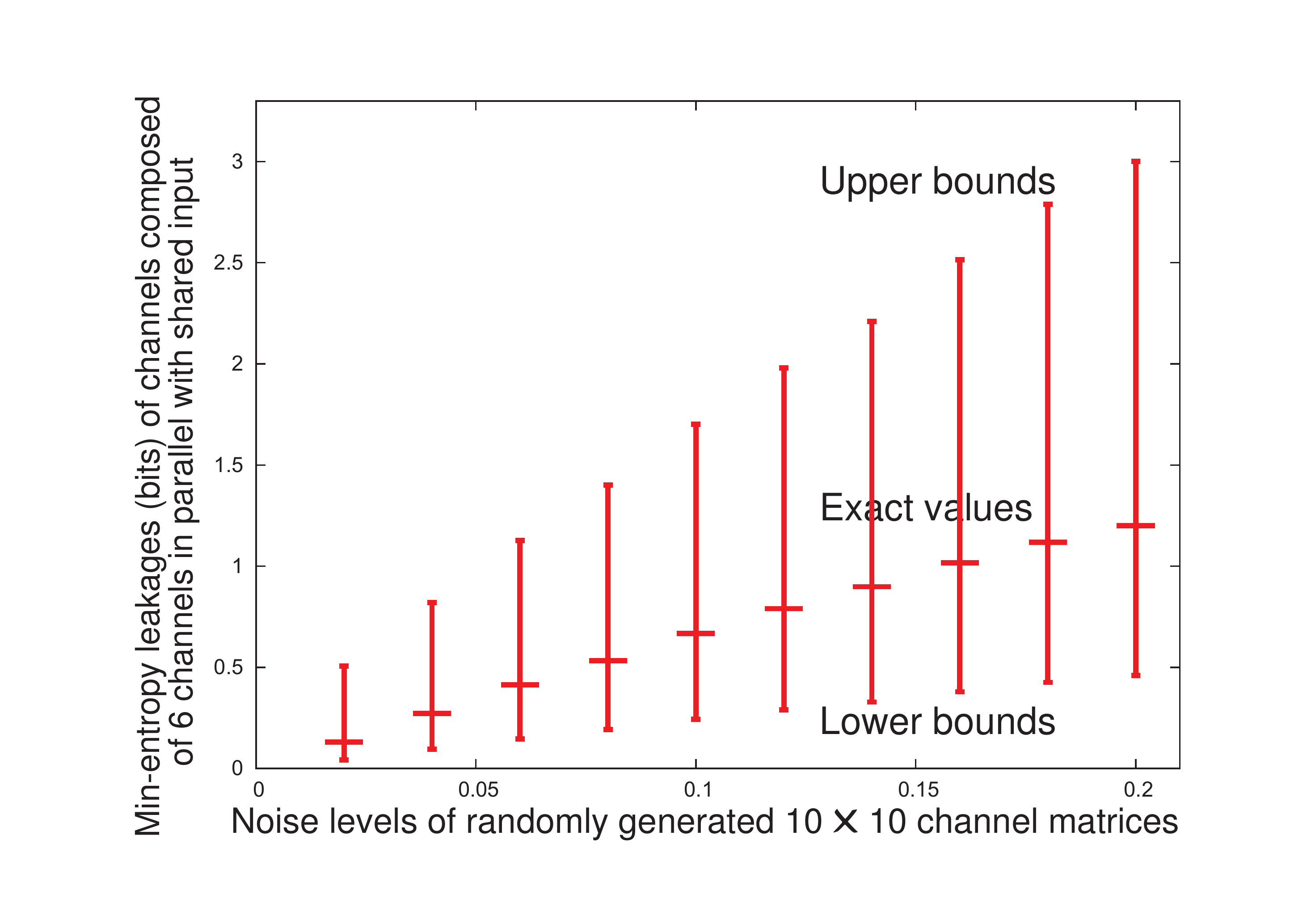}
  \caption{Min-entropy leakages and their bounds 
  for  \emph{unknown} channels.
  }
  \label{fig:unknown-epsOpt}
 % \vspace{-5ex}
  \end{center}
 \end{minipage}
\end{figure}

These upper bounds depend on how we choose the parameter $\epsilon$ 
for the input approximation technique.
In particular upper bounds strongly depend on $\epsilon$ in the  case of unknown channels.
In Figure~\ref{fig:unknown-epsOpt} we chose $\epsilon = V(\pi, C) / 3$ which gives a relatively good upper bound.
On the other hand, if we choose an $\epsilon$ too large we may obtain useless bounds. 
Indeed, if we set for instance $\epsilon = 0.2$, then we obtain upper bounds 
above the maximum possible leakage, which is the (prior) min-entropy, and is always $\log 10 \approx 3.322$ (as shown in Figure~\ref{fig:unknown-eps0.2}) since the input is shared.

\begin{figure}[t]
 \begin{minipage}{0.45\hsize}
  \begin{center}
  \includegraphics[scale=0.28]{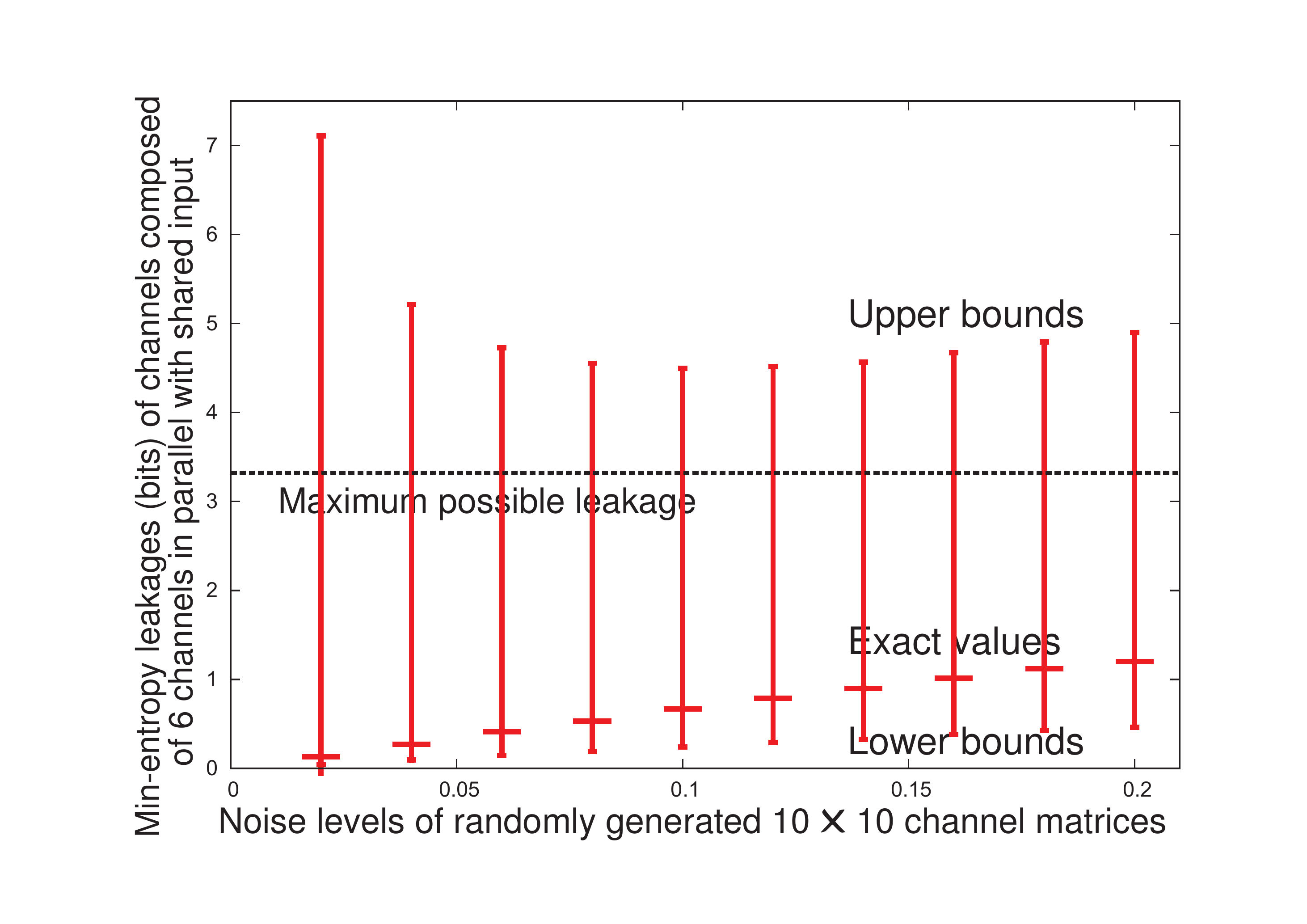}
  \caption{Min-entropy leakages and their bounds 
  when the analyst does \emph{not} know the channels and chooses $\epsilon$ poorly}
  \label{fig:unknown-eps0.2}
  \vspace{-1.5ex}
  \end{center}
 \end{minipage}
 \begin{minipage}{0.05\hsize}~
 \end{minipage}
 \begin{minipage}{0.45\hsize}
  \begin{center}
  \includegraphics[scale=0.28]{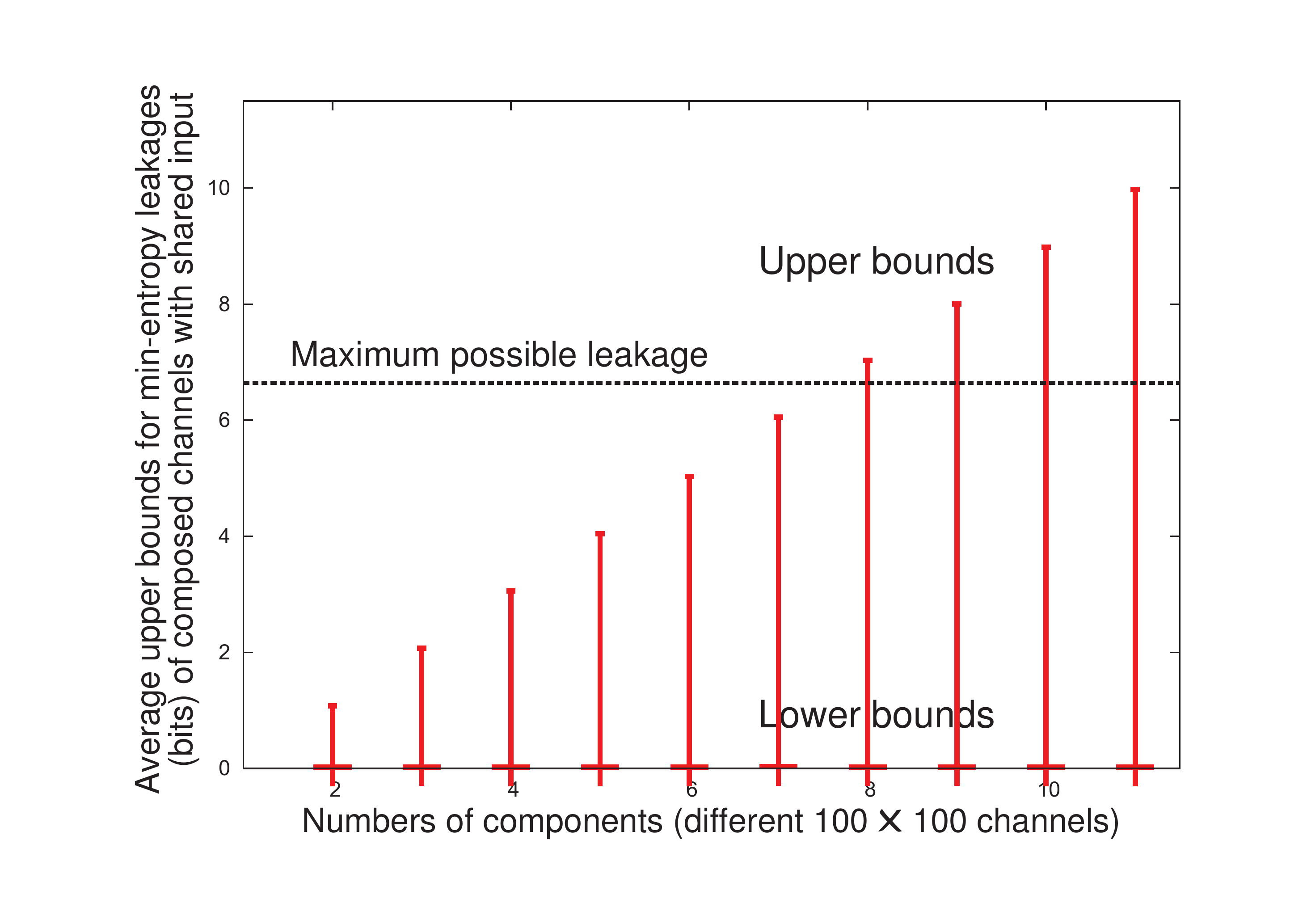}
  \caption{Upper bounds on min-entropy leakages as a function of the numbers of components.}
  \label{fig:100-100-known-eps0.005}
 % \vspace{-1ex}
  \end{center}
 \end{minipage}
\end{figure}

Figure~\ref{fig:100-100-known-eps0.005} shows average upper bounds on min-entropy leakages of randomly generated $100 \times 100$ channels, with randomly generated priors, noise level $0.1$, and $\epsilon = 0.005$.
As we can see from the figure, the gap between the lower and upper bounds increases with the number of components.

\begin{figure}[t]
 \begin{minipage}{0.45\hsize}
  \begin{center}
  \includegraphics[scale=0.28]{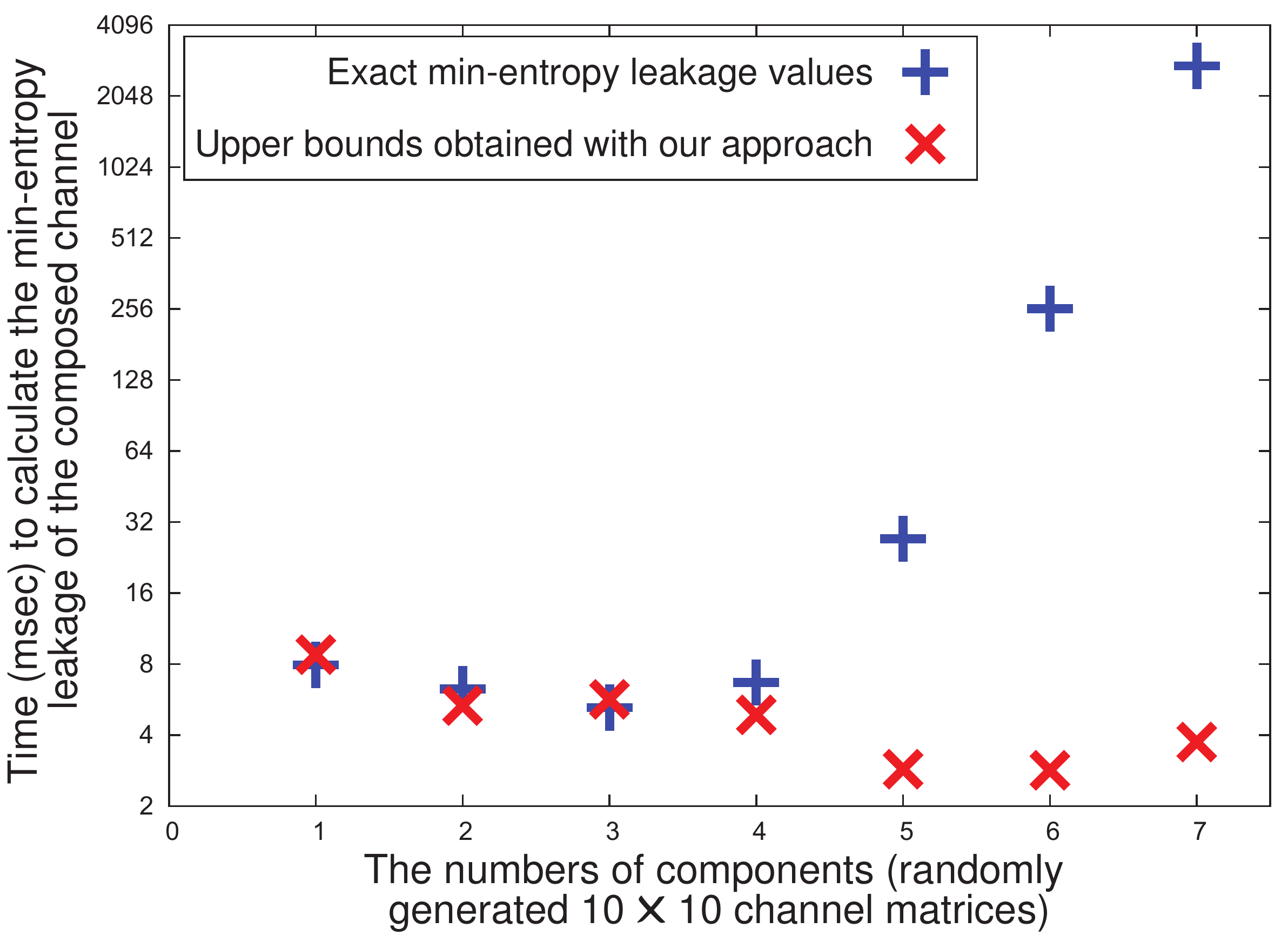}
  \caption{Average time to calculate min-entropy leakages and their upper bounds}
  \label{fig:time-comparison}
  \vspace{-2ex}
  \end{center}
 \end{minipage}
 \begin{minipage}{0.05\hsize}~
 \end{minipage}
 \begin{minipage}{0.45\hsize}
  \begin{center}
  \includegraphics[scale=0.27]{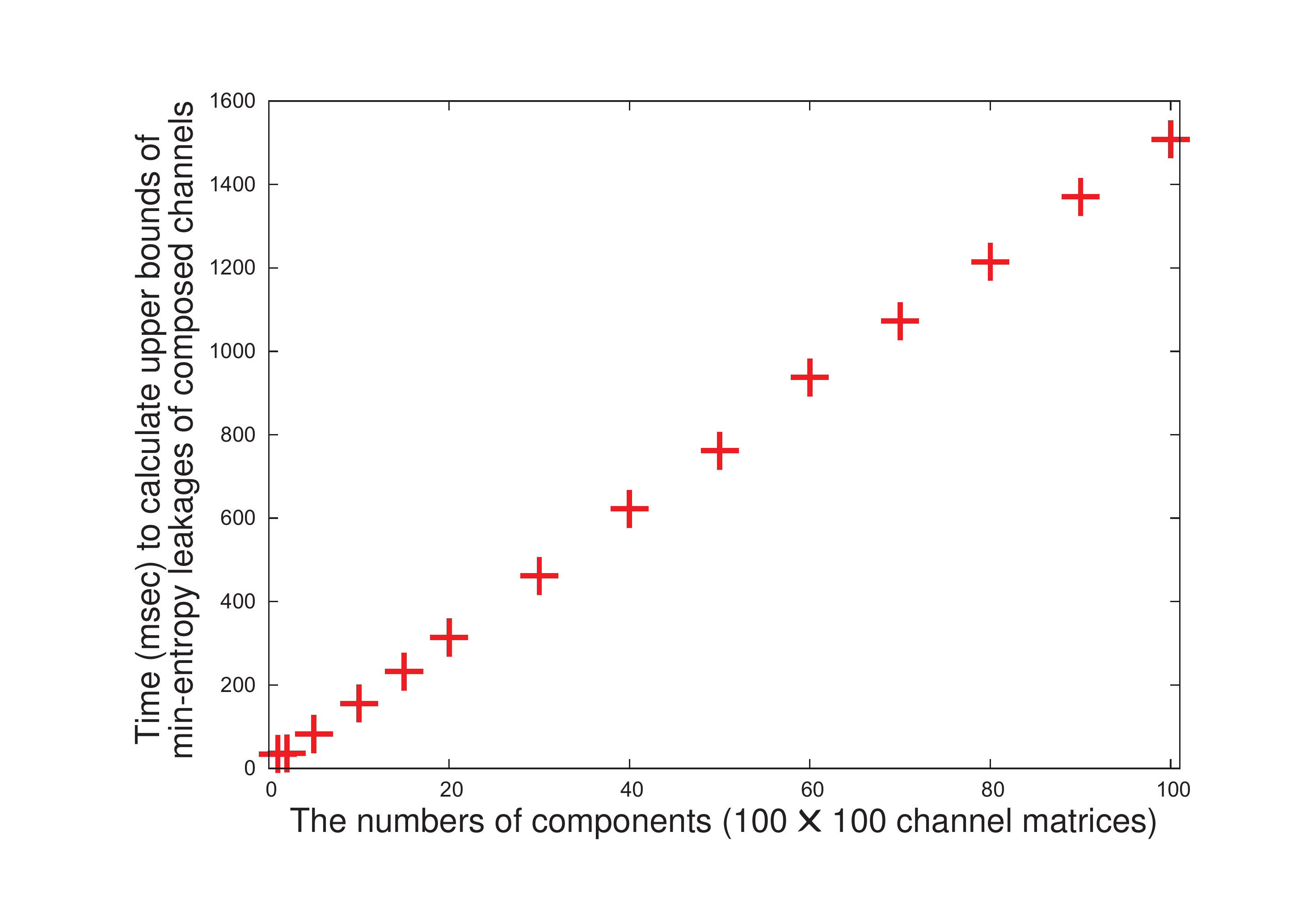}
  \caption{Average time to calculate upper bounds as a function of the number  of components.}
  \label{fig:time-large}
  \vspace{-1.5ex}
  \end{center}
 \end{minipage}
\end{figure}

Finally we evaluate the efficiency of  our method. We consider here the min-entropy leakage.
Figure~\ref{fig:time-comparison}  shows the execution time on a laptop (1.8 GHz Intel Core i5) for \leakiEst{} to compute the exact
min-entropy leakages of the channels composed of randomly generated
$10 \times 10$ component channels, in comparison with the time to compute
their upper bounds. To compute the exact leakages, we used \leakiEst{} with an option
that calculates the leakages from exact matrices.
As we can see, the execution time for the exact values increases rapidly. 
In fact, the size of composed channel increases exponentially with the number 
of components, so the complexity of this computation is at least exponential. 

For a large number of components, the time  to calculate upper bounds increases linearly as shown in Figure~\ref{fig:time-large}. 
As for the computation of the exact values with \leakiEst{}, we expected an exponential blow-up, but we could not check it since we run out of memory because of the size of the matrices.

%
%%%%%%%%%%%%%%%%%%%%%%%%%%%%%%%%%%%%%
% Section for Related Work
%\input{related}
\section{Related Work}
\label{sec:related}

Compositionality of \emph{qualitative} information flow properties has been widely studied e.g. in~\cite{Mantel02:SP,McLean94:SP,ZakinthinosL97:SP}.
Although security properties do not compose in general~\cite{McCullough88}, most type systems for secure information flow are compositional, which facilitates qualitative analyses of programs~\cite{SabelfeldM03}.

On the other hand, compositionality of \emph{quantitative} information flow is a more recent topic and not yet widely explored.
In~\cite{Boreale09:iandc}, Boreale shows the compositionality of the leakage and that of the leakage rate for the  Shannon entropy in a process calculus.
In~\cite{barthe2011information}, Barthe and K\"opf present the compositionality of the min-capacity for the parallel composition with distinct inputs.
They also consider the min-capacity of an adaptive composition $C_1+C_2$, which they call sequential composition, where the second channel $C_2$ receives both input and output of $C_1$. 
In~\cite{espinoza2013min}, Espinoza and Smith show the compositionality of min-entropy leakage and min-capacity for the parallel composition with shared input as well as for the cascade composition.
We are not aware of any prior compositionality result for $g$-leakage.

Recently more fine-grained composition has been considered. Kawamoto and Given-Wilson~\cite{KawamotoGW15qapl} show that our compositionality result gives an upper bound on the information leakage in scheduler-dependent systems where a scheduler interleaves the outputs of different components.

%
%%%%%%%%%%%%%%%%%%%%%%%%%%%%%%%%%%%%%
% Section for Conclusion
%\input{conclusion}
\section{Conclusion and Future Work}
\label{sec:conclude}

We have investigated  compositional methods to derive bounds on $g$-leakage.
To improve the precision of the bounds, we have proposed a technique based on the idea of approximating priors by 
removing small probabilities up to a parameter $\epsilon$.
From our experimental results we have found that the dependency of the precision on $\epsilon$ is not straightforward.
We leave for future work the problem of determining optimal values for $\epsilon$.
We also want to explore a possible relation between our technique and the notion of smooth entropies from the information theory literature~\cite{Cachin97}.
This could allow us to develop a more principled approach to the input approximation technique.

\section*{Acknowledgment}
\noindent This work has been partially supported by the project ANR-12-IS02-001 PACE, 
by the INRIA Equipe Associ\'{e}e PRINCESS, by the INRIA Large Scale Initiative CAPPRIS, and by  EU grant agreement no. 295261 (MEALS).
The work of Yusuke Kawamoto has been partially supported by a postdoc grant funded by the IDEX Digital Society project.
His work has also been partially supported by JSPS KAKENHI Grant Number JP17K12667 and by JSPS and Inria under the Japan-France AYAME Program.

%% in general the use of bibtex is encouraged
\bibliographystyle{alpha} 
\bibliography{bib,short}

\appendix
\section{Omitted Proofs from Section~\ref{sec:compo-g-leak}}
\label{subsec:proofs}
In this section we present omitted proofs.
For brevity let $\X = \X_1 \times \X_2$, $\Y = \Y_1 \times \Y_2$ and $\W = \W_1 \times \W_2$.

%%%%%%%%%%%%%%%%%%%%%%%%%%%%%%
% We used the following tags to remove proofs in the main sections:
%
%\ifConferenceStyleon
%See the proof in Appendix~\ref{subsec:proofs}.
%\else
%%%%%%%%%%%%%%%%%%%%%%%%%%%%%%

%
%\subsection{Omitted Proofs from Section~\ref{sec:compo-g-leak}}
%
The support $\S_{w_1, w_2}$, defined in Definition~\ref{def:full-support-set}, satisfies the following:
\begin{lem} \label{lem:jointly-supported}
Let $\pi$ be  jointly supported, and $g_1:\W_1  \times \X_1\rightarrow [0 ,1], g_2:\W_2  \times \X_2\rightarrow [0 ,1]$.
Then for all $(x_1, x_2) \in \X \setminus \S_{w_1, w_2}$,\, $\pi_1[x_1] \cdot \allowbreak g_1(w_1, x_1) \cdot \pi_2[x_2] \cdot g_2(w_2, x_2) = 0$.
\end{lem}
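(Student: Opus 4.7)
The plan is a straightforward case analysis directly from the definitions. Fix $(x_1, x_2) \in \X \setminus \S_{w_1, w_2}$. By Definition~\ref{def:full-support-set}, this means $\pi[x_1, x_2] \cdot g((w_1, w_2), (x_1, x_2)) = 0$, so at least one of the two factors is zero, giving us two cases to dispatch.

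In the first case, suppose $\pi[x_1, x_2] = 0$. I would invoke the contrapositive of the jointly supported assumption (Definition~\ref{def:jointly-supported}): since $\pi_1[x_1] \cdot \pi_2[x_2] \neq 0$ would force $\pi[x_1, x_2] \neq 0$, we must have $\pi_1[x_1] \cdot \pi_2[x_2] = 0$. Hence the desired product $\pi_1[x_1] \cdot g_1(w_1, x_1) \cdot \pi_2[x_2] \cdot g_2(w_2, x_2)$ vanishes.

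In the second case, suppose $g((w_1, w_2), (x_1, x_2)) = 0$. Here I would appeal to the standing assumption on joint gain functions made in Section~\ref{subsec:joint-gain}, which asserts that $g((w_1, w_2), (x_1, x_2)) = 0$ iff $g_1(w_1, x_1) \cdot g_2(w_2, x_2) = 0$. Thus $g_1(w_1, x_1) \cdot g_2(w_2, x_2) = 0$ and the product vanishes again.

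There is no real obstacle: the lemma is essentially a structural unpacking of the definitions of $\S_{w_1,w_2}$, of joint gain functions, and of jointly supported distributions, and the only thing to be careful about is to cite the correct side of the "joint gain function" biconditional and the contrapositive form of joint support.
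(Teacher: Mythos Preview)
Your proposal is correct and follows essentially the same approach as the paper: unpack the definition of $\S_{w_1,w_2}$, then use the jointly supported hypothesis and the joint-gain-function biconditional to force the product to vanish. Your explicit case split on which factor of $\pi[x_1,x_2]\cdot g((w_1,w_2),(x_1,x_2))$ is zero is in fact cleaner than the paper's compressed phrasing, which states both conclusions without marking the disjunction.
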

%
%\begin{proof}
\proof
Let $(x_1, x_2) \in \X \setminus \S_{w_1, w_2}$.
By definition, we have $\pi[x_1, x_2] \cdot g((w_1, w_2), (x_1, x_2)) = 0$.
Since $\pi$ is jointly supported, $\pi_1[x_1] \cdot \pi_2[x_2] = 0$.
By the definition of joint gain functions,\, $g_1(w_1, x_1) g_2(w_2, x_2) = 0$.
Hence $\pi_1[x_1] g_1(w_1, x_1) \cdot \pi_2[x_2]  g_2(w_2, x_2) = 0$.
\qed
%\end{proof}

We present the proofs of Propositions~\ref{lem:disjoint-dep-g-prior},~\ref{lem:disjoint-dep-g-prior-upper} and Lemma~\ref{lem:disjoint-dep-g-vul} as follows.

%\subsection{Proof of Proposition~\ref{lem:disjoint-dep-g-prior}}
\begin{proof}[Proof of Proposition~\ref{lem:disjoint-dep-g-prior}]
% \proof[Proof of Proposition~\ref{lem:disjoint-dep-g-prior}]
\[
\begin{array}{rl}
&
V_g(\pi) \cdot \Mminpi
\\[1.5ex] =\!&
\displaystyle
\hspace{-0.5ex}
\max_{(w_1, w_2) \in \W}
\hspace{-2.5ex}
\sum_{\hspace{2ex}(x_1, x_2) \in \X}
\hspace{-2.5ex}
\pi[x_1, x_2] \cdot g((w_1, w_2), (x_1, x_2))
\\[1.5ex] \!&
\displaystyle
\cdot
\min_{(w_1, w_2) \in \W}~ \min_{(x_1, x_2) \in \S_{w_1, w_2}} \frac{ \pi_1[x_1] g_1(w_1, x_1) \cdot \pi_2[x_2]  g_2(w_2, x_2) }{ \pi[x_1, x_2] \cdot g((w_1, w_2), (x_1, x_2)) }
\\[3.5ex] \le\!&
\displaystyle
\hspace{-0.5ex}
\max_{(w_1, w_2) \in \W}
\biggl(\,
\sum_{(x_1, x_2) \in \X}
\hspace{-1.5ex}
\pi[x_1, x_2] \cdot g((w_1, w_2), (x_1, x_2))
\\[0ex] \!&
\displaystyle
~\hspace{11ex}\cdot\hspace{-1.5ex}
~ \min_{(x_1, x_2) \in \S_{w_1, w_2}} \frac{ \pi_1[x_1] g_1(w_1, x_1) \cdot \pi_2[x_2]  g_2(w_2, x_2) }{ \pi[x_1, x_2] \cdot g((w_1, w_2), (x_1, x_2)) }
\biggr)
\\[3.5ex] \le\!&
\displaystyle
\hspace{-1.5ex}
\max_{(w_1, w_2) \in \W}
\hspace{-4.5ex}
\sum_{\hspace{4ex}(x_1, x_2) \in \S_{w_1, w_2}}
\hspace{-5.5ex}
\pi[x_1, x_2] g((w_1, w_2), (x_1, x_2)) \cdot
\frac{ \pi_1[x_1] g_1(w_1, x_1) \pi_2[x_2]  g_2(w_2, x_2) }{ \pi[x_1, x_2] g((w_1, w_2), (x_1, x_2)) }
\\[4.5ex] &
\hfill\mbox{($\because$ For all $(x_1, x_2) \in \X \setminus \S_{w_1, w_2}$,\, $\pi[x_1, x_2] g((w_1, w_2), (x_1, x_2)) = 0$.)}
\\[3ex] \le\!&
\displaystyle
\hspace{-1.5ex}
\max_{(w_1, w_2) \in \W}
\hspace{-0.0ex}
\sum_{(x_1, x_2) \in \X}
\hspace{-1.5ex}
\pi_1[x_1] g_1(w_1, x_1) \cdot \pi_2[x_2] g_2(w_2, x_2))
\\[4.5ex] =\!&
\displaystyle
\max_{w_1 \in \W_1}
\sum_{x_1 \in \X_1} \pi_1[x_1] \cdot g_1(w_1, x_1)
\cdot
\max_{w_2 \in \W_2}
\sum_{x_2 \in \X_2}  \pi_2[x_2] \cdot g_2(w_2, x_2)
\\[4.5ex] =\!&
V_{g_1}(\pi_1) \cdot V_{g_2}(\pi_2)
\texttt{.}
\end{array}
\]
Therefore 
\begin{align*}
\Hg(\pi)
= &
- \log V_g(\pi)
\\ \ge &
- \log V_{g_1}(\pi_1) - \log  V_{g_2}(\pi_2) + \log \Mminpi
\\ = &
H_{g_1}(\pi_1) + H_{g_2}(\pi_2) + \log \Mminpi
\texttt{.} \tag*{\qEd}
\end{align*}
% \qed
\def\popQED{}
\end{proof}

%\subsection{Proof of Proposition~\ref{lem:disjoint-dep-g-prior-upper}}
%
%\ifConferenceStyleon
%See Appendix~\ref{subsec:proofs} for the proof.
%\else
%%%%%%%%%%%%%%%%%%%%%%%%%%%%%%
%\begin{proofof}{}
\begin{proof}[Proof of Proposition~\ref{lem:disjoint-dep-g-prior-upper}]
% \proof[Proof of Proposition~\ref{lem:disjoint-dep-g-prior-upper}]
\[
\begin{array}{rl}
&
V_g(\pi) \cdot \Mmaxpi
\\[1.5ex] =\!&
\displaystyle
\hspace{-0.5ex}
\max_{(w_1, w_2) \in \W}
\hspace{-1.0ex}
\sum_{(x_1, x_2) \in \X}
\hspace{-1.5ex}
\pi[x_1, x_2] \cdot g((w_1, w_2), (x_1, x_2))
\\[2.5ex] \!&
\displaystyle
\cdot
\max_{(w_1, w_2) \in \W}~ \max_{(x_1, x_2) \in \S_{w_1, w_2}}\hspace{-1.0ex} \frac{ \pi_1[x_1] g_1(w_1, x_1) \cdot \pi_2[x_2]  g_2(w_2, x_2) }{ \pi[x_1, x_2] \cdot g((w_1, w_2), (x_1, x_2)) }
\\[3.5ex] \ge\!&
\displaystyle
\hspace{-0.5ex}
\max_{(w_1, w_2) \in \W}
\Biggl(\,
\sum_{(x_1, x_2) \in \X}
\hspace{-1.5ex}
\pi[x_1, x_2] \cdot g((w_1, w_2), (x_1, x_2))
\\[-0.5ex] \!&
\displaystyle
~\hspace{11ex}\cdot\hspace{-1.5ex}
~ \max_{(x_1, x_2) \in \S_{w_1, w_2}}\hspace{-1.0ex} \frac{ \pi_1[x_1] g_1(w_1, x_1) \cdot \pi_2[x_2]  g_2(w_2, x_2) }{ \pi[x_1, x_2] \cdot g((w_1, w_2), (x_1, x_2)) }
\Biggr)
\\[3.5ex] \ge\!&
\displaystyle
\hspace{-0.5ex}
\max_{(w_1, w_2) \in \W}
\hspace{-2.5ex}
\sum_{\hspace{3ex}(x_1, x_2) \in \S_{w_1, w_2}}
\hspace{-4ex}
\pi_1[x_1]g_1(w_1, x_1) \cdot \pi_2[x_2] g_2(w_2, x_2)
\\[4ex] =\!&
\displaystyle
\hspace{-0.5ex}
\max_{(w_1, w_2) \in \W}
\hspace{-2.5ex}
\sum_{\hspace{2.5ex}(x_1, x_2) \in \X}
\hspace{-4ex}
\pi_1[x_1] g_1(w_1, x_1) \cdot \pi_2[x_2] g_2(w_2, x_2)
\hfill\mbox{($\because$ Lemma~\ref{lem:jointly-supported})}
\\[3.5ex] =\!&
\displaystyle
\biggl(\,
\max_{w_1 \in \W_1}
\sum_{x_1 \in \X_1} \pi_1[x_1] g_1(w_1, x_1)
\biggr)
\cdot
\biggl(\,
\max_{w_2 \in \W_2}
\sum_{x_2 \in \X_2}  \pi_2[x_2] g_2(w_2, x_2)
\biggr)
\\[3.5ex] =\!&
V_{g_1}(\pi_1) \cdot V_{g_2}(\pi_2)
\texttt{.}
\end{array}
\]
Therefore 
\begin{align*}
\Hg(\pi)
= &
- \log V_g(\pi)
\\ \le &
- \log V_{g_1}(\pi_1) - \log  V_{g_2}(\pi_2) + \log \Mmaxpi
\\ = &
H_{g_1}(\pi_1) + H_{g_2}(\pi_2) + \log \Mmaxpi
\texttt{.} \tag*{\qEd}
\end{align*}
% \qed
\def\popQED{}
\end{proof}
%\end{proofof}
%%%%%%%%%%%%%%%%%%%%%%%%%%%%%%
%\fi
%~\vspace{7.0ex}~~

%
%
% 3 (Lemma 1 (i), 2014/06/16)
%%%%%%%%%%%%%%%%%%%%%%%%%%%%%%
%\subsection{Proof of Lemma~\ref{lem:disjoint-dep-g-vul}}

%\begin{proofof}{Lemma~\ref{lem:disjoint-dep-g-vul}}
\begin{proof}[Proof of Lemma~\ref{lem:disjoint-dep-g-vul}]
% \proof[Proof of Lemma~\ref{lem:disjoint-dep-g-vul}]
%
\[
\begin{array}{rl}
&
V_g(\pi, C_1\!\compd\!C_2) \cdot \Mminpi
\\[1.5ex] =\!&
\displaystyle
\hspace{-2ex}
\sum_{(y_1, y_2) \in \Y} 
\hspace{-0.5ex}
\max_{(w_1, w_2) \in \W}
\hspace{-3.5ex}
\sum_{~\hspace{3.0ex}(x_1, x_2) \in \X}
\hspace{-3.5ex}
\pi[x_1, x_2] \cdot (C_1\!\compd\!C_2)[(x_1, x_2), (y_1, y_2)] \cdot g((w_1, w_2), (x_1, x_2))
\\[3.5ex] \!&
\displaystyle
~\cdot\hspace{-0.5ex}
\min_{(w_1, w_2) \in \W}~ \min_{(x_1, x_2) \in \S_{w_1, w_2}} \frac{ \pi_1[x_1] g_1(w_1, x_1) \cdot \pi_2[x_2]  g_2(w_2, x_2) }{ \pi[x_1, x_2] \cdot g((w_1, w_2), (x_1, x_2)) }
\\[3.5ex] \le\!&
\displaystyle
\hspace{-2ex}
\sum_{(y_1, y_2) \in \Y} 
\hspace{-0.5ex}
\max_{(w_1, w_2) \in \W}
\biggl(\hspace{-3.5ex}
\sum_{~\hspace{3.0ex}(x_1, x_2) \in \X}
\hspace{-3.5ex}
\pi[x_1, x_2] \cdot (C_1\!\compd\!C_2)[(x_1, x_2), (y_1, y_2)] \cdot g((w_1, w_2), (x_1, x_2))
\\[3ex] \!&
\displaystyle
~\hspace{16ex}\cdot\hspace{-1.5ex}
~ \min_{(x_1, x_2) \in \S_{w_1, w_2}} \frac{ \pi_1[x_1] g_1(w_1, x_1) \cdot \pi_2[x_2]  g_2(w_2, x_2) }{ \pi[x_1, x_2] \cdot g((w_1, w_2), (x_1, x_2)) }
\biggr)
\\[3.5ex] \le\!&
\displaystyle
\hspace{-2ex}
\sum_{(y_1, y_2) \in \Y} 
\hspace{-0.5ex}
\max_{(w_1, w_2) \in \W}
\hspace{-6.5ex}
\sum_{~\hspace{5.0ex}(x_1, x_2) \in \S_{w_1, w_2}}
\hspace{-7.5ex}
\pi_1[x_1] g_1(w_1, x_1) \cdot \pi_2[x_2] g_2(w_2, x_2) \cdot (C_1\!\compd\!C_2)[(x_1, x_2), (y_1, y_2)]
\\[3.5ex] &
\hfill\mbox{($\because$ For all $(x_1, x_2) \in \X \setminus \S_{w_1, w_2}$,\, $\pi[x_1, x_2] g((w_1, w_2), (x_1, x_2)) = 0$.)}
\\[1.5ex] \le\!&
\displaystyle
\hspace{-2ex}
\sum_{(y_1, y_2) \in \Y} 
\hspace{-0.5ex}
\max_{(w_1, w_2) \in \W}
\hspace{-3.5ex}
\sum_{~\hspace{3.0ex}(x_1, x_2) \in \X}
\hspace{-3.5ex}
\pi_1[x_1] g_1(w_1, x_1) \cdot \pi_2[x_2] g_2(w_2, x_2) \cdot (C_1\!\compd\!C_2)[(x_1, x_2), (y_1, y_2)]
\\[3.5ex] =\!&
\displaystyle
\hspace{-2ex}
\sum_{(y_1, y_2) \in \Y} 
\max_{(w_1, w_2) \in \W}\hspace{-0.2ex}
\biggl(\,
\sum_{x_1 \in \X_1}\hspace{-1.0ex}
\pi_1[x_1] C_1[x_1, y_1]\, g_1(w_1, x_1)
\cdot\hspace{-0.5ex}
\sum_{x_2 \in \X_2}\hspace{-1.0ex}
\pi_2[x_2] C_2[x_2, y_2]\, g_2(w_2, x_2)%
\!\biggr)
\\[3.5ex] =\!&
\displaystyle
\hspace{-1ex}
\Bigl(\,
\sum_{y_1 \in \Y_1}\hspace{-1.0ex}
\max_{w_1 \in \W_1}\hspace{-1.0ex}
\sum_{x_1 \in \X_1}\hspace{-1.0ex}
\pi_1[x_1] C_1[x_1, y_1]\, g_1(w_1, x_1)%
\!\Bigr)%
\!\cdot\!%
\Bigl(\,
\sum_{y_2 \in \Y_2}\hspace{-1.0ex}
\max_{w_2 \in \W_2}\hspace{-1.0ex}
\sum_{x_2 \in \X_2}\hspace{-1.0ex}
\pi_2[x_2] C_2[x_2, y_2]\, g_2(w_2, x_2)%
\!\Bigr)
\\[3.5ex] =\!&
V_{g_1}(\pi_1, C_1) \cdot V_{g_2}(\pi_2, C_2)
\texttt{.}
\end{array}
\]
Therefore 
\begin{align*}
\Hg(\pi, C_1 \compd C_2)
 = &
- \log V_g(\pi, C_1 \compd C_2)
\\ \ge &
- \log V_{g_1}(\pi_1, C_1) - \log  V_{g_2}(\pi_2, C_2) + \log \Mminpi
\\ = &
H_{g_1}(\pi_1, C_1) + H_{g_2}(\pi_2, C_2) + \log \Mminpi
\texttt{.}
\end{align*}
%\hfil$\Box$
%\end{proofof}
%~\vspace{54.0ex}~~

%\vspace{2ex}
%{\bf Proof of Lemma~\ref{lem:disjoint-dep-g-vul-upper}}
%\begin{proofof}{Lemma~\ref{lem:disjoint-dep-g-vul-upper}}
%
\[
\begin{array}{rl}
&
V_g(\pi, C_1\!\compd\!C_2) \cdot \Mmaxpi
\\[1.5ex] =\!&
\displaystyle
\hspace{-2ex}
\sum_{(y_1, y_2) \in \Y} 
\hspace{-0.5ex}
\max_{(w_1, w_2) \in \W}
\hspace{-3.0ex}
\sum_{~\hspace{3.0ex}(x_1, x_2) \in \X}
\hspace{-3.5ex}
\pi[x_1, x_2] \cdot (C_1\!\compd\!C_2)[(x_1, x_2), (y_1, y_2)] \cdot g((w_1, w_2), (x_1, x_2))
\\[3.5ex] \!&
\displaystyle
~~~~~~~~~~~~~~~~~~~\cdot
\max_{(w_1, w_2) \in \W} \max_{(x_1, x_2) \in \S_{w_1, w_2}} \hspace{-1.5ex}\frac{ \pi_1[x_1] g_1(w_1, x_1) \cdot \pi_2[x_2]  g_2(w_2, x_2) }{ \pi[x_1, x_2] \cdot g((w_1, w_2), (x_1, x_2)) }
\\[3.5ex] \ge\!&
\displaystyle
\hspace{-2.5ex}
\sum_{(y_1, y_2) \in \Y} 
\hspace{-0.5ex}
\max_{(w_1, w_2) \in \W}
\biggl(\hspace{-3.0ex}
\sum_{~\hspace{3.0ex}(x_1, x_2) \in \X}
\hspace{-3.5ex}
\pi[x_1, x_2] \cdot (C_1\!\compd\!C_2)[(x_1, x_2), (y_1, y_2)] \cdot g((w_1, w_2), (x_1, x_2))%
\\[3.5ex] \!&
\displaystyle
~\hspace{15ex}\cdot\hspace{-1.5ex}
~ \max_{(x_1, x_2) \in \S_{w_1, w_2}}
\hspace{-2.0ex}\frac{ \pi_1[x_1] g_1(w_1, x_1) \cdot \pi_2[x_2]  g_2(w_2, x_2) }{ \pi[x_1, x_2] \cdot g((w_1, w_2), (x_1, x_2)) }%
\!\biggr)
\\[3.5ex] \ge\!&
\displaystyle
\hspace{-2ex}
\sum_{(y_1, y_2) \in \Y} 
\hspace{-0.5ex}
\max_{(w_1, w_2) \in \W}
\hspace{-7ex}
\sum_{~\hspace{6ex}(x_1, x_2) \in \S_{w_1, w_2}}
\hspace{-7ex}
\pi_1[x_1] g_1(w_1, x_1) \cdot \pi_2[x_2] g_2(w_2, x_2) \cdot (C_1\!\compd\!C_2)[(x_1, x_2), (y_1, y_2)]
\\[3.5ex] =\!&
\displaystyle
\hspace{-2ex}
\sum_{(y_1, y_2) \in \Y} 
\hspace{-0.5ex}
\max_{(w_1, w_2) \in \W}
\hspace{-4.0ex}
\sum_{~\hspace{4.0ex}(x_1, x_2) \in \X}
\hspace{-4ex}
\pi_1[x_1] g_1(w_1, x_1) \cdot \pi_2[x_2] g_2(w_2, x_2) \cdot (C_1\!\compd\!C_2)[(x_1, x_2), (y_1, y_2)]
\\[-1.5ex] &
\hfill\mbox{($\because$ Lemma~\ref{lem:jointly-supported})}
\\[1ex] =\!&
\displaystyle
\hspace{-1ex}
\Bigl(\,
\sum_{y_1 \in \Y_1}\hspace{-0.5ex}
\max_{w_1 \in \W_1}\hspace{-1.0ex}
\sum_{x_1 \in \X_1}\hspace{-1.0ex}
\pi_1[x_1] C_1[x_1, y_1]\, g_1(w_1, x_1)%
\!\Bigr)
\cdot
\Bigl(\,
\sum_{y_2 \in \Y_2}\hspace{-0.5ex}
\max_{w_2 \in \W_2}\hspace{-1.0ex}
\sum_{x_2 \in \X_2}\hspace{-1.0ex}
\pi_2[x_2] C_2[x_2, y_2]\, g_2(w_2, x_2)%
\!\Bigr)
\\[3.5ex] =\!&
V_{g_1}(\pi_1, C_1) \cdot V_{g_2}(\pi_2, C_2)
\texttt{.}
\end{array}
\]
Therefore 
\begin{align*}
\Hg(\pi, C_1 \compd C_2)
= &
- \log V_g(\pi, C_1 \compd C_2)
\\ \le &
- \log V_{g_1}(\pi_1, C_1) - \log  V_{g_2}(\pi_2, C_2) + \log \Mmaxpi
\\ = &
H_{g_1}(\pi_1, C_1) + H_{g_2}(\pi_2, C_2) + \log \Mmaxpi
\texttt{.} \tag*{\qEd}
\end{align*}
%\hfil$\Box$
%\end{proofof}
% \qed
\def\popQED{}
\end{proof}

\end{document}